\numberwithin{equation}{section}
\newtheorem{Theorem}{Theorem}[section]
\newtheorem{Corollary}[Theorem]{Corollary}
\newtheorem{Lemma}[Theorem]{Lemma}
\newtheorem{Proposition}[Theorem]{Proposition}
{ \theoremstyle{definition}
\newtheorem{Definition}[Theorem]{Definition}

\newtheorem{Example}[Theorem]{Example}
 }
\newcommand{\set}[2]{\big\{#1\mid #2\big\}}
\newcommand{\from}{\colon}
\newcommand{\id}{\mathrm{id}}
\newcommand{\im}{\mathrm{im}}
\newcommand{\pr}{\mathrm{pr}}
\newcommand{\Z}{\mathbb{Z}}
\newcommand{\R}{\mathbb{R}}
\newcommand{\C}{\mathbb{C}}
\newcommand{\Cun}{\C^{\underline{n}}}
\newcommand{\struc}{\C^\times\wr~I_n}
\newcommand{\conn}{(\C^\times)^n}
\newcommand{\End}{\operatorname{End}}
\newcommand{\Aut}{\operatorname{Aut}}
\newcommand{\Hol}{\operatorname{Hol}}
\newcommand{\GL}{\mathrm{GL}}
\newcommand{\Ad}{\operatorname{Ad}}
\newcommand{\g}{\mathfrak{g}}
\newcommand{\Fr}{\operatorname{Fr}}
\newcommand{\PFr}{\operatorname{PFr}}
\newcommand{\Loop}{\operatorname{Loop}}
\newcommand{\dif}{\mathrm{d}}
\newcommand{\pder}[2]{\frac{\partial #1}{\partial #2}}
\providecommand{\dtzero}{\frac{\dif}{\dif t}\bigg|_{t=0}}
\newcommand{\diag}{\operatorname{diag}}
\newcommand{\discrim}{\operatorname{discrim}}
\providecommand{\cov}{\operatorname{cov}}
\providecommand{\Spec}{\operatorname{Spec}}
\DeclareMathOperator{\Eig}{Eig}
\newcommand{\En}{\mathrm{E}}
\newcommand{\EigFr}{\mathrm{EigFr}}
\newcommand{\G}{\mathcal{G}}
\newcommand{\M}{\mathcal{M}}
\newcommand{\inn}[2]{\langle#1|#2\rangle}
\newcommand{\PT}{\mathcal{PT}}
\begin{document}
\allowdisplaybreaks

\newcommand{\arXivNumber}{2107.02497}

\renewcommand{\PaperNumber}{003}

\FirstPageHeading

\ShortArticleName{A Unified View on Geometric Phases and Exceptional Points}

\ArticleName{A Unified View on Geometric Phases and Exceptional\\ Points in Adiabatic Quantum Mechanics}

\Author{Eric J.~PAP~$^{\rm ab}$, Dani\"el BOER~$^{\rm b}$ and Holger WAALKENS~$^{\rm a}$}
\AuthorNameForHeading{E.J.~Pap, D.~Boer and H.~Waalkens}

\Address{$^{\rm a)}$~Bernoulli Institute, University of Groningen,\\
\hphantom{$^{\rm a)}$}~P.O.~Box 407, 9700 AK Groningen, The Netherlands}
\EmailD{\href{mailto:e.j.pap@rug.nl}{e.j.pap@rug.nl}, \href{mailto:h.waalkens@rug.nl}{h.waalkens@rug.nl}}
\URLaddressD{\url{http://www.rug.nl/staff/e.j.pap/}, \url{http://www.rug.nl/staff/h.waalkens/}}

\Address{$^{\rm b)}$~Van Swinderen Institute, University of Groningen, 9747 AG Groningen, The Netherlands}
\EmailD{\href{mailto:d.boer@rug.nl}{d.boer@rug.nl}}
\URLaddressD{\url{http://www.rug.nl/staff/d.boer/}}

\ArticleDates{Received July 23, 2021, in final form December 28, 2021; Published online January 13, 2022}

\Abstract{We present a formal geometric framework for the study of adiabatic quantum mechanics for arbitrary finite-dimensional non-degenerate Hamiltonians. This framework generalizes earlier holonomy interpretations of the geometric phase to non-cyclic states appearing for non-Hermitian Hamiltonians. We start with an investigation of the space of non-degenerate operators on a finite-dimensional state space. We then show how the energy bands of a Hamiltonian family form a covering space. Likewise, we show that the eigenrays form a bundle, a generalization of a principal bundle, which admits a natural connection yielding the (generalized) geometric phase. This bundle provides in addition a natural generalization of the quantum geometric tensor and derived tensors, and we show how it can incorporate the non-geometric dynamical phase as well. We finish by demonstrating how the bundle can be recast as a principal bundle, so that both the geometric phases and the permutations of eigenstates can be expressed simultaneously by means of standard holonomy theory.}

\Keywords{adiabatic quantum mechanics; geometric phase; exceptional point; quantum geometric tensor}

\Classification{81Q70; 81Q12; 55R99}

\section{Introduction} \label{sec:introduction}

The eigenvalues of a matrix or operator are of great interest in many fields of mathematics and physics. In quantum mechanics, the eigenvalues of an observable are the possible measurement outcomes for this observable. The dynamics is governed by the Hamilton operator, whose eigenvalues define the energy levels. This relation between quantum mechanics and linear algebra, or more general functional analysis, is even tighter in the subfield of adiabatic quantum mechanics. In general, if the Hamiltonian changes in time, then initial eigenstates need not evolve into instantaneous eigenstates of the Hamiltonian at a later time. However, this could be realized for sufficiently slow change of the Hamiltonian \cite{Born1928BeweisAdiabatensatzes,Kato1950OnMechanics}, known as the \emph{adiabatic approximation}.

The interest in adiabatic quantum mechanics gained momentum with the discovery of the geometric phase by Berry \cite{Berry1984QuantalChanges}. He showed that the phase picked up by an eigenstate upon varying parameters along a closed loop has in addition to the usual dynamical contribution a purely geometric contribution. This geometric phase depends only on the traversed loop, i.e., it is invariant under reparametrization of the path. Geometric phases have been found in numerous physical systems, see the overviews in \cite{Chruscinski2004GeometricMechanics,Cohen2019GeometricAndbeyond,Wilczek1989GeometricPhysics}, where in fact the earliest observation of such a phase had been reported by Pancharatnam \cite{Pancharatnam1956GeneralizedApplications} in polarized light. On the theoretical side, the geometric phase was readily identified to be the holonomy of a bundle of eigenstates over system parameters \cite{Simon1983HolonomyPhase}. In addition, generalizations have been studied, e.g., for degenerate Hamiltonians \cite{Wilczek1984AppearanceSystems} or for any so called cyclic state returning to its original ray in a Hilbert space~\cite{Aharonov1987PhaseEvolution}.

Geometric phase is also studied for non-Hermitian Hamiltonians, which allow for gain and loss of energy. Using left-eigenstates instead of bra's, a generalization of Berry's phase for cyclic eigenstates of a non-degenerate non-Hermitian Hamiltonian was presented in \cite{Garrison1988ComplexSystems}, and this was related to a geometric model in \cite{Mehri-Dehnavi2008GeometricInterpretation}. Another prominent feature of a non-Hermitian Hamiltonian family is that, when one follows a loop in the space of system parameters, the energies may swap places, which renders the evolution non-cyclic. This effect indicates a non-trivial topology of the energy bands; they need not be separated as in the Hermitian case. The interchange of energies can be related to degeneracies in the space of system parameters, which are known as exceptional points (EPs). These were first mentioned, with slightly different meaning, in \cite{Kato1966PerturbationOperators}, and have been found in many experimental setups, see the overviews, e.g., in~\cite{Heiss2012ThePoints,Miri2019ExceptionalPhotonics}. Many of these experiments revolve around parity-time or $\PT$ symmetric systems. Such systems are typically described by non-Hermitian Hamiltonians, and the breaking of this symmetry is often associated with an EP. Although $\PT$-symmetry is not necessary for EP theory, it does provide experimentally accessible realizations.

We remark that this phenomenology is not adiabatic in the standard way, as non-adiabatic effects cannot be avoided. The adiabatic theorem in the non-Hermitian case only holds for the state with highest relative gain \cite{Nenciu1992OnHamiltonians}. This leads to significant limitations on how well the state exchange around an EP can be measured \cite{Berry2011SlowPhenomenon,Uzdin2011OnPoints}. A solution is to consider quasi-static set-ups; the path in parameter space is discretized into points, and one measures per configuration, see also \cite{Holler2020Non-HermitianPoints,Milburn2015GeneralPoints}.

In this paper we introduce a geometric formalism to properly describe the adiabatic evolution of eigenstates of any non-degenerate finite-dimensional operator playing the role of a~Hamiltonian. In particular, the eigenstates need not be cyclic and the Hamiltonian need not be Hermitian.
To this end we introduce a bundle that directly follows from the eigenvalue problem and consists of triples ``matrix-eigenvalue-eigenvector''. This bundle will not be principal, rather it has the structure of a semi-principal bundle as rigorously defined and studied in \cite{Pap2020FramesTheory}. It is naturally equipped with a connection, whose parallel transport corresponds to the generalized geometric phase.
As a result, both the geometric phase \emph{and} the swaps of eigenstates arising from EPs can be described in a single holonomy description. In fact, the same bundle also allows for the incorporation of the non-geometric dynamical phase. We also treat its associated ``frame'' bundle, which is principal and provides a rigorous geometric argument behind the matrices used to describing state evolution.

The condition of non-degeneracy of the operator will be crucial, and hence we study the space this defines in Section~\ref{sec:non-deg space}. The geometry of the eigenvalues of these operators we study in Section~\ref{sec:eigval bundle and EPs}. This will yield a natural way to treat the interchanges of energies around EPs. In Section~\ref{sec:eigvec bundle and GPs}, this is extended to include eigenvectors, which yields a natural parallel transport theory for geometric phases, also in the presence of EPs. Following up on this, we show in Section~\ref{sec:holonomy} how these physical phenomena can be interpreted via a single holonomy description. We finish with a discussion in Section~\ref{sec:discussion}.

\section{The non-degeneracy space} \label{sec:non-deg space}

We start with the mathematical objects that will provide the basis for all our arguments concerning eigenvalues. At this moment, we do not consider any specific operator family, instead only a complex vector space $V$ of finite dimension $n$ is given. An important remark is that we do \emph{not} endow $V$ with an inner product; we only use the topological vector space and manifold properties of $V$.

Let us establish some notation by reviewing the following definitions and facts. We recall that endomorphisms or operators, i.e., linear maps $A\from V \to V$, form the space $\End(V)$, which is a complex manifold of complex dimension $n^2$. The set of all eigenvalues of an operator $A$ is called the spectrum of $A$, and we denote it as $\Spec(A)$. If $(v_1,\dots,v_n)$ is a frame, or basis, of~$V$, we call it an eigenframe of~$A$ if each $v_i$ is an eigenvector of $A$. The set of all bases of $V$ we denote as $\Fr(V)$, and the set of all eigenframes of $A$ by $\EigFr(A)$. Any basis $(v_1,\dots,v_n)$ of $V$ has a dual basis $\big(\theta^1,\dots,\theta^n\big)$ of the dual space $V^\vee$ defined by the condition
\begin{equation*}
 \theta^i(v_j)=\delta^i_j=
 \begin{cases}1 & \text{if}\quad i=j,\\ 0 & \text{if} \quad i\ne j.\end{cases}
\end{equation*}
An eigencovector of $A$ is a non-zero covector $\theta$ such that $\theta A=\lambda \theta$, where $\lambda\in \C$ is necessarily an eigenvalue of $A$. An eigencoframe of $A$ is a frame of $V^\vee$ consisting of eigencovectors of $A$. One can verify the following facts about eigenframes.
\begin{Lemma} \label{lem:eigenframe}
 Let $A\in \End(V)$, then
\begin{enumerate}\itemsep=0pt
 \item[$(1)$] $A$ has an eigenframe if and only if $A$ is diagonalizable,
 \item[$(2)$] $(v_1,\dots,v_n)$ is an eigenframe of $A$ if and only if the dual basis $\big(\theta^1,\dots,\theta^n\big)$ is an eigencoframe.
 \end{enumerate}
\end{Lemma}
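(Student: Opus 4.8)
The plan is to prove the two equivalences separately, since each reduces to unwinding the definitions together with a single short computation. For part $(1)$, I would observe that possessing an eigenframe is just a restatement of diagonalizability. Indeed, if $(v_1,\dots,v_n)$ is an eigenframe with $A v_i = \lambda_i v_i$, then the matrix of $A$ with respect to this basis is $\diag(\lambda_1,\dots,\lambda_n)$, so $A$ is diagonalizable; conversely, any basis bringing $A$ to diagonal form consists of vectors satisfying $A v_i = \lambda_i v_i$, hence is an eigenframe. So $(1)$ is essentially the definition of diagonalizability read in both directions.

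For part $(2)$, the key tool is the duality relation $\theta^i(v_j)=\delta^i_j$ together with the expansion $w=\sum_i \theta^i(w)\,v_i$, valid for every $w\in V$ because $(v_1,\dots,v_n)$ is a basis. For the forward implication I assume $A v_j = \lambda_j v_j$ and evaluate the covector $\theta^i A$ on each basis vector,
\begin{equation*}
 (\theta^i A)(v_j)=\theta^i(A v_j)=\lambda_j\,\theta^i(v_j)=\lambda_i\,\delta^i_j,
\end{equation*}
so that $\theta^i A$ and $\lambda_i\theta^i$ agree on a basis and are therefore equal; thus each $\theta^i$ is an eigencovector and the dual basis is an eigencoframe. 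For the converse I assume $\theta^i A=\mu_i\theta^i$ for all $i$ and apply the expansion to $A v_j$,
\begin{equation*}
 A v_j=\sum_i \theta^i(A v_j)\,v_i=\sum_i (\theta^i A)(v_j)\,v_i=\sum_i \mu_i\,\delta^i_j\,v_i=\mu_j\,v_j,
\end{equation*}
showing that every $v_j$ is an eigenvector, so the frame is an eigenframe.

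Since all computations are elementary, there is no genuinely hard step; the only points demanding care are the index bookkeeping --- in particular noticing that the eigenvalue of $\theta^i$ equals that of $v_i$ --- and the use of the full basis property (not merely linear independence) in the expansion formula, on which the converse of $(2)$ relies. As an aside, once the forward implication of $(2)$ is in hand, its converse also follows abstractly from the symmetry of the statement under the canonical identification $V^{\vee\vee}\cong V$, under which the dual basis of $\big(\theta^1,\dots,\theta^n\big)$ is again $(v_1,\dots,v_n)$; I would nevertheless present the direct calculation above, as it is shorter and self-contained.
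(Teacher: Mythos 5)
Your proposal is correct: part $(1)$ is indeed just the definition of diagonalizability read in both directions, and your two computations for part $(2)$ — checking $\theta^i A$ against the basis $(v_1,\dots,v_n)$ for the forward direction, and expanding $Av_j$ in the dual-basis formula for the converse — are both valid, with the non-vanishing of the $v_j$ and $\theta^i$ guaranteed by the basis property. The paper itself offers no proof, stating the lemma as facts "one can verify," and your elementary verification is precisely the intended argument, so there is nothing to compare beyond noting that your write-up fills the gap the paper leaves to the reader.
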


We will now focus on the operators $A$ which are non-degenerate, i.e., which have $n$ distinct eigenvalues. The subspace in $\End(V)$ of non-degenerate operators we denote by $N(V)$. We will go over different ways to formally define this subspace. First, we inspect an algebraic argument, which allows for a straightforward result on the manifold properties of $N(V)$. Afterwards, we consider other formulations of non-degeneracy, which will naturally guide us to symmetries and bundle properties of $N(V)$.

\subsection{Discriminant definition}

The first characterisation of $N(V)$ we will inspect is based on the discriminant, and will form our algebraic definition of $N(V)$. Naturally, the eigenvalues of $A\in \End(V)$ are the zeros of the characteristic polynomial $p(A,z)$ of $A$, and whether or not the zeros of a polynomial are distinct can be inferred from the discriminant. That is, one first has the map
\begin{align*}
 p \from\ \End(V) &\to \C[z], \\
 A &\mapsto p(A,z):=\det(z I-A)
 \end{align*}
and by evaluating the discriminant one obtains a composite function
\begin{align*}
 d \from\ \End(V) & \to \C,\\
 A & \mapsto \discrim(p(A,z),z)
 \end{align*}
known as the discriminant of operators. Its zero set
\begin{equation*}
 \Delta(V)=\set{A \in \End(V)}{d(A)=0}
\end{equation*}
is called the discriminant set and consists of all matrices that are degenerate. Clearly $N(V)$ is the complement of the discriminant set in $\End(V)$, which yields our formal definition of $N(V)$.

\begin{Definition}[non-degeneracy space]
 Given a finite-dimensional complex vector space $V$, its space of non-degenerate operators, or non-degeneracy space, is
 \begin{equation*}
 N(V)=\set{A\in \End(V)}{d(A)\ne 0}.
 \end{equation*}
\end{Definition}

It is well-known that $d$ becomes a polynomial function in the matrix elements of the operator. Hence one may readily conclude the following.
\begin{Lemma} \label{lem:props N(V)}
 The space $N(V)$ is an $($algebraic$)$ open and dense subset of $\End(V)$. In particular, $N(V)$ is a submanifold of complex dimension $n^2$ and real codimension $2$.
\end{Lemma}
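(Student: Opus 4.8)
The plan is to derive everything from the single fact, stated just before the lemma, that $d\from\End(V)\to\C$ is a polynomial in the matrix entries of the operator. First I would fix a basis of $V$, giving a global holomorphic coordinate chart $\End(V)\cong\C^{n^2}$ in which $d$ is a genuine polynomial, hence continuous and holomorphic. Openness is then immediate: $N(V)=d^{-1}(\C\setminus\{0\})$ is the preimage of an open set under a continuous map, and it is even Zariski open, since its complement $\Delta(V)=\set{A\in\End(V)}{d(A)=0}$ is an algebraic set; this is the content of the ``algebraic'' qualifier. Because $N(V)$ is an open subset of the complex manifold $\End(V)$, it inherits the structure of a complex submanifold of the same dimension, namely complex dimension $n^2$.

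For density I would first check that $d$ does not vanish identically. This is the one place a concrete input is needed: choosing $A=\diag(1,2,\dots,n)$ in the fixed basis yields an operator with $n$ distinct eigenvalues, so $A\in N(V)$ and thus $d(A)\neq 0$. Consequently $d$ is a nonzero polynomial on $\C^{n^2}$. Since $\C^{n^2}$ is irreducible, the zero locus of a nonzero polynomial is a proper algebraic subvariety, which is nowhere dense in the Euclidean topology (a nonzero holomorphic function on a connected domain cannot vanish on any nonempty open set, by the identity theorem). Hence $\Delta(V)$ has empty interior and $N(V)=\End(V)\setminus\Delta(V)$ is dense.

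The codimension claim is most naturally read as a statement about the complement: $\Delta(V)$ is the zero set of the single non-constant holomorphic function $d$, so it is a complex hypersurface of complex codimension $1$, that is, real codimension $2$, in $\End(V)$. I would emphasise that this is the codimension of $\Delta(V)$ as an analytic (algebraic) set, not a claim that it is a smooth submanifold everywhere: the discriminant locus is genuinely singular wherever the eigenvalue coincidences are worse than a single transversal collision. The useful consequences — that $N(V)$ is dense and in fact connected — require only that the removed set have real codimension at least $2$.

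The main obstacle is conceptual rather than computational: almost all the work is already packaged into the pre-stated fact that $d$ is polynomial, so the only genuine steps are exhibiting one non-degenerate operator to rule out $d\equiv 0$ and then invoking the standard principle that the zero set of a nonzero holomorphic function on $\C^{n^2}$ is a proper, nowhere-dense analytic subset of complex codimension $1$. Care is needed only in phrasing the codimension correctly, since $\Delta(V)$ fails to be a manifold at its singular points, so the codimension must be interpreted analytically rather than as the codimension of a smooth submanifold.
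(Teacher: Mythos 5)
Your proof is correct and follows exactly the route the paper intends: the paper offers no written proof beyond the preceding remark that $d$ is a polynomial in the matrix entries, and your argument (openness as the complement of an algebraic zero set, density via one explicit non-degenerate operator plus the identity theorem, codimension via the hypersurface property of the zero locus of a single non-constant polynomial) supplies precisely the details that remark leaves implicit. Your observation that the ``real codimension $2$'' must be read as a statement about the complement $\Delta(V)$ as an analytic set~-- not about $N(V)$ itself, and not as a claim that $\Delta(V)$ is everywhere a smooth submanifold~-- is a correct and useful clarification of the lemma's loose wording, and it is exactly the reading the paper relies on later when it invokes this lemma to argue that $N(V)$ has non-trivial fundamental group.
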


We next consider alternative characterizations of the non-degeneracy space.

\subsection{Parametrizing the non-degeneracy space} \label{sec:param of non-deg}

Another way to describe $N(V)$ is by explicitly parametrizing its elements. One such parametrization readily follows from the fact that any $A \in N(V)$ is similar to a diagonal matrix where the diagonal entries are distinct. In~other words, there is a frame $\tilde{f}=(f_1,\dots,f_n)$ of $V$ so that the matrix of $A$ w.r.t.\ $\tilde{f}$ takes the form $\diag(\lambda_1,\dots,\lambda_n)$, where all the $\lambda_i$ are distinct. We will denote the space of tuples $\tilde{\lambda}=(\lambda_1,\dots,\lambda_n)$ of $n$ distinct complex numbers as $\Cun$; more background on this space can be found in Appendix~\ref{sec:distinct numbers}. We abbreviate $\diag(\lambda_1,\dots,\lambda_n)$ as $\diag\big(\tilde{\lambda}\big)$.

The above decomposition can be recast as the following parametrization. For convenience, let us identify a frame $\tilde{f} \in \Fr(V)$ with the map $S_{\tilde{f}} \from \C^n \to V$ given by $(z_1,\dots,z_n)\mapsto \sum_{i=1}^n z_if_i$, which is a linear isomorphism by definition of a frame. The parametrization of $N(V)$ is then formally described using the map
\begin{align*} 
 \Xi \from\ \Fr(V)\times \Cun &\to N(V),\\
 \big(\tilde{f},\tilde{\lambda}\big) &\mapsto S_{\tilde{f}}\diag\big(\tilde{\lambda}\big) S_{\tilde{f}}^{-1},
 \end{align*}
which is a smooth surjection. Clearly, this map is not injective for two reasons; firstly, $\Xi$ is indifferent concerning a non-zero scaling of the eigenvectors, and secondly, if we permute both the vectors of $\tilde{f}$ and the values in $\tilde{\lambda}$ we obtain the same operator. This is a symmetry of $\Xi$, which we phrase using group actions.

First, writing the group of non-zero complex numbers as $\C^\times$, the scaling symmetry is given by the $(\C^\times)^n$-action
\begin{equation} \label{eq:conn action Fr(V)timesCun}
 \tilde{z}\cdot\big(\tilde{f},\tilde{\lambda}\big) =\big(\tilde{z}\tilde{f},\tilde{\lambda}\big),
\end{equation}
where $\tilde{z}\tilde{f}$ is the entry-wise product $(z_1f_1,\dots,z_nf_n)\in \Fr(V)$. To describe the permutations, let us agree that a permutation acts on a tuple by permuting its entries, e.g., for $\sigma \in S_n$ and a tuple $\tilde{\lambda}\in \Cun$, $\sigma \cdot \tilde{\lambda}=\big(\lambda_{\sigma^{-1}(1)},\dots,\lambda_{\sigma^{-1}(n)}\big)$. Then the permutation action on $\Fr(V)\times \Cun$ is simply
\begin{equation} \label{eq:perm action Fr(V)timesCun}
 \sigma \cdot \big(\tilde{f},\tilde{\lambda}\big) = \big(\sigma \tilde{f},\sigma \tilde{\lambda}\big).
\end{equation}
The scaling and permutation actions merge into a single action of the wreath product $\C^\times\wr I_n$, where $I_n:=\{1,\dots,n\}$ is an index set with $n$ elements. This group is the semi-direct product $\conn \rtimes S_n$, whose defining action of $S_n$ on $\conn$ is exactly the tuple permutation as we used above. The group multiplication of $\C^\times\wr I_n$ reads
\begin{equation*} 
 (\tilde{z}_1,\sigma_1)\cdot(\tilde{z}_2,\sigma_2)=\big(\tilde{z}_1 (\sigma_1\tilde{z}_2),\sigma_1\sigma_2\big).
\end{equation*}
{\samepage
The action of the wreath product on $\Fr(V)\times \Cun$ is obtained by performing first the permutation and then the scaling, i.e.,
\begin{align}
 \struc \times (\Fr(V)\times \Cun) &\to \Fr(V)\times \Cun,\nonumber
 \\
 (\tilde{z},\sigma)\cdot \big(\tilde{f},\tilde{\lambda}\big) &= \big(\tilde{z}\big(\sigma\tilde{f}\big),\sigma\tilde{\lambda}\big).
 \label{eq:struc action Fr(V) times Cun}
 \end{align}

 }

Let us make some remarks at this point. First, $\struc$ has a faithful representation by complex generalized permutation matrices. These are matrices with a single non-zero complex number in every row and column, or, equivalently, products $DP$ with $D$ a diagonal matrix without zeros on the diagonal and $P$ a standard permutation matrix. Given the conjugating nature of $\Xi$, the appearance of $\struc$ is natural, as the generalized permutation matrices form the stabilizer of the diagonal matrices. Let us also refer to some earlier reports on the wreath product in combination with this parametrization. For example, in \cite{Wojcik2020HomotopyHamiltonians} there is firstly a~quotient by scaling and secondly the quotient by permutations. Although not reported as such, we recognize the wreath product there. In \cite[Lemma~1.1]{Mandel1987ConstructionMatrices}, it is shown that a restriction of $\Xi$ defines a principal bundle for a wreath product group, where the number field need not be $\C$. As we will argue now, the parametrization $\Xi$ also defines a principal $\C^\times\wr I_n$-bundle. Let us use that a projection is a principal bundle if it is given by the quotient of a free and proper action. The latter is readily verified, hence it remains to be shown that $\Xi$ coincides with this projection.
\begin{Lemma} \label{lem:Xi is principal}
 There is a principal bundle
 \begin{equation*}
 \begin{tikzcd}
 \C^\times \wr I_n \ar{r} & \Fr(V)\times \Cun \ar{r}{\Xi}& N(V).
 \end{tikzcd}
 \end{equation*}
\end{Lemma}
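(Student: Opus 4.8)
The plan is to follow the reduction indicated in the text: a smooth, free, and proper action of a Lie group $G$ on a manifold $M$ makes the orbit projection $M\to M/G$ a principal $G$-bundle, so it suffices to (i) check that the $\struc$-action \eqref{eq:struc action Fr(V) times Cun} is free and proper, and (ii) show that $\Xi$ is, up to a canonical identification, exactly this orbit projection. For (ii) I would first match the fibers of $\Xi$ with the $\struc$-orbits and then upgrade the resulting bijection of the quotient onto $N(V)$ to a diffeomorphism.

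Freeness is immediate from non-degeneracy. If $(\tilde z,\sigma)\cdot(\tilde f,\tilde\lambda)=(\tilde f,\tilde\lambda)$, then $\sigma\tilde\lambda=\tilde\lambda$; since the entries of $\tilde\lambda\in\Cun$ are distinct, this forces $\sigma=\id$, and then $z_if_i=f_i$ with $f_i\neq 0$ gives $z_i=1$ for all $i$. For properness I would use the sequential criterion: given convergent sequences $(\tilde f_k,\tilde\lambda_k)\to(\tilde f,\tilde\lambda)$ and $(\tilde z_k,\sigma_k)\cdot(\tilde f_k,\tilde\lambda_k)\to(\tilde f',\tilde\lambda')$ in $\Fr(V)\times\Cun$, the finiteness of $S_n$ lets me pass to a subsequence with $\sigma_k\equiv\sigma$ constant; comparing eigenvalue parts gives $\tilde\lambda'=\sigma\tilde\lambda$, and applying to each frame vector a covector that does not vanish on the (nonzero) limit $(\sigma\tilde f)_i$ recovers a convergent limit of $z_{k,i}$, which is nonzero because $f_i'\neq 0$. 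Hence $(\tilde z_k,\sigma_k)$ subconverges in $\struc$.

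The heart of (ii) is that the fibers of $\Xi$ are precisely the orbits. One inclusion is the invariance already noted in the text: rescaling eigenvectors and simultaneously permuting $\tilde f$ and $\tilde\lambda$ leaves $S_{\tilde f}\diag(\tilde\lambda)S_{\tilde f}^{-1}$ unchanged, so $\Xi$ is constant on orbits. Conversely, suppose $\Xi(\tilde f,\tilde\lambda)=\Xi(\tilde f',\tilde\lambda')=A$. Both tuples diagonalize $A$, so $\{\lambda_i\}=\{\lambda_j'\}=\Spec(A)$; as these values are distinct there is a unique $\sigma\in S_n$ with $\tilde\lambda'=\sigma\tilde\lambda$. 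Replacing $(\tilde f,\tilde\lambda)$ by $\sigma\cdot(\tilde f,\tilde\lambda)$ reduces to $\tilde\lambda'=\tilde\lambda$, and then $f_i$ and $f_i'$ are eigenvectors of $A$ for the same eigenvalue $\lambda_i$; non-degeneracy makes each eigenspace one-dimensional, so $f_i'=z_if_i$ with $z_i\in\C^\times$. Thus $(\tilde f',\tilde\lambda')$ lies in the orbit of $(\tilde f,\tilde\lambda)$, and $\Xi$ descends to a continuous bijection $\overline\Xi\from(\Fr(V)\times\Cun)/\struc\to N(V)$.

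It remains to see that $\overline\Xi$ is a diffeomorphism, which I expect to be the main obstacle, since freeness and the orbit identification are essentially forced by non-degeneracy. A dimension count already matches: $\Fr(V)\times\Cun$ has real dimension $2n^2+2n$ and $\struc$ has real dimension $2n$, so both the quotient and $N(V)$ have real dimension $2n^2$. To conclude I would exhibit smooth local sections of $\Xi$ directly, using the holomorphic functional calculus: near any $A_0\in N(V)$ the spectral projections $P_i(A)=\frac{1}{2\pi\mathrm{i}}\oint_{\gamma_i}(zI-A)^{-1}\,\dif z$, with $\gamma_i$ a small loop isolating the $i$-th eigenvalue, depend holomorphically on $A$, and choosing a nonzero vector in each rank-one image yields a smooth local eigenframe together with its eigenvalues. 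Such a section both shows that $\overline\Xi$ is a local diffeomorphism---hence, being bijective, a diffeomorphism---and makes the local triviality of $\Xi$ over the chosen neighbourhood manifest. Composing the canonical diffeomorphism with the orbit projection then realizes $\Xi$ as the asserted principal $\struc$-bundle.
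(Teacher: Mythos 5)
Your proof is correct, and its core is the same as the paper's: invoke the quotient manifold theorem for a free and proper action, and identify $\Xi$ with the orbit projection by showing that each fiber of $\Xi$ is exhausted by a single $\struc$-orbit. Your orbit-exhausts-fiber step (the permutation forced by distinctness of the eigenvalues, then the scalings forced by one-dimensionality of the eigenspaces) is precisely the paper's central argument. The difference is one of thoroughness rather than route: the paper dismisses freeness and properness as ``readily verified'' and stops once fibers equal orbits, leaving implicit the upgrade of the induced bijection $(\Fr(V)\times\Cun)/\struc \to N(V)$ to a diffeomorphism, whereas you check properness by the sequential criterion and close the smoothness gap by producing local sections of $\Xi$ from the Riesz projections
\begin{equation*}
 P_i(A)=\frac{1}{2\pi{\rm i}}\oint_{\gamma_i}(zI-A)^{-1}\,\dif z .
\end{equation*}
That last point is worth retaining: the paper's own supply of smooth local eigenframes and eigenvalues (Corollary~\ref{cor:local eigvals eigvecs}) is deduced \emph{from} this lemma, so an independent construction such as your functional-calculus one is exactly what is needed to establish the submersion/local-triviality property without circularity. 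In short, your proposal is a more complete version of the paper's proof rather than a different one.
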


\begin{proof}
 As stated, it suffices to show that $\Xi$ and the action quotient map are isomorphic as bundle maps. As we already saw that the action preserves the fibers of $\Xi$, it only remains to show that each group orbit exhausts the fiber in which it lies. Therefore, assume $\Xi\big(\tilde{f},\tilde{\lambda}\big)=\Xi\big(\tilde{f}',\tilde{\lambda}'\big)=:A$. Then both $\tilde{\lambda}$ and $\tilde{\lambda}'$ constitute $\Spec(A)$, hence they are related by a permutation $\sigma \in S_n$ as $\tilde{\lambda}=\sigma\tilde{\lambda}'$. Then $f_i$ needs to be parallel to $f_{\sigma^{-1}(i)}'$ for each $i$, so $\tilde{f}$ and $\sigma\tilde{f}'$ are equal up to an element of $(\C^\times)^n$. However, then $\big(\tilde{f},\tilde{\lambda}\big)$ and $\big(\tilde{f}',\tilde{\lambda}'\big)$ differ only up to an element of~$\C^\times \wr I_n$.
\end{proof}

This result has various consequences. First, it shows that $N(V)$ can be realized as a quotient space. Another important observation concerns the existence of local sections of $\Xi$. These sections provide local moving eigenframes and corresponding smooth eigenvalues, as we will use in Section~\ref{sec:eigvec bundle and GPs}. We state the details in the following corollaries.
\begin{Corollary}
 Identifying $\Fr(V)$ with $\GL(n,\C)$, the space of non-degenerate operators is rea\-li\-zed as a quotient space as
 \begin{equation*}
 N(V) \cong \frac{\GL(n,\C)\times \Cun}{\C^\times\wr I_n}.
 \end{equation*}
\end{Corollary}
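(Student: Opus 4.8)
The plan is to obtain the corollary as a direct consequence of the principal bundle structure from Lemma~\ref{lem:Xi is principal}, together with the standard torsor identification of $\Fr(V)$ with $\GL(n,\C)$.

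First I would recall the defining property of a principal bundle: in a principal $G$-bundle $G\to P\xrightarrow{\pi}B$ the $G$-action on $P$ is free and its orbits are precisely the fibres of $\pi$, so $\pi$ descends to a diffeomorphism $P/G\xrightarrow{\sim}B$. Applying this to the bundle of Lemma~\ref{lem:Xi is principal}, with total space $\Fr(V)\times\Cun$, group $\C^\times\wr I_n$, and projection $\Xi$, immediately gives
\begin{equation*}
 N(V)\cong\frac{\Fr(V)\times\Cun}{\C^\times\wr I_n}.
\end{equation*}

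It then remains only to replace $\Fr(V)$ by $\GL(n,\C)$. Since $\Fr(V)$ is a $\GL(n,\C)$-torsor, fixing a reference frame $\tilde{f}_0$ and sending $\tilde{f}\mapsto S_{\tilde{f}_0}^{-1}S_{\tilde{f}}$ defines a diffeomorphism $\Fr(V)\cong\GL(n,\C)$. Taking its product with the identity on $\Cun$ and transporting the group action across it yields the claimed quotient, because a diffeomorphism of total spaces that intertwines the group actions induces a diffeomorphism of the quotients.

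The only step requiring any care is verifying that this identification is compatible with the $\C^\times\wr I_n$-action, i.e.\ that the transported action is again the wreath product action. This is exactly where the remark preceding Lemma~\ref{lem:Xi is principal} is convenient: under the torsor identification the action of $(\tilde{z},\sigma)$ on $S_{\tilde{f}}$ amounts to multiplication by the associated generalized permutation matrix, so the transported action is manifestly that of $\C^\times\wr I_n$ and the quotient is unchanged. I expect no genuine obstacle here; the corollary is in essence a repackaging of Lemma~\ref{lem:Xi is principal}.
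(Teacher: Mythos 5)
Your proposal is correct and matches the paper's (implicit) argument: the corollary is presented there as an immediate consequence of Lemma~\ref{lem:Xi is principal}, precisely because the base of a principal bundle is the quotient of its total space by the group action, combined with the torsor identification $\Fr(V)\cong\GL(n,\C)$. Your additional check that the transported action is again the wreath product action (via generalized permutation matrices) is the right point to verify and is consistent with the paper's remarks preceding the lemma.
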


\begin{Corollary} \label{cor:local eigvals eigvecs}
 Let $A_0\in N(V)$. There is a neighborhood $U$ of $A_0$ on which one has smooth local eigenvalue functions $\lambda_1,\dots,\lambda_n \from U
 \to \C$, exhausting the spectrum at each point, and smooth local eigenvector functions $f_1,\dots,f_n \from U \to V$. That is, for all $A\in U$ and $i\in I_n$ one has
 \begin{equation*}
 Af_i(A)=\lambda_i(A) f_i(A)
 \end{equation*}
 such that $\Spec(A)=\{\lambda_1(A),\dots,\lambda_n(A)\}$ and $(f_1(A),\dots,f_n(A))$ is a basis of $V$.
 Moreover, the tuple $(\lambda_1(A),\dots,\lambda_n(A))$ may be taken to be any given ordering of $\Spec(A)$, and the tuple $(f_1(A),\dots,f_n(A))$ any eigenframe of $A$ following the same ordering.
\end{Corollary}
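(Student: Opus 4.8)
The plan is to read off the eigenvalue and eigenvector functions directly from a smooth local section of the principal bundle from Lemma~\ref{lem:Xi is principal}. The key point is that a principal bundle is locally trivial, so over a neighborhood of any base point it admits a smooth local section, and moreover such a section can be arranged to pass through any prescribed point of the fiber.

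First I would fix the desired data at $A_0$: an ordering $\tilde\lambda_0=\big(\lambda_1^0,\dots,\lambda_n^0\big)$ of $\Spec(A_0)$ and an eigenframe $\tilde f_0=\big(f_1^0,\dots,f_n^0\big)$ of $A_0$ following that same ordering, so that $A_0 f_i^0=\lambda_i^0 f_i^0$; such an eigenframe exists by Lemma~\ref{lem:eigenframe}. By construction $\big(\tilde f_0,\tilde\lambda_0\big)$ lies in the fiber $\Xi^{-1}(A_0)$. Using the principal bundle structure, I would then take a smooth local section $s\from U\to \Fr(V)\times\Cun$ of $\Xi$ on a neighborhood $U$ of $A_0$ with $s(A_0)=\big(\tilde f_0,\tilde\lambda_0\big)$.

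Writing $s(A)=\big(\tilde f(A),\tilde\lambda(A)\big)$, I would define $f_i\from U\to V$ by composing $s$ with the projection to $\Fr(V)$ and the $i$-th frame-component map, and $\lambda_i\from U\to\C$ by composing $s$ with the projection to $\Cun$ and the $i$-th coordinate; these are smooth since $s$ and the evaluation maps are. The section identity $\Xi(s(A))=A$ unwinds, by the definition of $\Xi$, to $A=S_{\tilde f(A)}\diag\big(\tilde\lambda(A)\big)S_{\tilde f(A)}^{-1}$, i.e., $A\,S_{\tilde f(A)}=S_{\tilde f(A)}\diag\big(\tilde\lambda(A)\big)$; reading this column by column gives $Af_i(A)=\lambda_i(A)f_i(A)$ for each $i$. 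Since $S_{\tilde f(A)}$ is a linear isomorphism, $(f_1(A),\dots,f_n(A))$ is a basis of $V$; and since $\tilde\lambda(A)\in\Cun$ consists of $n$ distinct eigenvalues of $A$ while $A\in N(V)$ has exactly $n$ eigenvalues, we obtain $\Spec(A)=\{\lambda_1(A),\dots,\lambda_n(A)\}$. Evaluating at $A_0$ recovers $\lambda_i(A_0)=\lambda_i^0$ and $f_i(A_0)=f_i^0$, so the prescribed ordering and eigenframe are realized.

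The only step needing care is the existence of a local section through the prescribed fiber point $\big(\tilde f_0,\tilde\lambda_0\big)$. This is exactly where the principal bundle property is used: a local trivialization restricts to a bijection from $\{A_0\}\times(\C^\times\wr I_n)$ onto the fiber $\Xi^{-1}(A_0)$, so post-composing the trivialization-induced section with a suitable constant group element makes it pass through any chosen point of that fiber. I expect this to be the main, though mild, obstacle; the rest is a routine unwinding of the definition of $\Xi$.
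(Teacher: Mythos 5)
Your proposal is correct and takes essentially the same route as the paper: the paper's own proof consists of exactly this observation, namely that the $f_i$ and $\lambda_i$ are the components of a local section of the principal bundle $\Xi$ from Lemma~\ref{lem:Xi is principal}, which can be taken through an arbitrary prescribed point of the fiber above $A_0$. Your write-up simply fills in the details (the unwinding of $\Xi(s(A))=A$ and the adjustment of the trivialization-induced section by a constant group element) that the paper leaves implicit.
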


\begin{proof}
 The functions $f_i(A)$ and $\lambda_i(A)$ are the components of a local section $s\from U \to \Fr(V)\times\Cun$, which can be taken through an arbitrary point above $A$.
\end{proof}

\subsection{Spectrum map on non-degeneracy space}

Non-degenerate operators can also be characterized based on their spectrum. Indeed, an operator $A\in \End(V)$ is non-degenerate if and only if $\Spec(A)$ consists of $n$ distinct elements. In~other words, $\Spec(A)$ should not be any subset of $\C$, but belong to the set $\binom{\C}{n}$ of all subsets of $\C$ consisting of $n$ distinct elements. Taking the spectrum can thus be written as a map
\begin{align*}
 \Spec \from N(V) &\to \binom{\C}{n},\\
 A &\mapsto \Spec(A).
 \end{align*}
We will now continue by showing that this map $\Spec$ defines a fiber bundle. In this way, we find~$N(V)$ realized as a total space instead of a base space. In addition, the map $\Spec$ will reappear when we discuss EPs; formally it is this map that associates the change in spectrum to a change in Hamiltonian.

To start, let us verify that $\Spec$ is smooth. For the manifold structure on $\binom{\C}{n}$, we follow the idea that the space $\binom{\C}{n}$ can be obtained from $\Cun$ by reducing an (ordered) tuple to the (unordered) set of its elements. We write $q\from \Cun \to \binom{\C}{n}$ for the quotient map, and use the manifold structure on $\binom{\C}{n}$ for which $q$ defines a principal $S_n$-bundle with the standard permutation action of $S_n$ on $\Cun$. For more details, we refer to Appendix~\ref{sec:distinct numbers}.

We can see that $\Spec$ is smooth using a geometric argument, based on the projection $q$ and the map $\Xi$ from the previous part. The key observation is the equality $\Spec\big(\Xi\big(\tilde{f},\tilde{\lambda}\big)\big)=q\big(\tilde{\lambda}\big)$ for any $\big(\tilde{f},\tilde{\lambda}\big)\in \Fr(V)\times \Cun$; by construction $\tilde{\lambda}$ lists the eigenvalues of the operator $\Xi\big(\tilde{f},\tilde{\lambda}\big)$. In~other words, we have the following commutative diagram:
\begin{equation} \label{eq:diagram SpecXi}
 \begin{tikzcd}
 \Fr(V)\times \Cun \ar{r}{\pr_{\Cun}} \ar{d}{\Xi} & \Cun \ar{d}{q}\\
 N(V) \ar{r}{\Spec} & \binom{\C}{n}.
 \end{tikzcd}
\end{equation}
Smoothness of $\Spec$ is now clear; as the upper route in the diagram is smooth and $\Xi$ is a~surjective submersion the claim follows.

Let us continue by discussing the model fiber of $\Spec$. This is facilitated by viewing operators according to their spectral decomposition. That is, any $A\in N(V)$ can be written as a sum $\sum_{i=1}^n\lambda_iP_i$, where $\lambda_i\in \Spec(A)$ and each $P_i\from V\to V$ is a projection. Geometrically, this expresses that $A\in N(V)$ is completely specified by the pairs of eigenvalue and the corresponding eigenrays. Hence, if we fix the spectrum, then only the choice of eigenrays remains. This means that any fiber of $\Spec$ is diffeomorphic to this space of possible choices of eigenrays, or, equivalently, to the space of suitable tuples of projectors. The latter space we can describe in more detail. Clearly, each $P_i$ projects to a one-dimensional subspace in $V$, and together these projectors satisfy $P_iP_j=\delta_{ij}P_i$ and $\sum_{i=1}^n P_i=\id_V$. That is, the tuple $(P_1,\dots,P_n)$ must lie in the space
\begin{equation*} 
 \PFr(V):=\set{(P_1,\dots,P_n)\in \End(V)}{P_iP_j=\delta_{ij}P_i,\, \dim_\C(\im(P_i))=1},
\end{equation*}
which is thus also the model fiber for $\Spec$. One can think of $\PFr(V)$ as the space of all resolutions of the identity which are compatible with non-degenerate operators, i.e., those corresponding to~$n$ rays in $V$. Clearly, $\PFr(V)$ relates to $\Fr(V)$ by sending a basis $(f_1,\dots,f_n)$, with dual basis $\big(\theta^1,\dots,\theta^n\big)$, to the tuple $\big(f_1\theta^1,\dots,f_n\theta^n\big)\in \PFr(V)$. This map is surjective, but not injective as individual scaling of the basis vectors will yield the same projectors. Hence we find that $\Fr(V)/\conn$ is canonically isomorphic to $\PFr(V)$, and it is the former form that we will see in the upcoming proofs.

At this point we have discussed the relevant spaces, but did not yet discuss their symmetry. This symmetry can be found from the observation that operators with the same spectrum differ by a similarity transformation. Let us implement these transformations in the language of the action of $\GL(V)$ on $N(V)$ given by
\begin{align}
 \GL(V)\times N(V) &\to N(V),\nonumber
 \\
 S\cdot A &= SAS^{-1}.
 \label{eq:GL(V)-action N(V)}
\end{align}
The map $\Spec$ is invariant w.r.t.\ this action by the familiar rule $\Spec\big(SAS^{-1}\big)=\Spec(A)$. Hence each fiber of $\Spec$ is a $\GL(V)$-manifold as well, and so we wish to view $\PFr(V)$ as a $\GL(V)$-manifold, also equipped with the conjugation action. Observe that this action on $\PFr(V)$ is naturally inherited from $\Fr(V)$, on which it reads $S\cdot (f_1,\dots,f_n)=(Sf_1,\dots,Sf_n)$. We thus arrive at the following result.

\begin{Lemma} \label{lem:GL(V) action & fiber Spec}
 The $\GL(V)$-action on $N(V)$ is transitive on the fibers of $\Spec$. Moreover, any fiber of $\Spec$ is isomorphic to $\PFr(V)$ as $\GL(V)$-manifolds.
\end{Lemma}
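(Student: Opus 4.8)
The plan is to prove the two assertions in turn, deriving transitivity from the parametrization $\Xi$ and then building the fiberwise isomorphism out of spectral decompositions.

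For transitivity, suppose $A$ and $B$ lie in the same fiber, so $\Spec(A)=\Spec(B)=\{\mu_1,\dots,\mu_n\}$. Using Lemma~\ref{lem:Xi is principal} I write $A=\Xi\big(\tilde f,\tilde\lambda\big)$ and $B=\Xi\big(\tilde g,\tilde\mu\big)$. The identity $q\big(\tilde\lambda\big)=\Spec(A)=\Spec(B)=q\big(\tilde\mu\big)$ forces $\tilde\lambda$ and $\tilde\mu$ to differ by a permutation, and since the $\struc$-action preserves the fibers of $\Xi$ I may replace $\tilde f$ by a suitable $S_n$-translate to arrange $\tilde\lambda=\tilde\mu$. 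Then $A=S_{\tilde f}\diag\big(\tilde\mu\big)S_{\tilde f}^{-1}$ and $B=S_{\tilde g}\diag\big(\tilde\mu\big)S_{\tilde g}^{-1}$, so that $S:=S_{\tilde g}S_{\tilde f}^{-1}\in\GL(V)$ satisfies $SAS^{-1}=B$. Hence $\GL(V)$ acts transitively on each fiber.

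For the isomorphism, fix an ordering $\tilde\mu=(\mu_1,\dots,\mu_n)$ of the chosen spectrum and let $F$ denote the corresponding fiber. I would construct mutually inverse maps from the spectral decomposition. Define $\Psi\from\PFr(V)\to F$ by $(P_1,\dots,P_n)\mapsto\sum_{i=1}^n\mu_iP_i$; the defining relations of $\PFr(V)$ guarantee that each nonzero vector in $\im(P_i)$ is a $\mu_i$-eigenvector and that these lines span $V$, so the image is a non-degenerate operator with spectrum $\{\mu_1,\dots,\mu_n\}$ and spectral projectors exactly the $P_i$, and $\Psi$ indeed lands in $F$. Conversely, define $\Phi\from F\to\PFr(V)$ by sending $A$ to its ordered tuple of spectral projectors. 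Uniqueness of the spectral decomposition makes $\Phi$ and $\Psi$ mutually inverse bijections, and equivariance is immediate: conjugation by $S$ carries the $\mu_i$-projector of $A$ to that of $SAS^{-1}$, equivalently $\Psi\big(S\cdot(P_i)\big)=S\Psi\big((P_i)\big)S^{-1}$.

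It remains to upgrade this to a diffeomorphism, and this is the one place requiring care: I must show the spectral projectors depend smoothly on $A$. The key observation is that the eigenvalues are \emph{constant} along $F$, so the projectors are given by a fixed polynomial in $A$ via Lagrange interpolation, $P_i=\ell_i(A)$ with $\ell_i(z)=\prod_{j\ne i}(z-\mu_j)/(\mu_i-\mu_j)$. As the nodes $\mu_j$ do not vary over $F$, the assignment $A\mapsto(\ell_1(A),\dots,\ell_n(A))$ is polynomial on all of $\End(V)$ and restricts to $\Phi$ on $F$, so $\Phi$ is smooth; $\Psi$ is smooth because it is linear in the projectors. Finally, $F$ carries a genuine submanifold structure: the diagram~\eqref{eq:diagram SpecXi} factors $\Spec$ through the surjective submersion $\Xi$ as the submersion $q\circ\pr_{\Cun}$, whence $\Spec$ is itself a submersion and $F$ an embedded submanifold of $N(V)$. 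With this structure $\Phi$ and $\Psi$ are inverse smooth maps, so $\Phi$ is the desired $\GL(V)$-equivariant diffeomorphism. The main obstacle is thus precisely the smoothness of the spectral projectors, which the Lagrange-interpolation formula resolves cleanly by exploiting that the spectrum is frozen along each fiber.
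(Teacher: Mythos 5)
Your proof is correct, and while the transitivity half coincides in substance with the paper's (the paper simply cites the fact that non-degenerate operators with equal spectra are similar, which your manipulation of $\Xi$ spells out in detail), the isomorphism half takes a genuinely different route. The paper argues via homogeneous spaces: transitivity makes the fiber through $A$ a homogeneous $\GL(V)$-space; restricting $\Xi$ to $\Fr(V)\times\{\tilde{\lambda}\}$ gives a $\GL(V)$-equivariant surjection onto that fiber, the stabilizer of $A$ is identified with $\conn$ acting as in equation~\eqref{eq:conn action Fr(V)timesCun}, and the fiber is therefore $\Fr(V)/\conn\cong\PFr(V)$ as a $\GL(V)$-manifold. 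You instead build the isomorphism explicitly from the spectral decomposition, $\Psi\from (P_1,\dots,P_n)\mapsto\sum_i\mu_iP_i$ and $\Phi\from A\mapsto(\ell_1(A),\dots,\ell_n(A))$, with smoothness of $\Phi$ secured by the Lagrange-interpolation formula (valid precisely because the spectrum is frozen along the fiber) and the submanifold structure of the fiber secured by observing that $\Spec$ is a submersion. The paper's route is shorter and identifies the stabilizer directly, but leaves the smooth structure of the fiber implicit in orbit theory; your route yields explicit inverse formulas and an honest verification that the fiber is embedded, which the paper never states. The one point you elide is dual to the paper's own gloss: for $\Phi$ to be smooth \emph{into} $\PFr(V)$ you need $\PFr(V)\subset\End(V)^n$ to be an embedded submanifold whose smooth structure agrees with the quotient structure $\Fr(V)/\conn$ used by the paper; this is true ($\PFr(V)$ is a closed conjugation orbit of $\GL(V)$, and the induced equivariant smooth bijection between the two transitive $\GL(V)$-manifolds has constant rank, hence is a diffeomorphism), but it merits a sentence in your write-up.
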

\begin{proof}
 As any two non-degenerate operators with the same spectrum differ by a similarity transformation, the action is transitive on the fibers of $\Spec$. Hence, every fiber of $\Spec$ is a~homogeneous $\GL(V)$-space. The stabilizer subgroup at $A\in N(V)$ consists of the maps that preserve all eigenrays of $A$ individually, hence is isomorphic to $\conn$. In order to parametrize this fiber, let $\tilde{\lambda}$ be an ordering of $\Spec(A)$, and consider $\Xi$ restricted to the subset $\Fr(V)\times\{\tilde{\lambda}\}$. Clearly, this surjects on the fiber of $\Spec$ containing $A$. Moreover, $\Xi$ is equivariant w.r.t.\ the canonical $\GL(V)$-action on $\Fr(V)\times \Cun$. Hence, the fiber of $\Spec$ containing $A$ is isomorphic, as $\GL(V)$-manifold, to the quotient of $\Fr(V)$ by the stabilizer. The stabilizer $\conn$ is now straightforward; it appears via the $\conn$-action given in equation~\eqref{eq:conn action Fr(V)timesCun}. Hence we found a~$\GL(V)$-equivariant isomorphism to $\Fr(V)/\conn$, and so to $\PFr(V)$.
\end{proof}

The model fiber of $\Spec$ is thus $\PFr(V)$, which we view as a $\GL(V)$-manifold, emphasizing its close relation to similarity transformations. We thus wish to prove that $\Spec$ defines a fiber bundle that respects the $\GL(V)$-action. That is, $\Spec$ defines a $\GL(V)$-manifold bundle in the language of \cite{Pap2020FramesTheory}; both fiber and total space are endowed with a $\GL(V)$-action, and local trivializations can be taken $\GL(V)$-equivariant. We then arrive at the following statement, which summarizes the results of this section.

\begin{Proposition} 
 The spectrum map induces the $\GL(V)$-manifold bundle
 \begin{equation*}
 \begin{tikzcd}
 \PFr(V) \ar{r} & N(V) \ar{r}{\Spec} & \binom{\C}{n}.
 \end{tikzcd}
 \end{equation*}
 \end{Proposition}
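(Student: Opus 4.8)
The plan is to combine the three facts already established: the map $\Spec$ is smooth (shown via the diagram~\eqref{eq:diagram SpecXi}), the $\GL(V)$-action of~\eqref{eq:GL(V)-action N(V)} is transitive on each fiber, and every fiber is isomorphic to $\PFr(V)$ as a $\GL(V)$-manifold (Lemma~\ref{lem:GL(V) action & fiber Spec}). What remains is to prove local triviality, and moreover that the local trivializations can be chosen $\GL(V)$-equivariantly; this is precisely what is needed for a $\GL(V)$-manifold bundle in the sense of~\cite{Pap2020FramesTheory}.

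First I would fix a point $B=\{\mu_1,\dots,\mu_n\}\in\binom{\C}{n}$ and exploit that $q\from\Cun\to\binom{\C}{n}$ is a principal $S_n$-bundle, hence a covering map. Thus there is an open neighborhood $W$ of $B$ carrying a smooth local section $\sigma\from W\to\Cun$ of $q$, i.e., a coherent ordering $\sigma(C)=(\lambda_1(C),\dots,\lambda_n(C))$ of each spectrum $C\in W$. This ordering is the device that turns the unordered fibers into ordered ones consistently across $W$, rather than merely at a single point.

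Over $\Spec^{-1}(W)$ I would then construct the trivialization by spectral projection. For $A\in\Spec^{-1}(W)$, put $\lambda_i=\lambda_i(\Spec(A))$ and let $P_i(A)$ be the Frobenius covariant
\[
 P_i(A)=\prod_{j\neq i}\frac{A-\lambda_j\,\id}{\lambda_i-\lambda_j},
\]
which is exactly the projector onto the $\lambda_i$-eigenray. Define
\[
 \Phi\from\Spec^{-1}(W)\to W\times\PFr(V),\qquad A\mapsto\big(\Spec(A),(P_1(A),\dots,P_n(A))\big).
\]
Smoothness of $\Phi$ is transparent from the displayed formula, since $A\mapsto\sigma(\Spec(A))$ is smooth and the covariants are polynomial in $A$ and rational in the (distinct) values $\lambda_i$; one could equally invoke Corollary~\ref{cor:local eigvals eigvecs}. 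An explicit inverse is supplied by the spectral decomposition $A=\sum_{i=1}^n\lambda_i(C)P_i$ on an input $(C,(P_1,\dots,P_n))$, which is again smooth, so $\Phi$ is a diffeomorphism commuting with the projections to $W$.

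Finally I would verify $\GL(V)$-equivariance, which is immediate: $\Spec(SAS^{-1})=\Spec(A)$, so the $W$-component is untouched, while the covariants obey $P_i(SAS^{-1})=SP_i(A)S^{-1}$, matching the conjugation action on $\PFr(V)$ inherited from $\Fr(V)$. Hence $\Phi$ intertwines the $\GL(V)$-actions with the trivial action on $W$, and the collection of such $\Phi$ furnishes the required $\GL(V)$-equivariant atlas. The one genuinely delicate point is the passage from the pointwise fiber identification of Lemma~\ref{lem:GL(V) action & fiber Spec} to a trivialization valid over an entire neighborhood; this is exactly what the local section $\sigma$ of $q$ resolves, by providing a single smooth ordering of the eigenvalues throughout $W$.
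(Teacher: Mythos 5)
Your proof is correct, and it rests on the same key device as the paper's: a smooth local section of the principal $S_n$-bundle $q\from \Cun\to\binom{\C}{n}$, which fixes a coherent ordering of the spectra over a whole neighborhood of the chosen point and thereby turns the unordered fibers into consistently ordered ones; like you, the paper treats this as the only delicate step. The difference lies in how the identification of the fibers with $\PFr(V)$ is implemented. The paper works from the product side: it restricts the parametrization $\Xi$ to the map $\big(u,\tilde f\big)\mapsto \Xi\big(\tilde f,s(u)\big)$ on $U\times \Fr(V)$ and then invokes Lemma~\ref{lem:GL(V) action & fiber Spec} fiber-wise to descend to a well-defined $\GL(V)$-equivariant map $U\times \Fr(V)/\conn \cong U\times\PFr(V)\to \Spec^{-1}(U)$, so smoothness and bijectivity are inherited from the quotient machinery already set up in Lemma~\ref{lem:Xi is principal}. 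You instead build the trivialization in the opposite direction by closed formulas: the Frobenius covariants $P_i(A)=\prod_{j\ne i}(A-\lambda_j\id)/(\lambda_i-\lambda_j)$ give the forward map, and the spectral decomposition $(C,(P_1,\dots,P_n))\mapsto\sum_i\lambda_i(C)P_i$ gives the inverse --- which is in fact precisely the paper's descended map, read backwards. What your route buys is self-containedness and transparency: both directions are manifestly smooth (indeed algebraic, rational in the ordered eigenvalues with non-vanishing denominators on $\Cun$), equivariance is the one-line computation $P_i\big(SAS^{-1}\big)=SP_i(A)S^{-1}$, and Lemma~\ref{lem:GL(V) action & fiber Spec} is not actually needed for the argument (you cite it only as motivation) --- your $\Phi$ reproves the fiber identification as a byproduct. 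What the paper's route buys is brevity: it recycles the bundle structure of $\Xi$ and never has to write an explicit projector formula.
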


\begin{proof}
 Pick a point $\{\lambda_1,\dots,\lambda_n\}\in \binom{\C}{n}$, and let $U\subset \binom{\C}{n}$ be a neighborhood of $\{\lambda_1,\dots,\lambda_n\}$ on which a local section $s\from U \to \Cun$ of $q$ is defined. Consider the map
 \begin{align*}
 U\times \Fr(V) &\to N(V),\\
 \big(u,\tilde{f}\big) &\mapsto \Xi\big(\tilde{f},s(u)\big),
 \end{align*}
 which for each $u\in U$ surjects on the fiber of $\Spec$ above $u$. By applying Lemma~\ref{lem:GL(V) action & fiber Spec} fiber-wise we find that the reduced map $U\times \Fr(V)/\conn \to N(V)$ is well-defined. In fact, if we look at the spectral decomposition, this map pairs the tuple $(P_1,\dots,P_n)\in \PFr(V)$ determined by $\tilde{f}$ with the values in the tuple $s(u)$. We thus obtained a $\GL(V)$-equivariant local trivialization of $\Spec$ around $\{\lambda_1,\dots,\lambda_n\}$, hence the claim follows.
\end{proof}

\subsection{Summarizing diagram}

If one takes the bundles defined by $\Xi$ and $\Spec$ plus the diagram in \eqref{eq:diagram SpecXi}, then one readily obtains the following diagram of bundle sequences:
\begin{equation} \label{eq:summarizing diagram}
 \begin{tikzcd}
 (\C^\times)^n \ar{r}\ar{d} & \C^\times \wr I_n \ar{r}\ar{d} & S_n \ar{d}\\
 \Fr(V) \ar{d}\ar{r} & \Fr(V)\times \Cun \ar{r}{\pr_{\Cun}} \ar{d}{\Xi} & \Cun \ar{d}{q}\\
 \PFr(V) \ar{r} & N(V) \ar{r}{\Spec} & \binom{\C}{n}.
 \end{tikzcd}
\end{equation}
The direct product $\Fr(V)\times \Cun$ we view as a bundle over $\Cun$. The remaining bundles are straightforward; on top is the defining decomposition of the wreath product $\C^\times \wr I_n$, on the right the quotient map $q$, and on the left the quotient realization of $\PFr(V)$. Observe that all rows and columns are group-space bundles, i.e., each one is related to a group action. With the exception of $\Spec$ all bundles are principal; $\Spec$ itself defines a bundle of homogeneous $\GL(V)$-spaces.

\section{Eigenvalue bundle and exceptional points} \label{sec:eigval bundle and EPs}

Let us study the geometry that describes how eigenvalues and eigenvectors depend on the operator. We will find that eigenvalues and eigenvectors form bundles over the non-degenerate operators. This space $N(V)$ of non-degenerate operators will reappear as the region free of singularities. In this way, $N(V)$ shows us where we can use results from geometry, in particular concerning parallel transport, which in turn provides a framework for adiabatic quantum mechanics. We will deal with the eigenvalues in this section, and provide an extended similar argument in the next section concerning the eigenvectors.

\subsection{The spectrum bundle} 

We start by describing a natural abstract model for the energy bands. Namely, when studying EPs, we want to follow eigenvalues as a function of the operator. We wish to view this in a~geometric way.
For example, we wish to view an eigenvalue function $\lambda=\lambda(A)$, with $\lambda(A)$ an~eigenvalue of the operator $A$, as a local section. Note that such functions $\lambda(A)$ are necessarily local; otherwise EPs could not exist. We will call such functions \emph{local eigenvalues}.

We quickly come to the conclusion that we should restrict the operators. Namely, degenerate operators pose a problem as they will form singularities. Indeed, around a degenerate energy, the energy bands do not resemble a smooth manifold. On the other hand, for $A\in N(V)$ such issues do not occur. As the following lemma shows, the implicit function theorem yields that simple eigenvalues always admit an extension to a local eigenvalue. An immediate consequence is that the restriction from $\End(V)$ to $N(V)$ is minimal in order to obtain a smooth structure.
\begin{Lemma} \label{lem:non-deg <-> non-zero derivative}
 Given $A \in \End(V)$, then $A \in N(V)$ if and only if
 \begin{equation*}
 \pder{p}{z}(A,\lambda_i) \ne0, \qquad \forall \lambda_i \in \Spec(A).
 \end{equation*}
\end{Lemma}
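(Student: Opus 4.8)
The plan is to reduce the claim to the classical fact that a root of a univariate polynomial is simple exactly when the derivative of the polynomial is nonzero there, applied to the characteristic polynomial $p(A,z)=\det(zI-A)$. Recall that $A\in N(V)$ means precisely that $\Spec(A)$ consists of $n$ distinct elements, i.e.\ that the monic degree-$n$ polynomial $p(A,\cdot)$ has $n$ distinct roots. So the task is to show that having only simple roots is equivalent to $\pder{p}{z}(A,\lambda_i)\ne 0$ at every $\lambda_i\in\Spec(A)$.

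First I would factor $p(A,z)=\prod_{j=1}^k(z-\mu_j)^{m_j}$ over $\C$, with $\mu_1,\dots,\mu_k$ the distinct eigenvalues and $m_j\ge 1$ their algebraic multiplicities, so that $\sum_j m_j=n$ and $A\in N(V)$ holds iff $k=n$ iff every $m_j=1$. Fixing an eigenvalue $\lambda_i=\mu_j$ of multiplicity $m=m_j$, I would write $p(A,z)=(z-\mu_j)^m g(z)$ with $g(\mu_j)\ne 0$ and differentiate:
\[
 \pder{p}{z}(A,z)=m(z-\mu_j)^{m-1}g(z)+(z-\mu_j)^m g'(z).
\]
Evaluating at $z=\mu_j$ gives $g(\mu_j)\ne 0$ when $m=1$ and $0$ when $m\ge 2$. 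Hence $\pder{p}{z}(A,\lambda_i)\ne 0$ if and only if $\lambda_i$ is a simple root. Assembling the two directions is then immediate: all eigenvalues simple (i.e.\ $A\in N(V)$) forces the derivative to be nonzero at each of them, and conversely the derivative being nonzero at every eigenvalue forces every $m_j=1$, hence $k=n$ and $A\in N(V)$.

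I do not expect a genuine obstacle, as this is essentially the simple-root criterion; the only point requiring care is to reason with the algebraic multiplicities of $p(A,\cdot)$ rather than with diagonalizability or geometric multiplicities, the latter being a separate matter governed by Lemma~\ref{lem:eigenframe}. As a cross-check that dovetails with the discriminant definition of $N(V)$, I would note that $d(A)=\discrim(p(A,z),z)$ vanishes exactly when $p(A,\cdot)$ and its $z$-derivative share a common root; such a common root is necessarily some eigenvalue $\lambda_i$, so $d(A)=0$ iff $\pder{p}{z}(A,\lambda_i)=0$ for some $\lambda_i$, which is precisely the contrapositive of the asserted equivalence.
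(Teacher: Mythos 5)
Your proof is correct: the factorization $p(A,z)=(z-\mu_j)^m g(z)$ with $g(\mu_j)\ne 0$ and the evaluation of $\pder{p}{z}$ at $\mu_j$ is exactly the standard simple-root criterion, and your closing remark correctly reconciles this with the paper's formal (discriminant-based) definition of $N(V)$. The paper states this lemma without proof, treating it as precisely this classical fact, so your argument supplies the intended reasoning and nothing more needs to be said.
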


We can now formalize the bundle which has the local eigenvalues as its local sections. That is, its local sections are of the form $A\mapsto (A,\lambda(A))$ with $\lambda(A)$ a local eigenvalue. This could be used to define the bundle bottom-up, but we prefer to use the following more explicit top-down method. Clearly, the total space of the bundle consists of all pairs $(A,\lambda)$ such that $A\in N(V)$ and $\lambda\in \Spec(A)$. We observe that this set is the zero set of the characteristic polynomial map~$p$, restricted to non-degenerate operators. This viewpoint will form our primary definition\footnote{We remark the similarity with the space $\mathcal{M}$ in \cite{Mehri-Dehnavi2008GeometricInterpretation}. However, as we explicitly list the eigenvalue as a~coor\-dinate, it is not a multi-valued function here.} of the bundle, because of its algebraic convenience. We use the name spectrum bundle: the fiber above $A\in N(V)$ is simply $\Spec(A)$, and the term bundle we will justify in Theorem~\ref{thm:Spec(V) bundle over N(V)}.

\begin{Definition}[spectrum bundle]
 Given the vector space $V$, define its \emph{spectrum bundle} to be the space
 \begin{equation*}
 \Spec(V)=\set{(A,\lambda)\in N(V) \times \C}{p(A,\lambda)=0}.
 \end{equation*}
 Furthermore, we write $\pi_\lambda \from \Spec(V)\to N(V)$ for the projection $(A,\lambda) \mapsto A$.
\end{Definition}

Our first step in proving the bundle property of $\pi_\lambda$ is showing that $\Spec(V)$ is a smooth manifold. This readily follows from the derivative characterization of $N(V)$ in Lemma~\ref{lem:non-deg <-> non-zero derivative}.

\begin{Proposition} 
 The space $\Spec(V)$ is a closed submanifold of $N(V)\times \C$ of complex dimension~$n^2$.
\end{Proposition}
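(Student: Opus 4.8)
The plan is to present $\Spec(V)$ as the zero locus of a single holomorphic function on the ambient complex manifold $N(V)\times\C$ and then invoke the holomorphic regular value theorem. Concretely, I would consider the map
\begin{equation*}
 P\from N(V)\times \C \to \C, \qquad P(A,\lambda)=p(A,\lambda)=\det(\lambda\, \id_V - A),
\end{equation*}
which is holomorphic since the determinant is polynomial in the matrix entries of $A$ and in $\lambda$. By construction $\Spec(V)=P^{-1}(0)$, so as the preimage of the closed set $\{0\}$ under a continuous map, $\Spec(V)$ is closed in $N(V)\times\C$. This disposes of the closedness claim immediately.

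The substance is the submanifold structure. First I would note that $N(V)$ is a complex manifold of complex dimension $n^2$ by Lemma~\ref{lem:props N(V)}, so the ambient space $N(V)\times\C$ has complex dimension $n^2+1$. It therefore suffices to show that $0$ is a regular value of $P$, i.e., that the differential $\dif P$ is surjective (equivalently nonzero) at every point of $\Spec(V)$; the holomorphic regular value theorem then yields that $\Spec(V)$ is a closed complex submanifold of complex codimension $1$, hence of complex dimension $n^2$. To verify the regularity, I would compute the partial derivative of $P$ in the fiber direction,
\begin{equation*}
 \pder{P}{\lambda}(A,\lambda)=\pder{p}{z}(A,\lambda).
\end{equation*}
For any $(A,\lambda)\in\Spec(V)$ we have $A\in N(V)$ and $\lambda\in\Spec(A)$, so Lemma~\ref{lem:non-deg <-> non-zero derivative} guarantees $\pder{p}{z}(A,\lambda)\ne 0$. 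Thus $\dif P$ is already nonzero along the $\lambda$-coordinate, hence surjective onto $\C$, at every zero of $P$.

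There is essentially no hard obstacle left, since the genuinely nontrivial input---the nonvanishing of $\partial p/\partial z$ on non-degenerate operators---has been packaged into Lemma~\ref{lem:non-deg <-> non-zero derivative}. The only points requiring mild care are bookkeeping ones: ensuring one uses the holomorphic (rather than merely smooth) version of the regular value theorem so that the conclusion is about a \emph{complex} submanifold of the stated complex dimension, and keeping the dimension count straight ($n^2+1$ in the ambient space, minus one complex codimension). As a remark, the same partial-derivative computation, read through the complex implicit function theorem, locally expresses $\lambda$ as a holomorphic function of $A$; this produces exactly the local eigenvalue graphs that motivate the definition and anticipates the bundle statement of Theorem~\ref{thm:Spec(V) bundle over N(V)}.
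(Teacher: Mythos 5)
Your proof is correct and follows essentially the same route as the paper: both present $\Spec(V)$ as the zero set of the restricted characteristic polynomial on $N(V)\times\C$ and apply the regular value (submersion) theorem, with surjectivity of the differential supplied by the nonvanishing of $\pder{p}{z}$ from Lemma~\ref{lem:non-deg <-> non-zero derivative}. Your explicit attention to the holomorphic version of the theorem and the codimension-one dimension count is a mild refinement of the paper's phrasing, not a different argument.
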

\begin{proof}
 Consider the restricted characteristic polynomial $p \from N(V) \times \C \to \C$. As $\Spec(V)$ is the zero set of this map, by the Submersion Theorem it suffices to show that 0 is a regular value. Hence we consider the differential $\dif p(A,z)$, which contains the term $\pder{p}{z}(A,z) \dif z$. This is a~surjection whenever $\pder{p}{z}(A,z)\ne0$, which holds on all of $\Spec(V)$ by Lemma~\ref{lem:non-deg <-> non-zero derivative}.
 Hence $\Spec(V)$ is a closed submanifold, of the same dimension as $N(V)$.
\end{proof}

We are now in place to deduce the bundle structure of $\pi_\lambda$. A fiber is of the form $\Spec(A)$ with $A\in N(V)$, hence by definition a set of $n$ distinct points. We thus take the model fiber to be $I_n$. Inspection of $\pi_\lambda$ yields the following argument.
\begin{Theorem} \label{thm:Spec(V) bundle over N(V)}
 The map $\pi_\lambda \from \Spec(V) \to N(V)$ defines a fiber bundle with model fiber $I_n$.
\end{Theorem}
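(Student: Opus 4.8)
The plan is to exhibit an explicit local trivialization of $\pi_\lambda$ over a neighborhood of each operator, using the smooth local eigenvalue functions supplied by Corollary~\ref{cor:local eigvals eigvecs}. Concretely, fix $A_0\in N(V)$ and take $U$ and $\lambda_1,\dots,\lambda_n\from U\to\C$ as in that corollary, so that $\Spec(A)=\{\lambda_1(A),\dots,\lambda_n(A)\}$ for every $A\in U$. Since each such spectrum consists of $n$ distinct numbers, the values $\lambda_1(A),\dots,\lambda_n(A)$ are pairwise distinct for all $A\in U$. Each $\lambda_i$ then determines a smooth map $\sigma_i\from U\to\Spec(V)$, $A\mapsto(A,\lambda_i(A))$, which is a section of $\pi_\lambda$ by construction. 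These $n$ sections are the candidate ``sheets'' of the trivialization.

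I would then show that $\pi_\lambda^{-1}(U)=\bigsqcup_{i=1}^n \sigma_i(U)$ is a decomposition into disjoint \emph{open} subsets, each carried homeomorphically onto $U$ by $\pi_\lambda$. Disjointness and exhaustion are immediate from the fact that the $\lambda_i(A)$ are distinct and list all of $\Spec(A)$: any $(A,\lambda)\in\pi_\lambda^{-1}(U)$ satisfies $\lambda=\lambda_i(A)$ for exactly one $i$. Granting openness, the map $\Phi\from\pi_\lambda^{-1}(U)\to U\times I_n$ sending $(A,\lambda_i(A))\mapsto(A,i)$ is a homeomorphism intertwining $\pi_\lambda$ with the projection $U\times I_n\to U$; its inverse $(A,i)\mapsto\sigma_i(A)$ is continuous because each $\lambda_i$ is, and $\Phi$ itself is continuous precisely because the $\sigma_i(U)$ are open, forcing the band index to be locally constant. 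As $I_n$ carries the discrete topology, this is exactly a local trivialization with model fiber $I_n$; surjectivity of $\pi_\lambda$ is clear, so assembling these trivializations over an open cover of $N(V)$ yields the bundle.

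The step that genuinely uses non-degeneracy, and which I expect to be the only real obstacle, is the openness of each $\sigma_i(U)$ in $\Spec(V)$, equivalently the continuity of the band index in $\Phi$. This rests on the separation of eigenvalues: at a point $(A_0,\lambda_i(A_0))$, continuity of the finitely many distinct functions $\lambda_1,\dots,\lambda_n$ lets me choose, after shrinking $U$, pairwise disjoint open discs around the values $\lambda_j(A_0)$ such that $\lambda_j(A)$ remains in the $j$-th disc for all $A\in U$. Then for $A$ near $A_0$ the only eigenvalue of $A$ lying in the $i$-th disc is $\lambda_i(A)$, so a neighborhood of $(A_0,\lambda_i(A_0))$ in $\Spec(V)$ lies inside $\sigma_i(U)$. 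It is exactly here that passing from $\End(V)$ to $N(V)$ is essential: at a degenerate operator the bands coalesce and no such separation exists, which is why $\Spec(V)$ was defined only over $N(V)$.

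As a cross-check, one may alternatively observe that $\Spec(V)$ is the pullback along $\Spec\from N(V)\to\binom{\C}{n}$ of the tautological $n$-point bundle associated to the principal $S_n$-bundle $q$ appearing in diagram~\eqref{eq:diagram SpecXi}, whence the bundle property is inherited from that of $q$. The direct argument above is more self-contained, but this remark explains conceptually why $I_n$, with its $S_n$-action, is the correct model fiber.
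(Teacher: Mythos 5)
Your proof is correct, and it ends with exactly the same trivialization as the paper --- the map $(A,i)\mapsto(A,\lambda_i(A))$ built from $n$ distinct smooth local eigenvalues --- but the justification of local triviality follows a genuinely different route. The paper's proof runs through general machinery: the implicit function theorem (via Lemma~\ref{lem:non-deg <-> non-zero derivative}) makes $\pi_\lambda$ a surjective submersion, a dimension count upgrades this to a local diffeomorphism, properness of $\pi_\lambda$ is then invoked to conclude it is a covering map (citing Lee), and constancy of the fiber cardinality yields the $I_n$-bundle. You instead take the local eigenvalues from Corollary~\ref{cor:local eigvals eigvecs} (i.e., from local sections of the principal bundle $\Xi$, rather than from the implicit function theorem) and verify local triviality by hand: your disjoint-discs separation argument shows that, after shrinking $U$, each sheet satisfies $\sigma_i(U)=(U\times D_i)\cap\Spec(V)$ and hence is open, which is precisely the covering property that the paper extracts from properness. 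Your route buys self-containedness --- no properness claim and no appeal to the proper-local-diffeomorphism theorem --- and it isolates exactly where non-degeneracy enters (eigenvalue separation), which the paper leaves implicit; the paper's route buys brevity and records explicitly that $\pi_\lambda$ is a covering map, a viewpoint reused in Section~\ref{sec:Spec(H)}, though nothing is lost on your side since a bundle with discrete fiber is automatically a covering. Two minor polish points: note that your openness argument, applied at an arbitrary $A_1\in U$, gives openness at every point of a sheet (not just at the center $(A_0,\lambda_i(A_0))$); and since the $\lambda_i$ are smooth and the sheets are open submanifolds, your homeomorphism $\Phi$ is automatically a diffeomorphism, so the bundle is obtained in the smooth category, as the theorem asserts. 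Your closing remark that $\pi_\lambda$ is the pullback along $\Spec$ of the $n$-point bundle associated to $q$ is also sound given diagram~\eqref{eq:diagram SpecXi}, and mirrors how the paper itself later builds $\Spec(H)$ as a pullback of $\Spec(V)$.
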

\begin{proof}
 First, $\pi_\lambda$ is a surjective submersion; the implicit function theorem provides local eigenvalues, and so local sections, through any point of $\Spec(V)$. By dimension count, $\pi_\lambda$ is a local diffeomorphism. As $\pi_\lambda$ is also proper, it is a covering map \cite{Lee2012IntroductionManifolds}. Then, as each fiber has exactly~$n$ elements, $\pi_\lambda$ is an $I_n$-bundle. Indeed, a local trivialization is a map of the form
 \begin{align}
 \phi \from\ U \times I_n &\to \Spec(V)|_U,\nonumber
 \\
 (A,i) &\mapsto (A,\lambda_i(A))\label{eq:local triv Spec(V)}
 \end{align}
 with $\lambda_1,\dots,\lambda_n$ distinct local eigenvalues.
\end{proof}

This result formalizes the idea that $\Spec(V)$ is locally the union of the graphs of $n$ local eigenvalues. We remark that the inverse of the local trivialization $\phi$ in equation~\eqref{eq:local triv Spec(V)} above can be written as $(A,\lambda)\mapsto(A,\#(A,\lambda))$, where $\# \from \Spec(V)|_U \to I_n$ yields the label of the graph in which a point lies. This map $\#$ one can interpret as the ``local labeling'' induced by the chosen local eigenvalues. We will see this map again when discussing the eigenvectors in Section~\ref{sec:eigvec bundle and GPs}.

\subsection{Geometry behind swaps of energies} \label{sec:Spec(H)}

We will now treat how the abstract theory discussed above facilitates the study of instantaneous energies in adiabatic quantum mechanics, including the swaps of energies related to exceptional points (EPs). We assume that instantaneous eigenstates will remain instantaneous eigenstates. Hence instantaneous energies are well-defined. The change of energy in time we will treat formally using covering theory.

To start, we assume that an experimental set-up is captured by a Hamiltonian operator. Typically, this Hamiltonian depends on the available system parameters, e.g., cavity size and field strength in optics and photonics (see, e.g., the review in \cite{Miri2019ExceptionalPhotonics}).
This induces a manifold of system parameter values, which we denote by $M$. We thus obtain a family of experimental set-ups, and so a family of Hamiltonians, described by a map
\begin{equation*}
 H \from\ M \to \End(V),
\end{equation*}
which sends a configuration of system parameters $x\in M$ to the Hamiltonian $H(x)$ corresponding to that configuration. For simplicity, we will assume that $M$ and $H$ are smooth. We will refer to~$H$ as the Hamiltonian family, where each $H(x)$ is a Hamiltonian operator on $V$. We do however not require the Hamiltonian operators to be Hermitian. We refer to the eigenvalues as energies and to the eigenvectors as eigenstates.

The idea is now to vary the system parameters. In practice this means we follow a path $\gamma$ in~$M$, whose initial point $x_0$ serves as a reference. This results in the time-dependent Hamilto\-nian~$H(\gamma(t))$. A state $\psi(t)$ is called an instantaneous eigenstate at time $t$ if it satisfies the eigenvalue problem $H(\gamma(t))\psi(t)=E(t)\psi(t)$, with $E(t)$ an energy of $H(\gamma(t))$. The idea of the adiabatic approximation is that this relation is preserved in time, at least approximately. However, this means one first has to make sure that the function $E(t)$ is well-defined.

This fundamental fact can now easily be deduced from the covering properties of $\Spec(V)$. First, we remark that in order to unambiguously follow a specific energy level of $H(\gamma(t))$, the energy level may not become degenerate at any time. Hence we require $H(\gamma(t))$ to be non-degenerate for all $t$, i.e., $t\mapsto H(\gamma(t))$ is a path in $N(V)$. Let $E_0$ be the energy level of the initial Hamiltonian $H(x_0)$ that we wish to follow in time. Clearly, this defines the point $(H(x_0),E_0)$ in the fiber of $\Spec(V)$ above $H(x_0)$. Hence, it specifies a unique lift of $H\circ \gamma$ to $\Spec(V)$, which is of the form $(H(\gamma(t)),E(t))$ for some function $E=E(t)$. In this way, we obtain the instantaneous energy for all relevant times in a formal way.

The intuition of this lifting argument is to follow the initial energy along the energy bands of the family $H$. If $H$ is fixed, it is convenient to consider these energy bands directly. This can be done by taking the pull-back of the bundle $\Spec(V) \to N(V)$ along $H$. To obtain this pull-back, we must restrict the system parameters to those were $H$ is non-degenerate, i.e., we must restrict $M$ to the subspace
\begin{equation*}
 N(H):=H^{-1}(N(V))=\set{x\in M}{H(x) \text{ has non-degenerate energy levels}}.
\end{equation*}
The energy bands of $H$, with the degenerate points omitted, then form the space
\begin{equation*}
 \Spec(H):=\set{(x,E) \in N(H)\times\C}{E\in\Spec(H(x))},
\end{equation*}
i.e., the space of all pairs of a specific configuration $x$ of the system parameters and an energy~$E$ of the system for this configuration. We call $\Spec(H)$ the \emph{spectrum bundle of $H$}; it has a~natural projection $\pi_\lambda^H \from (x,E)\mapsto x$ to $N(H)$, so that the fiber above $x$ is simply $\Spec(H(x))$. The bundle property of $\pi_\lambda^H$ is guaranteed by the pull-back construction, and relates to $\pi_\lambda$ as given by the pull-back diagram below:
\begin{equation*}
 \begin{tikzcd}
 \Spec(H) \ar{r} \ar{d}{\pi_\lambda^H} & \Spec(V) \ar{d}{\pi_\lambda} && (x,E) \ar[mapsto]{r}\ar[mapsto]{d} & (H(x),E) \ar[mapsto]{d}\\
 N(H) \ar{r}{H} & N(V), && x \ar[mapsto]{r} & H(x).
 \end{tikzcd}
\end{equation*}
We see that the path $\gamma$ can be lifted directly to $\Spec(H)$, which is then both an intuitive and formally correct way to obtain the function $E(t)$ directly. Indeed, by assumption $\gamma$ lies in $N(H)$, the initial point is now $(x_0,E_0)$, and the lift is of the form $(\gamma(t),E(t))$.

We see that the study of $\Spec(H)$ is a straightforward generalization of Kato's original setting~\cite{Kato1966PerturbationOperators}. There, $H$ was assumed to depend analytically on a single complex variable, so that the energies were locally given by analytic functions. By analytically continuing them along a~path around a degeneracy, the eigenvalues could return in a different order, and such a degeneracy was called an EP. Here, we see that $\Spec(H)$ allows us to drop the assumption of analytic depen\-dence; instead of analytic continuation we can use lifting along a covering map. Hence we broadened the perspective from complex analytic to general continuous dependence. Note that this is necessary to include Hermitian or $\PT$-symmetric Hamiltonians, as these are defined by a conjugate-linear operation and hence do not fit in an analytic setting.

The study of the permutations of energies in the family $H$ is in this construction tantamount to the study of the covering properties of $\Spec(H)$. In fact, the permutations of energies of $H(x_0)$ are naturally described by the monodromy action at $x_0$, as we will argue now.
First, as we wish to compare energies, we should restore the original system configuration at the end (see also \cite{Pap2018Non-AbelianPoints} for a practical argument for this). That is, the path $\gamma$ in $N(H)$ should return to our reference $x_0\in N(H)$. In~other words, $\gamma$ should be a loop based at $x_0$. Let us write $\Loop(N(H),x_0)$ for the set of such loops. We are then interested in how the energies of~$H(x_0)$ return upon following them along $\gamma$. For an energy $E\in \Spec(H(x_0))$, this is given by the lift of $\gamma$ to $\Spec(H)$; the final energy is the endpoint of the lift, which we write as $p_\gamma(E)$. Doing this for every energy of~$H(x_0)$ yields the map $p_\gamma \from \Spec(H(x_0)) \to \Spec(H(x_0))$. This map is a bijection/permutation of $\Spec(H(x_0))$; its inverse is given by traversing $\gamma$ in the opposite direction. Going over all loops in $\Loop(N(H),x_0)$ then yields the group of all possible permutations of $\Spec(H(x_0))$, namely the group
\begin{equation*}
 \Hol^{\Spec(H)}_{N(H)}(x_0)=\set{p_\gamma\in \Aut(\Spec(H(x_0)))}{\gamma\in \Loop(N(H),x_0)},
\end{equation*}
where the automorphisms are simply bijections. This is indeed a holonomy group as lifting along a covering can be regarded as parallel transport. We remark that this group was called $\Lambda(x_0)$ in \cite{Pap2018Non-AbelianPoints}, which is a straightforward generalization of the permutation group studied by Kato \cite{Kato1966PerturbationOperators} from complex analytic to continuous Hamiltonian families.

\begin{Example} \label{ex:start EP2}
 Let us consider a standard case, namely a family with EP2, where an EP2 is an~EP permuting 2 energies. Explicitly, consider $H\from \C \to \End\big(\C^2\big)$ given by
 \begin{equation*}
 H(x)=
 \begin{pmatrix}
 1&x\\
 x&-1
 \end{pmatrix}\!.
 \end{equation*}
 The eigenvalues are given by the multi-valued function $\sqrt{1+x^2}$, so $N(H)=\C\setminus\{\pm {\rm i}\}$ and $\Spec(H)$ is the graph of this multi-valued function, which in this case is a Riemann surface. Let us take a branch cut, and write $E_\pm(x)$ for the two energy branches. Clearly, $\Spec(H(x_0))=\{E_+(x_0),E_-(x_0)\}$, and if $\gamma$ is a loop based at $x_0$ encircling an EP once, it induces the map
 \begin{equation*}
 p_\gamma \from\ E_\pm(x_0) \mapsto E_\mp(x_0).
 \end{equation*}
 Of course, if $\gamma$ does not encircle any EP, we obtain $p_\gamma=\id_{\Spec(H(x_0))}$.
\end{Example}

The association $\gamma \mapsto p_\gamma$ brings us to a monodromy action as follows. The map $p_\gamma$ is invariant under continuous deformation of $\gamma$, i.e., it only depends on its homotopy class $[\gamma]$. The set of such homotopy classes of loops based at $x_0$ is the (based) fundamental group $\pi_1(N(H),x_0)$. The exchange of energies can then be expressed by the homomorphism $\pi_1(N(H),x_0)\to \Hol^{\Spec(H)}_{N(H)}(x_0)$, $[\gamma] \mapsto p_\gamma$, or equivalently as an action of $\pi_1(N(H),x_0)$ on $\Spec(H(x_0))$. This is known as the monodromy action, which we write as
\begin{align*}
 \pi_1(N(H),x_0) \times \Spec(H(x_0)) &\to \Spec(H(x_0)),
 \\
 [\gamma]\cdot (x_0,E)&=(x_0,p_\gamma(E)).
 \end{align*}
A picture to have in mind is Figure~\ref{fig:swapping_lift}; following the energy level along $\gamma$ defines a path in $\Spec(H)$, which may end at a different energy. An explicit example we treat in Example~\ref{exmp:monodromy action EP2} below. We~note that the group $\pi_1(N(H),x_0)$ is typically non-trivial. Indeed, by Lemma~\ref{lem:props N(V)} the space $N(V)$ has real codimension 2 in $\End(V)$ and so a non-trivial fundamental group, which then holds for a generic $N(H)$ as well. We remark that the relevance of the monodromy action for EPs was reported in \cite{Tanaka2015BlochHolonomy,Tanaka2017PathEvolution}. The space used there is either $\Spec(H)$ or a related space we obtain in Section~\ref{sec:holonomy}.

\begin{figure}[h]
 \centering
 \includegraphics[scale=.28]{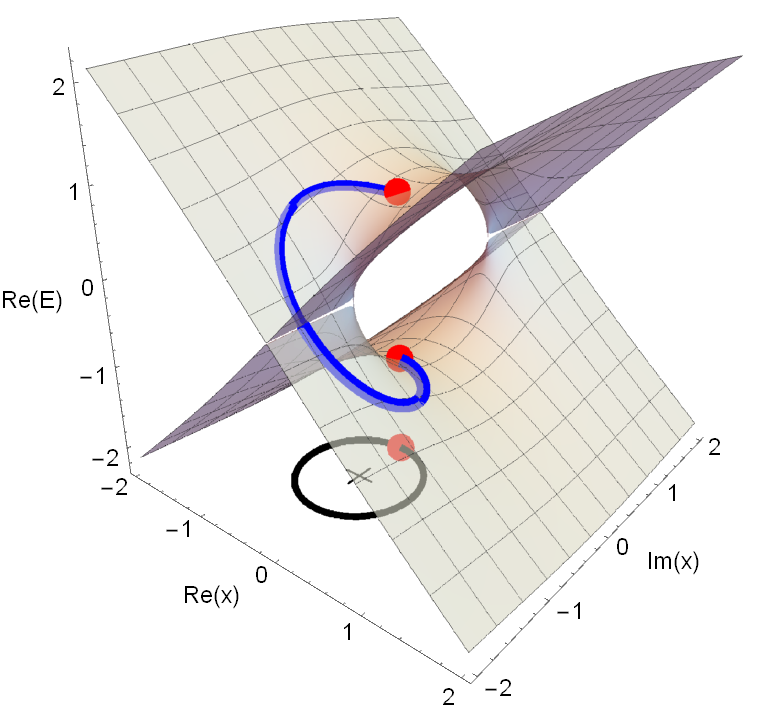}
 \caption{Exchange of energies around an EP is given by the monodromy of the chosen path. Plotted is the real part of the energy levels of the Hamiltonian family in Example~\ref{ex:start EP2}. Upon changing system parameters (black circle) from the reference value (red dot below), the energy moves (blue line) along the sheets and does not return to itself. Such behavior only happens around special degeneracies, namely the EPs. Here the black $\mathbf{+}$ marks the EP of the system related to the drawn exchange.}
 \label{fig:swapping_lift}
\end{figure}

\begin{Example} \label{exmp:monodromy action EP2}
 Continuing Example~\ref{ex:start EP2}, it follows that $\pi_1(N(H),x_0)$ is isomorphic to the fundamental group of the ``figure 8'', which is a free group on two generators, for any reference $x_0$. Picking $x_0=0$, the monodromy action maps both generators of $\pi_1(N(H),0)$ to the interchange of $+1$ and $-1$ in $\Spec(H(0))$.
\end{Example}

The realization of $\Spec(H)$ as a pull-back of $\Spec(V)$ allows us to distinguish between inherent geometric properties coming from $\Spec(V)$ versus artifacts of the family $H$. Such artifacts can result in different, i.e., non-isomorphic, spaces $\Spec(H)$ for the different $H$. For example, we can explicitly add an artifact to a particular $H\from M\to \End(V)$ by introducing a fictitious system parameter, i.e., an artificial variable that does not correspond to any change in the experimental set-up. This increases the dimensions of both $M$ and $\Spec(H)$, but there is no additional physical information. Another example is to include multiple degenerate levels in $H$ which do not depend on the system parameters. In this case $N(H)$ becomes empty. Of course, these examples are highly artificial, yet we see it is advisable to be careful not to include any non-physical information in $H$.

\subsubsection{Merging path method} \label{sec:merging path method}

A common way to demonstrate the existence of an EP in an experimental set-up is to follow a~loop in parameter space and trace the eigenvalues accordingly. If the eigenvalues do not return to themselves, then there must be (at least one) EP structure inside the loop, provided that any path in the parameter space $M$ can be contracted to a point. Such an argument can already be found in \cite{Heiss1999PhasesRepulsion}. We observe that this method does not attribute a swap to a particular EP~-- these are not even present in $N(H)$~-- but rather to a homotopy class of loops, which in turn signals the presence of a degeneracy structure. In practice, the traversing of a loop is done in a~stroboscopic way by measuring the energy for sufficiently close discrete parameter values (e.g., in \cite{Xu2016TopologicalPoints}). The formal mathematics behind this argument is straightforward.

Let us go through this merging path method step by step. The change of system parameters is given by a loop $\gamma$ in $N(H)$. The spectrum then changes according to the loop $\Spec(H(\gamma(t)))$ in $\binom{\C}{n}$, with $n$ the dimension of the state space. To extract the permutation of the energies, note that we cannot work in $\binom{\C}{n}$ alone as the (unordered) spectrum will return to itself regardless of the presence of EPs. We must thus lift the loop $\Spec(H(\gamma(t)))$ to $\Cun$ in order to observe and extract a permutation. This argument translates directly into the diagram
\begin{equation} \label{eq:merging path diagram}
 \begin{tikzcd}
 &&\Cun \ar{d}\\
 N(H) \ar[dashed]{r}{H}& N(V) \ar{r}{\Spec} & \binom{\C}{n},
 \end{tikzcd}
\end{equation}
which is the bottom right corner of diagram~\eqref{eq:summarizing diagram} plus the adaptation to the specific system defined by~$H$.

\subsubsection{Real eigenvalue case}

The space $\Spec(H)$ can have additional properties if the family $H$ satisfies additional assumptions. One interesting assumption on $H$ is that $H(x)$ has real energies for all $x\in M$. This occurs, e.g., when $H(x)$ is always Hermitian w.r.t.\ some given inner product on $V$, or when all~$H(x)$ have exact $\PT$ symmetry (see \cite{Mostafazadeh2003ExactHermiticity} on the relation between these). In this case, $\Spec(H)$ is trivial over $N(H)$, which means that all energy bands are globally disconnected. In particular, no exchanges can occur and EPs cannot be present. We first prove this on the abstract level, and then use the pull-back by $H$ to go to $\Spec(H)$. We remark that the argument also holds if the eigenvalues are taken in another totally ordered subset of $\C$.
\begin{Proposition}
 Write $R(V)$ for the subspace of $N(V)$ of matrices with real eigenvalues. The bundle $\Spec(V)$ restricted to $R(V)$ is trivial.
\end{Proposition}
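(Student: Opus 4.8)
The plan is to exploit the total order on the reals to construct a \emph{global} labeling of the eigenvalues, which trivializes the covering $\pi_\lambda$ over $R(V)$. For $A\in R(V)$ the spectrum $\Spec(A)$ consists of $n$ distinct real numbers, so it carries a canonical ordering by size. This yields $n$ well-defined functions $\lambda_1,\dots,\lambda_n\from R(V)\to\R$ with $\lambda_1(A)<\cdots<\lambda_n(A)$ and $\Spec(A)=\{\lambda_1(A),\dots,\lambda_n(A)\}$ for every $A$. I would then show these sorted functions are continuous (indeed smooth) and use them to build the trivialization $\phi\from R(V)\times I_n\to \Spec(V)|_{R(V)}$, $(A,i)\mapsto(A,\lambda_i(A))$.

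First I would establish continuity of the $\lambda_i$. The roots of the characteristic polynomial depend continuously, as an unordered multiset, on $A\in\End(V)$; restricted to $R(V)$, where the roots are real and pairwise distinct, sorting them by magnitude produces continuous functions, since a strict ordering among continuously varying, non-colliding real numbers cannot change. For smoothness I would invoke Corollary~\ref{cor:local eigvals eigvecs} (equivalently, the local trivialization~\eqref{eq:local triv Spec(V)} of Theorem~\ref{thm:Spec(V) bundle over N(V)}): around any $A_0\in R(V)$ there is a neighborhood $U$ in $N(V)$ carrying smooth local eigenvalues $\mu_1,\dots,\mu_n$, which I may arrange so that $\mu_1(A_0)<\cdots<\mu_n(A_0)$. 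Shrinking $U$, the strict inequalities persist on $U\cap R(V)$ by continuity and distinctness, so the size-ordering $\lambda_i$ agrees there with the restriction of the smooth $\mu_i$; hence each $\lambda_i$ is smooth on $R(V)$.

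With the ordered eigenvalues in hand, the trivialization is immediate. The map $\phi$ above is a morphism of bundles over $R(V)$, and it is fiberwise bijective because the $\lambda_i(A)$ exhaust $\Spec(A)$ without repetition. Its inverse is the labeling map $(\pi_\lambda,\#)$, where $\#\from \Spec(V)|_{R(V)}\to I_n$ sends $(A,\lambda)$ to the rank of $\lambda$ in $\Spec(A)$; this is continuous by the same argument that makes the $\lambda_i$ continuous. Hence $\phi$ is a homeomorphism (and a diffeomorphism on the smooth locus), so $\Spec(V)|_{R(V)}\cong R(V)\times I_n$ is trivial.

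I expect the only genuinely delicate point to be the regularity of the sorted eigenvalues, precisely because $R(V)$ need not be open in $N(V)$: one cannot simply read off the ordered functions from an ambient open neighborhood without restricting to $R(V)$, and one must check that the size-ordering is stable under small perturbations \emph{within} $R(V)$. This is exactly what distinctness together with the reality of the spectrum guarantees, and it is also where the hypothesis enters; for any other totally ordered subset of $\C$ the same argument goes through verbatim, as remarked before the proposition.
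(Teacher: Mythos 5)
Your proposal is correct and follows exactly the paper's argument: order the real eigenvalues from lowest to highest to obtain a global labelling $\Spec(V)|_{R(V)} \to R(V)\times I_n$, $(A,\lambda_i)\mapsto (A,i)$, which is the trivialization. The paper states this in two sentences and leaves the continuity of the sorted eigenvalue functions implicit; your verification of that regularity (via the local smooth eigenvalues of Corollary~\ref{cor:local eigvals eigvecs} and the persistence of strict inequalities) is a fair, if more detailed, rendering of the same proof.
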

\begin{proof}
 If $A \in R(V)$, then there is a unique ordering of $\Spec(A)$ by indexing the eigenvalues from lowest to highest $\lambda_1(A) < \dots <\lambda_n(A)$. This establishes a global labelling map $\Spec(V)|_{R(V)} \to R(V) \times I_n$ defined as $(A,\lambda_i) \mapsto (A,i)$, which is a homeomorphism.
\end{proof}
\begin{Corollary} \label{cor:real eigenvalues->trivial}
 If $H(x)$ has real eigenvalues for all $x\in N(H)$, then $\Spec(H)$ is a trivial bundle. In particular, there are $n$ distinct energy bands $E_1(x),\dots,E_n(x)$ defined continuously on all of~$N(H)$, and the family $H$ has no EPs.
\end{Corollary}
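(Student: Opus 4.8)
The plan is to deduce this from the preceding Proposition by a pull-back argument, exploiting that $\Spec(H)$ was constructed as the pull-back of $\pi_\lambda \from \Spec(V) \to N(V)$ along $H$. The hypothesis that $H(x)$ has real eigenvalues for every $x \in N(H)$ says precisely that $H$ maps $N(H)$ into the subspace $R(V)$; that is, the family factors as $N(H) \xrightarrow{H} R(V) \hookrightarrow N(V)$. Since pull-back is functorial along this factorization, $\Spec(H)$ coincides with the pull-back of the restricted bundle $\Spec(V)|_{R(V)}$ along $H \from N(H) \to R(V)$. By the Proposition this restricted bundle is trivial, and as the pull-back of a trivial bundle is again trivial, I would conclude immediately that $\Spec(H)$ is trivial over $N(H)$.

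Next I would make the energy bands explicit, since the Proposition not only asserts triviality but supplies a canonical trivialization: the global labeling $A \mapsto (A,i)$ induced by the unique increasing ordering $\lambda_1(A) < \dots < \lambda_n(A)$ of the real spectrum. Pulling this back along $H$, I would set $E_i(x) := \lambda_i(H(x))$ for $x \in N(H)$. Each $E_i$ is continuous as a composite of continuous maps, is defined on all of $N(H)$, and the strict inequalities $E_1(x) < \dots < E_n(x)$ persist pointwise; hence the $n$ bands are everywhere distinct and never cross.

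Finally I would translate triviality into the absence of EPs. A trivial covering has trivial monodromy: the global sections $x \mapsto (x,E_i(x))$ show that the lift of any loop $\gamma \in \Loop(N(H),x_0)$ returns each energy to itself, so $p_\gamma = \id$ for all such $\gamma$ and the holonomy group $\Hol^{\Spec(H)}_{N(H)}(x_0)$ is trivial. Since, as discussed in Section~\ref{sec:Spec(H)}, an EP is signaled exactly by a nontrivial permutation of energies around some loop, the vanishing of all such permutations means $H$ has no EPs. I expect no genuine obstacle here beyond bookkeeping; the only point requiring a moment's care is the identification of the two pull-backs, which rests on checking that under the real-spectrum hypothesis $H$ genuinely factors through $R(V)$, after which the triviality and the trivial monodromy follow formally.
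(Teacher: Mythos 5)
Your proposal is correct and follows exactly the route the paper intends: the paper states before the Proposition that it will ``first prove this on the abstract level, and then use the pull-back by $H$ to go to $\Spec(H)$,'' and your argument---factoring $H$ through $R(V)$, pulling back the trivial restricted bundle $\Spec(V)|_{R(V)}$, and reading off the ordered energy bands and trivial monodromy---is precisely that pull-back step spelled out. The paper leaves the Corollary without a separate written proof, so your proposal supplies the same argument in full detail.
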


\section{Eigenvector bundle and geometric phases} \label{sec:eigvec bundle and GPs}

In the previous section we showed how the energies of an adiabatic quantum system, in particular their exchanges around EPs, can be treated using the covering space $\Spec(V)$. We will now extend this formalism to eigenstates, where again non-Hermitian Hamiltonians are allowed.

\subsection{The eigenvector bundle} 

Let us start by studying how eigenvectors vary with the operator. We will use the following notation. The set of all eigenvectors of an operator $A\in \End(V)$ corresponding to an eigenvalue $\lambda \in \Spec(A)$ we denote as
\begin{equation*}
 \Eig_\lambda(A)=\set{v\in V\setminus\{0\}}{Av=\lambda v}=\ker(A-\lambda I)\setminus\{0\}.
\end{equation*}
We call this the space of eigenvectors, or eigenvector space, of $A$ corresponding to $\lambda$. This terminology emphasizes that the zero vector is excluded. In this way, each $\Eig_\lambda(A)$ is a free $\C^\times$-manifold; any non-zero multiple of an eigenvector is again an eigenvector. Similarly, we define the eigenvector space $\Eig(A)$ of $A$ to be the set of all eigenvectors of $A$. Naturally,
\begin{equation*} 
 \Eig(A)=\bigsqcup_{\lambda\in \Spec(A)} \Eig_\lambda(A),
\end{equation*}
which is immediately the partition of $\Eig(A)$ into its connected components. We see that $\Eig(A)$ is also endowed with $\C^\times$-scaling, but need not be a manifold as the eigenvector spaces of the eigenvalues need not have the same dimension. However, in case $A\in N(V)$ clearly $\Eig(A)$ is a~$\C^\times$-manifold of complex dimension 1, and the above decomposition shows $\Eig(A)$ as a union of $\C^\times$-torsors.

Here we already found a hint that also concerning eigenvectors we need to restrict to non-degenerate operators. Indeed, if we wish to view eigenvectors varying with the operator as local sections of a bundle, i.e., locally defined functions $v(A)$ so that $v(A)\in \Eig(A)$, then degenerate operators will again give rise to singularities. We encounter the same situation as before, be it with more facets. The problem with eigenvalues was that the number of distinct ones suddenly drops at a degenerate operator. Here, this implies that $\Eig(A)$ suddenly consists of less than~$n$ connected components, which renders a bundle structure impossible. In addition, we see for degenerate yet diagonalizable operators that the eigenvector space can consist of parts with different dimension, hence $\Eig(A)$ is not even a manifold for such operators. Hence we again restrict to non-degenerate operators. The space that we obtain this way we state in a definition because of its importance.
\begin{Definition}[eigenvector bundle]
 Given the vector space $V$, define its \emph{eigenvector bundle} to be the space
 \begin{equation*}
 \Eig(V)=\set{(A,\lambda,v)\in \Spec(V) \times V\setminus\{0\}}{v\in\Eig_\lambda(A)}.
 \end{equation*}
\end{Definition}
Again, we have used the term bundle straightaway, and in the remainder of this section we will justify this term.

In fact, we will find that $\Eig(V)$ is a bundle with respect to two different projections. The idea is that we can view $\Eig(V)$ as a union of total eigenvector spaces or as a disjoint union of individual eigenrays;
\begin{equation*}
 \Eig(V)=\bigsqcup_{A\in N(V)}\Eig(A)=\bigsqcup_{(A,\lambda)\in \Spec(V)}\Eig_\lambda(A).
\end{equation*}
Both of these viewpoints come with their own projection. For the first, one projects \mbox{$(A,\lambda,v) \!\mapsto\! A$}, which we write as the map $\pi_{\lambda v} \from \Eig(V) \to N(V)$. For the second,
one only omits the eigenvector and projects to $(A,\lambda)$, which defines a map $\pi_v \from \Eig(V)\to \Spec(V)$. These two projections are related by the projection $\pi_\lambda$, as summarized in the diagram below:
\begin{equation} \label{eq:Eig(V) Spec(V) N(V) triangle}
 \begin{tikzcd}[column sep=1em]
 \Eig(V) \ar{rr}{\pi_v} \ar[swap]{rd}{\pi_{\lambda v}} && \Spec(V) \ar{ld}{\pi_\lambda} && (A,\lambda,v)\ar[mapsto]{rr} \ar[mapsto,swap]{rd} && (A,\lambda) \ar[mapsto]{ld}\\
 &N(V), &&&& A.
 \end{tikzcd}
\end{equation}

Our first step in proving the bundle claims is to show that $\Eig(V)$ is a smooth manifold. Clearly $\Eig(V)$ is the zero set of
\begin{align*} 
 P \from\ \Spec(V) \times V \setminus \{0\} &\to V,
 \\
 (A,\lambda,v) &\mapsto (\lambda I-A)v.
 \end{align*}
In this way we obtain the following structure on $\Eig(V)$.

\begin{Proposition}
 The space $\Eig(V)$ is a closed algebraic $\C^\times$-submanifold of $\Spec(V) \times V\setminus\{0\}$ of complex dimension $n^2+1$.
Moreover, the projection maps $\pi_{\lambda v}$ and $\pi_v$ are smooth.
\end{Proposition}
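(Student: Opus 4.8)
The plan is to realize $\Eig(V)$ as the (transverse) zero locus of the map $P$ and to read off all the claims from there. Closedness and the algebraic structure are immediate: $\Eig(V)=P^{-1}(0)$, and the components of $P(A,\lambda,v)=(\lambda I-A)v$ are polynomials in the entries of $A$, in $\lambda$, and in $v$, so $\Eig(V)$ is an algebraic subset, closed because $\{0\}$ is closed. The $\C^\times$-action $z\cdot(A,\lambda,v)=(A,\lambda,zv)$ is smooth, free (as $v\neq0$) and preserves $\Eig(V)$, since a nonzero rescaling of an eigenvector is again an eigenvector; hence any submanifold structure we find is automatically $\C^\times$-invariant. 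The real work is therefore to show that $\Eig(V)$ is a smooth embedded submanifold of complex dimension $n^2+1$.

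Here the naive Submersion Theorem, as used for $\Spec(V)$, does \emph{not} apply: $P$ maps into the $n$-dimensional space $V$, so a regular value would force codimension $n$ and dimension $n^2$, one short of the claim. The reason is a genuine rank drop. Differentiating along a tangent direction $(\dot A,\dot\lambda,\dot v)$ gives $\dif P=\dot\lambda\,v-\dot A\,v+(\lambda I-A)\dot v$, and for every $v$ the term $(\lambda I-A)\dot v$ lies in $\im(\lambda I-A)$. Moreover, pairing with the eigencovector $\theta$ (the left null vector of $\lambda I-A$, so $\theta(\lambda I-A)=0$) and using that tangency to $\Spec(V)$ forces the first-order perturbation identity $\dot\lambda=\theta(\dot A v)/\theta(v)$ on $\Eig(V)$, one finds $\theta(\dif P)=0$. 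Thus on $\Eig(V)$ the image of $\dif P$ is exactly the hyperplane $\im(\lambda I-A)=\ker\theta$, of complex dimension $n-1$; this codimension $n-1$ is precisely what yields $n^2+1$.

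The key point is that $P$ itself already takes values in this hyperplane, since $(\lambda I-A)v\in\im(\lambda I-A)$ for any $v$. Because $\lambda$ is a simple eigenvalue, $\lambda I-A$ has constant rank $n-1$ on $\Spec(V)$, so the images $\im(\lambda I-A)$ assemble into a smooth rank-$(n-1)$ vector subbundle $\mathcal{E}$ of the trivial bundle $\Spec(V)\times V$ (a smooth eigencovector $\theta$ cutting out each fiber is available locally from Corollary~\ref{cor:local eigvals eigvecs} together with Lemma~\ref{lem:eigenframe}(2)). Viewing $P$ as a section of the pullback of $\mathcal{E}$, I would show it is transverse to the zero section: the vertical derivative is computed by varying $v$ alone, $\dot v\mapsto(\lambda I-A)\dot v$, whose image is all of $\im(\lambda I-A)=\mathcal{E}_{(A,\lambda,v)}$. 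Concretely, locally choosing a linear projection $\rho\from V\to\C^{n-1}$ restricting to an isomorphism on $\ker\theta$, the composite $\tilde P=\rho\circ P$ has the same zero set as $P$ and is now a genuine submersion, so the Submersion Theorem makes $\Eig(V)$ locally a submanifold of codimension $n-1$, i.e.\ of complex dimension $(n^2+n)-(n-1)=n^2+1$. Patching over $\Spec(V)$ gives the global submanifold.

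Finally, $\pi_{\lambda v}$ and $\pi_v$ are the restrictions to the embedded submanifold $\Eig(V)$ of the smooth coordinate projections $\Spec(V)\times V\setminus\{0\}\to N(V)$ and $\Spec(V)\times V\setminus\{0\}\to\Spec(V)$, hence smooth. I expect the one real obstacle to be exactly the rank deficiency flagged above: recognising that $P$ is not a submersion onto $V$ but a transverse section of the rank-$(n-1)$ image bundle $\mathcal{E}$, which is both the reason the dimension is $n^2+1$ and the device that makes the Submersion Theorem applicable.
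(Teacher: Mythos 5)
Your proof is correct, but it is not the paper's argument, and the difference is substantive. The paper works with the same map $P$ and the same differential $\dif P=(\lambda I-A)\dif v+\dif(\lambda I-A)v$, but then asserts that $\dif P$ has rank at most $n-1$ ``as $P$ is constant on the $\C^\times$-orbits'' and concludes via the constant rank theorem that the zero set is a closed submanifold of codimension $n-1$. Two things are shaky there, and your route repairs both. First, $P$ is not invariant but equivariant, $P(A,\lambda,zv)=zP(A,\lambda,v)$, so the stated reason for the rank bound is off; the correct reason is exactly your computation that $\theta\circ\dif P=0$ at points of $\Eig(V)$, using the tangency identity $\dot\lambda=\theta(\dot A v)/\theta(v)$. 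Second, and more seriously, the rank of $\dif P$ equals $n-1$ \emph{only} on $\Eig(V)$: writing $v=c\,v_\lambda+w$ with $v_\lambda$ the true eigenvector, $\theta(v_\lambda)=1$ and $\theta(w)=0$, the same computation gives $\theta\bigl(\dif P(\dot A,\dot\lambda,\dot v)\bigr)=-\theta(\dot A w)$, which can be made nonzero whenever $w\neq 0$; so the rank jumps to $n$ at points arbitrarily close to, but off, the zero set, and the constant-rank hypothesis fails on every neighborhood of $\Eig(V)$. Constant rank on the level set alone is not sufficient for the constant rank theorem's conclusion (e.g., $f(x,y)=\bigl(x^2-y^2\bigr)^2$ has $\dif f$ of constant rank $0$ on its zero set, which is a pair of crossed lines, not a submanifold). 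Your device --- observing that $P$ takes values in the rank-$(n-1)$ subbundle $\mathcal{E}$ with fibers $\im(\lambda I-A)=\ker\theta$, and reducing the codomain so that $\tilde P=\rho\circ P$ becomes an honest submersion along its zero locus --- is precisely the fix: surjectivity onto $\C^{n-1}$ is an open condition, so the submersion theorem applies locally at every point of $\Eig(V)$, giving codimension $n-1$ and hence dimension $n^2+1$. In short, the paper's route buys brevity, while yours buys hypotheses that are actually satisfied; your treatment of closedness, algebraicity, the $\C^\times$-structure, and the smoothness of $\pi_{\lambda v}$ and $\pi_v$ coincides with the paper's.
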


\begin{proof}
 Pick a point $(A_0,\lambda_0,v_0)\in \Eig(V)$ and pick local coordinates $(A,\lambda)$ on $\Spec(V)$ around $(A_0,\lambda_0)$; this implicitly defines $\lambda=\lambda(A)$ with $\lambda(A_0)=\lambda_0$. Consider now the differential
 \begin{equation*} 
 \dif P=(\lambda(A)I-A)\dif v+\dif(\lambda(A)I-A)v.
 \end{equation*}
 As $P$ is constant on the $\C^\times$-orbits, $\dif P$ has rank at most $n-1$. On the other hand, the term $(\lambda(A)I-A)\dif v$ has the same rank as $\lambda(A)I-A$, which is $n-1$ by non-degeneracy. Hence $\dif P$ has constant rank $n-1$, and so by the constant rank theorem \cite{Lee2012IntroductionManifolds} the fibers of $P$ are closed submanifolds of (complex) dimension $\big(n^2+n\big)-(n-1)=n^2+1$. The maps $\pi_{\lambda v}$ and $\pi_v$ are then restrictions of smooth projections to a submanifold, hence smooth.
\end{proof}

Both $\pi_{\lambda v}$ and $\pi_v$ are invariant w.r.t.\ the $\C^\times$-action on $\Eig(V)$, which hints at a principal bundle structure. For $\pi_v$, this holds; its fibers are eigenrays $\Eig_\lambda(A)$, which are $\C^\times$-torsors, and~$\pi_v$ coincides with the quotient of the action on $\Eig(V)$. However, $\pi_{\lambda v}$ is clearly not principal. The fiber of $\pi_{\lambda v}$ above $A\in N(V)$ is the total eigenvector space $\Eig(A)$ of $A$, which is a union of $\C^\times$-torsors, hence not a $\C^\times$-torsor itself for $n>1$. Still, with the eye on adiabatic quantum mechanics, we are motivated to describe $\pi_{\lambda v}$ in more detail.

We thus turn to a wider class of bundles, which extends the principal bundles. In short, principal $G$-bundles are bundles endowed with a $G$-action so that the fibers are $G$-torsors (hence diffeomorphic to $G$) and the projection admits $G$-equivariant local trivializations. Let us only change the model fiber; we allow it to be a semi-torsor, i.e., a disjoint union of torsors (hence diffeomorphic to $G\times I$ for some index set $I$). Bundles satisfying this more general condition we call semi-principal $G$-bundles \cite{Pap2020FramesTheory}. Clearly, as any torsor is a semi-torsor, this is an extension of the principal bundles. In case the semi-torsor consists of $n$ torsors, where $n$ is finite, we write the model fiber as $G\times I_n$. More details on semi-principal bundles can be found in \cite{Pap2020FramesTheory}.

Let us show that $\pi_{\lambda v}\from \Eig(V) \to N(V)$ is a semi-principal $\C^\times$-bundle with model fiber $\C^\times \times I_n$. The main property left to show is the equivariant local triviality, for which we prepare ourselves with the following extension property. We use the language of smooth maps, but they may be taken to be algebraic.
\begin{Lemma} \label{lem:local eigenvector}
 For every point $A\in N(V)$, there is an open neighborhood $U\subset N(V)$ of $A$ and~$n$ eigenvector functions $v_1,\dots,v_n \from U \to V$ which constitute an eigenframe at each point of $U$. Moreover, this moving eigenframe can be chosen to extend any eigenframe of $A$.
\end{Lemma}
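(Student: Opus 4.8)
The plan is to obtain the moving eigenframe as the frame part of a local section of the principal bundle $\Xi\from \Fr(V)\times \Cun \to N(V)$ from Lemma~\ref{lem:Xi is principal}. Indeed, this statement is essentially the eigenvector half of Corollary~\ref{cor:local eigvals eigvecs}, so I would reduce to that machinery rather than re-derive everything by hand.

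First I would encode the data. Given the eigenframe $(w_1,\dots,w_n)$ of $A$ that we wish to extend, I pair it with the tuple $\tilde\lambda=(\lambda_1,\dots,\lambda_n)$ of corresponding eigenvalues, where $Aw_i=\lambda_i w_i$. Since $A\in N(V)$ these eigenvalues are distinct, so $\big((w_1,\dots,w_n),\tilde\lambda\big)$ is a genuine point of $\Fr(V)\times \Cun$, and by the conjugating definition of $\Xi$ one has $\Xi\big((w_1,\dots,w_n),\tilde\lambda\big)=A$.

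Second, because $\Xi$ is a principal bundle it is in particular a fiber bundle, and hence admits a local section $s\from U\to \Fr(V)\times \Cun$ over some neighborhood $U\subset N(V)$ of $A$ passing through this chosen point, i.e., with $s(A)=\big((w_1,\dots,w_n),\tilde\lambda\big)$. Writing the frame component of $s$ as $\tilde v=(v_1,\dots,v_n)$ and the $\Cun$ component as $\tilde\mu=(\mu_1,\dots,\mu_n)$, the section identity $\Xi(s(B))=B$ unpacks to $S_{\tilde v(B)}\diag(\tilde\mu(B))S_{\tilde v(B)}^{-1}=B$ for all $B\in U$; this says precisely that each $v_i(B)$ is an eigenvector of $B$ with eigenvalue $\mu_i(B)$ and that $(v_1(B),\dots,v_n(B))$ is a frame. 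Thus the $v_i\from U\to V$ are smooth eigenvector functions forming an eigenframe at every point of $U$, and $v_i(A)=w_i$ gives the desired extension.

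The step that carries all the weight is the existence of the local section through a prescribed point, but this is immediate once Lemma~\ref{lem:Xi is principal} is in hand, so I do not expect a genuine obstacle. If one preferred to avoid the bundle language entirely, the same eigenframe could be produced directly by the implicit function theorem applied to the equations $(B-\mu)v=0$ together with a normalization fixing the scaling of $v$, using Lemma~\ref{lem:non-deg <-> non-zero derivative} to guarantee the simple-root condition; but this merely reconstructs the local section by hand, so invoking the principal bundle is cleaner.
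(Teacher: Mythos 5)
Your proposal is correct and uses essentially the same machinery as the paper: the paper's proof invokes Corollary~\ref{cor:local eigvals eigvecs} (whose own proof is exactly a local section of the principal bundle $\Xi$ through an arbitrary point above $A$) and then adjusts the frame at $A$ by a group element of $\struc$, whereas you simply take the section through the prescribed point $\big((w_1,\dots,w_n),\tilde\lambda\big)$ from the start. The two formulations are interchangeable, and your unpacking of the section identity $\Xi(s(B))=B$ into the eigenvector statement is exactly what the paper leaves implicit.
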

\begin{proof}
 The local eigenvector functions exist by Corollary~\ref{cor:local eigvals eigvecs}. The eigenframe they form at~$A$ can be changed to any given eigenframe of $A$ by acting with the appropriate group element in~$\struc$, as we did in the proof of Lemma~\ref{lem:Xi is principal}.
\end{proof}

\begin{Theorem} \label{thm:Eig(V) bundle over N(V)}
 The map $\pi_{\lambda v} \from \Eig(V) \to N(V)$ defines a semi-principal $\C^\times$-bundle with model fiber $\C^\times \times I_n$.
\end{Theorem}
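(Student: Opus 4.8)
The plan is to produce explicit $\C^\times$-equivariant local trivializations; everything else needed for a semi-principal bundle has essentially already been assembled. First I would recall that the fiber of $\pi_{\lambda v}$ above $A$ is the total eigenvector space $\Eig(A)=\bigsqcup_{\lambda\in\Spec(A)}\Eig_\lambda(A)$, a disjoint union of exactly $n$ eigenrays, each of which is a free $\C^\times$-manifold, i.e., a $\C^\times$-torsor. Thus each fiber is a semi-torsor diffeomorphic to $\C^\times\times I_n$, matching the proposed model fiber, and it remains only to establish $\C^\times$-equivariant local triviality.

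For the trivialization, fix $A_0\in N(V)$ and invoke Lemma~\ref{lem:local eigenvector} to obtain a neighborhood $U$ and smooth eigenvector functions $v_1,\dots,v_n\from U\to V$ that form an eigenframe at each point of $U$, together with the associated smooth local eigenvalues $\lambda_1,\dots,\lambda_n$ from Corollary~\ref{cor:local eigvals eigvecs}. I would then define
\begin{equation*}
 \phi\from U\times\C^\times\times I_n \to \Eig(V)|_U, \qquad (A,z,i)\mapsto\big(A,\lambda_i(A),z\,v_i(A)\big).
\end{equation*}
Since $z\,v_i(A)$ is an eigenvector of $A$ for the eigenvalue $\lambda_i(A)$, the map $\phi$ lands in $\Eig(V)$ and covers $\id_U$ under $\pi_{\lambda v}$. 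Equivariance is immediate: letting $w\in\C^\times$ act on the middle factor, $\phi(A,wz,i)=\big(A,\lambda_i(A),wz\,v_i(A)\big)=w\cdot\phi(A,z,i)$. Bijectivity onto the restricted total space follows because the $\lambda_i(A)$ are distinct and exhaust $\Spec(A)$, so any $(A,\lambda,v)\in\Eig(V)|_U$ has $\lambda=\lambda_i(A)$ for a unique $i$, and then $v\in\Eig_{\lambda_i(A)}(A)=\C^\times v_i(A)$ forces $v=z\,v_i(A)$ for a unique $z\in\C^\times$.

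The one point that needs genuine care — and the main obstacle — is the smoothness of $\phi^{-1}$. The index $i$ is precisely the local labeling $\#$ induced by the chosen local eigenvalues, which is locally constant, so the only real content is the smooth recovery of the scalar $z$. Here I would pass to the dual coframe: let $\big(\theta^1(A),\dots,\theta^n(A)\big)$ be the basis of $V^\vee$ dual to $(v_1(A),\dots,v_n(A))$, which is an eigencoframe by Lemma~\ref{lem:eigenframe}(2) and depends smoothly on $A$ since it is obtained by inverting the smoothly varying frame. Then $z=\theta^i(A)(v)$, so that
\begin{equation*}
 \phi^{-1}(A,\lambda,v)=\big(A,\theta^{\#(A,\lambda)}(A)(v),\#(A,\lambda)\big)
\end{equation*}
is smooth. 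Hence each $\phi$ is a $\C^\times$-equivariant diffeomorphism, and these cover $N(V)$ as $A_0$ varies, so $\pi_{\lambda v}$ is a semi-principal $\C^\times$-bundle with model fiber $\C^\times\times I_n$ in the sense of \cite{Pap2020FramesTheory}, as claimed.
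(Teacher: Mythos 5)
Your proof is correct and takes essentially the same route as the paper's: the identical trivialization $(A,i,z)\mapsto(A,\lambda_i(A),z\,v_i(A))$ built from Lemma~\ref{lem:local eigenvector}, inverted via the local labelling $\#$ and extraction of the scalar $\big[v/v_{\#(A,\lambda)}(A)\big]$. Your only addition is making the smoothness of that scalar extraction explicit through the smoothly varying dual coframe, a point the paper's proof leaves implicit.
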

\begin{proof}
 Pick $A_0\in N(V)$, let $U\subset N(V)$ be a neighborhood of $A_0$ and for $i=1,\dots,n$ let $v_i \from U \to V$ be a local eigenvector corresponding to eigenvalue $\lambda_i(A_0)$, according to Lemma~\ref{lem:local eigenvector}. One has a map
 \begin{align*}
 \Phi \from\ U \times I_n \times \C^\times &\to \Eig(V)|_U,
 \\
 (A,i,z) &\mapsto (A,\lambda_i(A),zv_i(A)),
 \end{align*}
 which is smooth and clearly intertwines the $\C^\times$-actions. Its inverse is
 \begin{equation*}
 (A,\lambda,v) \mapsto \big(A,\#(A,\lambda),\big[v/v_{\#(A,\lambda)}(A)\big]\big),
 \end{equation*}
 which satisfies the same properties. Here $\#$ is the local labelling, and the last entry denotes the scalar $z$ such that $v=zv_{\#(A,\lambda)}(A)$. It follows that $\Phi$ is a $\C^\times$-equivariant local trivialization, hence the bundle property follows.
\end{proof}

We observe that $\pi_\lambda$ and $\pi_v$ are closely tied to $\pi_{\lambda v}$. Namely, it holds that every semi-principal $G$-bundle decomposes naturally into a principal $G$-bundle and a covering space \cite{Pap2020FramesTheory}. These maps form exactly such a decomposition.
\begin{Proposition} 
 The rule $\pi_{\lambda v}=\pi_\lambda \circ \pi_v$ is the decomposition of $\pi_{\lambda v}$ in a principal bundle projection followed by a covering map.
\end{Proposition}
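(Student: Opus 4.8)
The plan is to show that the evident factorization $\pi_{\lambda v} = \pi_\lambda \circ \pi_v$ coincides with the canonical way a semi-principal $G$-bundle splits into a principal $G$-bundle followed by a covering, as in \cite{Pap2020FramesTheory}: there the splitting is the orbit projection $E \to E/G$ composed with the induced map $E/G \to B$. Here $G = \C^\times$, $E = \Eig(V)$, and $B = N(V)$, so the task reduces to matching $\pi_v$ with the orbit projection and $\pi_\lambda$ with the induced covering.

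First I would record the equality of smooth maps $\pi_{\lambda v} = \pi_\lambda \circ \pi_v$, which is immediate: on a triple $(A, \lambda, v)$ the map $\pi_v$ returns $(A, \lambda)$ and then $\pi_\lambda$ returns $A = \pi_{\lambda v}(A, \lambda, v)$. Next I would identify $\pi_v$ as the orbit projection of the scaling $\C^\times$-action. The action fixes $(A, \lambda)$ and rescales only $v$, so each orbit lies inside a fiber $\pi_v^{-1}(A, \lambda) = \{A\} \times \{\lambda\} \times \Eig_\lambda(A)$; conversely, non-degeneracy forces $\Eig_\lambda(A)$ to be a single $\C^\times$-torsor, so any two eigenvectors in it are proportional and the fiber is exactly one orbit. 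This gives the canonical identification $\Spec(V) \cong \Eig(V)/\C^\times$ under which $\pi_v$ is the orbit map, and we already know from the discussion preceding Theorem~\ref{thm:Eig(V) bundle over N(V)} that $\pi_v$ is then a principal $\C^\times$-bundle.

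Finally, under this identification the map that $\pi_{\lambda v}$ induces on the orbit space $\Eig(V)/\C^\times \cong \Spec(V)$ is precisely $\pi_\lambda$, which is a covering map (with fiber $I_n$) by Theorem~\ref{thm:Spec(V) bundle over N(V)}. Hence the factorization $\pi_{\lambda v} = \pi_\lambda \circ \pi_v$ is exactly the principal-then-covering decomposition of \cite{Pap2020FramesTheory}. The only delicate point is confirming that $\Spec(V)$ really is the orbit space of the action, rather than just some intermediate quotient through which $\pi_{\lambda v}$ happens to factor, so that the decomposition is the natural one; this is precisely where the non-degeneracy of $A$ enters, as it guarantees one-dimensional eigenspaces and therefore a single orbit per $\pi_v$-fiber.
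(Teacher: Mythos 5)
Your proof is correct and is essentially the argument the paper intends: the paper states this Proposition without a separate proof because the two ingredients are already in place, namely that $\pi_v$ coincides with the quotient of the $\C^\times$-action (its fibers being the torsors $\Eig_\lambda(A)$, so it is the principal projection, as noted just before Theorem~\ref{thm:Eig(V) bundle over N(V)}) and that $\pi_\lambda$ is a covering map by the proof of Theorem~\ref{thm:Spec(V) bundle over N(V)}. Your explicit check that non-degeneracy forces each $\pi_v$-fiber to be a single $\C^\times$-orbit, so that $\Spec(V)$ genuinely is the orbit space and the factorization is the canonical one, is exactly the right point to make precise.
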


\subsection{Connection on the eigenvector bundle}

Our next step is to show that $\Eig(V)$ supports a canonical connection. This connection is compatible with both the semi-principal projection $\pi_{\lambda v}$ and the principal projection $\pi_v$. Hence, we prefer to avoid the term ``principal connection'', and make reference only to the action instead. We will thus use the concept of a $G$-connection defined as follows. Let $G$ be a Lie group with algebra $\g$; a 1-form $\omega$ on a $G$-manifold $M$ is a $G$-connection if $\omega_m\circ a_m=\id_\g$ for all $m\in M$, with $a_m$ the infinitesimal action at $m$, and $L_g^*\omega=\Ad_g(\omega)$ for any $g\in G$. See also \cite{Pap2020FramesTheory} on this.

To describe the connection 1-form on $\Eig(V)$, it is helpful to extend our notation. The set of eigencovectors of $A\in \End(V)$ we denote as $\Eig^\vee(A)$, and those corresponding to a specific eigenvalue $\lambda$ by $\Eig^\vee_\lambda(A)$. Note that the zero covector is excluded. As we saw in Lemma~\ref{lem:eigenframe}, for diagonalizable $A$, there is a bijection between eigenframes and eigencoframes. For non-degenerate $A$ a stronger statement holds, namely, the bijection already appears on the level of individual eigenvectors and eigencovectors.
\begin{Lemma}
 For a non-degenerate matrix $A$, there is a canonical bijection $\Eig(A) \to \Eig^\vee(A)$ sending $v\in \Eig_\lambda(A)$ to the unique covector $\theta \in \Eig^\vee_\lambda(A)$ defined by $\theta(v)=1$.
\end{Lemma}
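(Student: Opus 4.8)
The plan is to exploit non-degeneracy so that everything reduces to one-dimensional eigenlines paired by the evaluation $\theta(v)$. First I would record the basic orthogonality relation: if $Av=\lambda v$ and $\theta A=\mu\theta$ with $\lambda\ne\mu$, then applying $\theta$ to $Av=\lambda v$ gives $\mu\theta(v)=(\theta A)(v)=\theta(Av)=\lambda\theta(v)$, so $(\mu-\lambda)\theta(v)=0$ and hence $\theta(v)=0$. Thus an eigencovector annihilates every eigenvector whose eigenvalue differs from its own.

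Next, since $A\in N(V)$ has $n$ distinct eigenvalues, $V$ is the direct sum of its $n$ eigenlines, so each $\Eig_\lambda(A)\cup\{0\}$ is one-dimensional; dually, each $\Eig^\vee_\lambda(A)\cup\{0\}$ is one-dimensional, the eigencovectors being the eigenvectors of the dual operator $\theta\mapsto\theta A$, which has the same (distinct) spectrum. Fix $\lambda\in\Spec(A)$ and a nonzero $v\in\Eig_\lambda(A)$. The evaluation $\theta\mapsto\theta(v)$ is a linear map from the one-dimensional space $\Eig^\vee_\lambda(A)\cup\{0\}$ to $\C$, and it is injective: if $\theta(v)=0$ then $\theta$ vanishes on $\Eig_\lambda(A)=\C v$ and, by the orthogonality relation, on every other eigenline, hence on all of $V$, forcing $\theta=0$. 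An injective linear map between one-dimensional spaces is an isomorphism, so there is a unique $\theta\in\Eig^\vee_\lambda(A)$ with $\theta(v)=1$. This shows the assignment $v\mapsto\theta$ is well defined; I would name it $\Psi\from\Eig(A)\to\Eig^\vee(A)$, noting that it sends $\Eig_\lambda(A)$ into $\Eig^\vee_\lambda(A)$ and uses no auxiliary data (no basis, no inner product), so it is canonical.

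Finally I would prove bijectivity by exhibiting the inverse. Running the identical argument with the roles of vectors and covectors exchanged produces, for each $\theta\in\Eig^\vee_\lambda(A)$, a unique $v\in\Eig_\lambda(A)$ with $\theta(v)=1$, giving a map $\Phi\from\Eig^\vee(A)\to\Eig(A)$. Since the defining equation $\theta(v)=1$ is symmetric in its two slots, $\Phi$ and $\Psi$ are mutually inverse: $\Psi(v)=\theta$ means exactly $\theta(v)=1$, which is exactly the condition $\Phi(\theta)=v$. Hence $\Psi$ is a bijection.

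The computation contains no genuine obstacle; the only point requiring care is that the uniqueness of $\theta$ rests entirely on non-degeneracy. It is precisely the one-dimensionality of the matching eigenline, together with the annihilation of the mismatched eigenlines, that turns the evaluation $\Eig_\lambda(A)\times\Eig^\vee_\lambda(A)\to\C$ into a non-degenerate pairing and thereby pins $\theta$ down. (Alternatively one can read the statement off Lemma~\ref{lem:eigenframe}(2): extending $v$ to an eigenframe and taking its dual basis yields a concrete $\theta$, but the pairing argument makes manifest that the result is independent of this extension.)
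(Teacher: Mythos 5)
Your proof is correct and rests on the same underlying mechanism as the paper's: the decomposition of $V$ into the $n$ eigenlines of $A$, together with the orthogonality relation $\theta(v)=0$ for mismatched eigenvalues. The paper's proof is a one-liner that simply constructs $\theta$ by declaring its values on that decomposition, leaving uniqueness, the eigencovector property, and bijectivity implicit; you make exactly these points explicit via the non-degenerate pairing and the symmetric inverse construction, which is a welcome filling-in of detail rather than a different route.
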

\begin{proof}
 Following the decomposition of $V$ by the eigenspaces of $A$, define the covector $\theta$ to vanish on every eigenspace except the one of $\lambda$, where it is fixed by setting $\theta(v)=1$.
\end{proof}

This correspondence shows that the scaling action on $\Eig(A)$ corresponds to inverse scaling on $\Eig^\vee(A)$, i.e., we obtain the $\C^\times$-action on $\Eig^\vee(A)$ given by
\begin{equation*}
 z \cdot \theta =\theta z^{-1}.
\end{equation*}
The bijection $\Eig(A) \to \Eig^\vee(A)$ is then an isomorphism of $\C^\times$-manifolds. In addition, as any triple $(A,\lambda,v)\in \Eig(V)$ defines a unique $\theta$, we are free to augment the triple to $(A,\lambda,v,\theta)$. This will be a convenient notation when defining maps on $\Eig(V)$, such as the connection 1-form we seek at the moment. Similarly, one is free to omit $\lambda$ from the triple. We will use these alternative notations frequently.

We are now set to describe the connection 1-form on $\Eig(V)$. By definition, this 1-form is a~right-inverse of the infinitesimal action, which is specified by the vector field
\begin{equation*}
 \pder{}{v}\bigg|_{(A,v)}:=\dtzero \exp(t)\cdot(A,v)=\dtzero \big(A,{\rm e}^{t}v\big),
\end{equation*}
which can be thought of as a unit vector field pointing along the eigenray of $A$ through $v$. Hence, given a path $\Gamma(t)=(A(t),\lambda(t),v(t),\theta(t))$ in $\Eig(V)$, the connection should measure the component of $\dot{v}$ along $v$. This cannot be done with the vector space structure of $V$ alone, but given the operator this is possible. Indeed, the eigenrays of $A(t)$ provide a natural decomposition of $V$ at time $t$. According to this choice, the coefficient of $\dot{v}$ along $v$ is simply $\theta(\dot{v})$.

Let us describe this argument formally. To obtain the quantity $\dot{v}$ from $\dot{\Gamma}$, we can use the differential of the projection $\pr_V \from \Eig(V) \to V$, $(A,\lambda,v) \mapsto v$. Technically, this will take values in the tangent space $T_{(A,v)}V$, but this can be identified with $V$ in a canonical way. Let us write $\dif v$ for $(\pr_V)_*$ with the image viewed in $V$. This map can be followed up with $\theta$, which yields the connection 1-form.

\begin{Proposition} \label{prop:omega Ctimes connection}
 The $1$-form $\omega \in \Omega^1(\Eig(V),\C)$ given by
 \begin{equation*}
 \omega=\theta \dif v
 \end{equation*}
 is a $\C^\times$-connection.
\end{Proposition}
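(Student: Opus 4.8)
The plan is to verify directly the two conditions defining a $\C^\times$-connection, as recalled just above the statement. Identifying the Lie algebra $\g$ of $\C^\times$ with $\C$ through the exponential $t\mapsto \exp(t)$, the generator $1\in\C$ corresponds to the fundamental vector field $\pder{}{v}$. Since $\C^\times$ is abelian, $\Ad_z=\id$ for every $z$, so the equivariance condition $L_z^*\omega=\Ad_z(\omega)$ collapses to plain invariance $L_z^*\omega=\omega$. It therefore suffices to check (i) $\omega_m\circ a_m=\id_\C$ at every point $m$, and (ii) $L_z^*\omega=\omega$ for all $z\in\C^\times$.

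For (i), I would simply evaluate $\omega$ on the fundamental vector field. Applying $\dif v=(\pr_V)_*$ to $\pder{}{v}\big|_{(A,v)}=\dtzero\big(A,\exp(t)v\big)$ gives $\dtzero \exp(t)v=v\in V$ under the canonical identification $T_{(A,v)}V\cong V$. Hence $\omega\big(\pder{}{v}\big)=\theta\big(\dif v(\pder{}{v})\big)=\theta(v)$, which equals $1$ by the defining property of the augmented covector $\theta$. As $\omega_m\circ a_m$ is $\C$-linear and sends the generator $1$ to $1$, it is the identity on $\g=\C$.

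For (ii), recall that the action reads $z\cdot(A,\lambda,v)=(A,\lambda,zv)$, while the augmented covector transforms by $\theta\mapsto\theta z^{-1}$. Since $\pr_V\circ L_z=z\,\pr_V$, the pushforward by $L_z$ scales $\dif v$ by $z$, that is $\dif v\big((L_z)_*X\big)=z\,\dif v(X)$. Combining this with the value of the covector at the shifted point being $\theta z^{-1}$, I obtain $(L_z^*\omega)(X)=(\theta z^{-1})\big(z\,\dif v(X)\big)=\theta\big(\dif v(X)\big)=\omega(X)$, which is the desired invariance.

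The computations are short, so the only point needing genuine care is the bookkeeping for the augmented covector $\theta$, which is not an independent coordinate but is fixed by $(A,\lambda,v)$ through the canonical bijection $\Eig(A)\to\Eig^\vee(A)$. One must use its transformation law $\theta\mapsto\theta z^{-1}$ precisely, so that the scaling of $\dif v$ by $z$ and the inverse scaling of $\theta$ cancel in (ii); placing the inverse on the wrong factor would destroy equivariance. Smoothness of $\omega$ as a global $1$-form is inherited from the smooth dependence of $\theta$ on the point, which follows from the local trivializations produced in Theorem~\ref{thm:Eig(V) bundle over N(V)}.
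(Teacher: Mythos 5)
Your proposal is correct and follows essentially the same route as the paper's proof: you use commutativity of $\C^\times$ to reduce equivariance to invariance, verify invariance via the opposite scalings $\theta\mapsto\theta z^{-1}$ and $\dif v\mapsto z\,\dif v$, and obtain the left-inverse property from $\omega(\partial_v)=\theta(v)=1$. The only difference is that you spell out the Lie-algebra identification and the linearity argument explicitly, which the paper leaves implicit.
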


\begin{proof}
 As $\C^\times$ is commutative, the equivariance of $\omega$ reduces to invariance. This holds by the opposite scaling of $\theta$ and $v$: for $z\in \C^\times$, $(L_z)^*\omega=\big(\theta z^{-1}\big)(\dif(zv))=\theta \dif v=\omega$.
 The left-inverse property of $\omega$ follows as $\omega(\partial_v)=\theta \dif v(\partial_v)=\theta(v)=1$.
\end{proof}

For future reference, we also remark that this induces the global curvature form $K$ on $\Eig(V)$ in the standard way, i.e., $K:=\dif \omega +\frac{1}{2}[\omega,\omega]=\dif \omega$. In coordinates it reads $K=\dif \theta \wedge \dif v$, where~$\dif \theta$ is defined similar to $\dif v$. As $\C^\times$ is commutative, the curvature $K$ is invariant under the group action. Hence it admits push-forward along the quotient map $\pi_v$ to $\Spec(V)$, resulting in the following.
\begin{Lemma}
 The curvature form $K$ on $\Eig(V)$ reduces to a unique $2$-form $k\in \Omega^2(\Spec(V),\C)$, which satisfies $K=\pi_v^*(k)$.
\end{Lemma}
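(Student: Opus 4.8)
The plan is to show that the curvature $K$ on $\Eig(V)$ is basic with respect to the principal $\C^\times$-bundle $\pi_v \from \Eig(V) \to \Spec(V)$, since a form descends along a surjective submersion precisely when it is both invariant and horizontal. Invariance was already noted: because $\C^\times$ is abelian, $\Ad$ is trivial and the equivariance of $\omega$ reduces to invariance $L_z^*\omega = \omega$ (established in Proposition~\ref{prop:omega Ctimes connection}), whence $L_z^* K = L_z^* \dif\omega = \dif(L_z^*\omega) = \dif\omega = K$. So it remains to verify horizontality, i.e.\ that $K$ vanishes whenever one of its arguments is the fundamental vertical vector field $\partial_v$ generating the $\C^\times$-action.

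For horizontality I would use the standard structure identity for a principal connection. Since $\omega$ is a $\C^\times$-connection with $\omega(\partial_v) = 1$ constant and $K = \dif\omega$, Cartan's formula gives
\begin{equation*}
 \iota_{\partial_v} K = \iota_{\partial_v}\dif\omega = \mathcal{L}_{\partial_v}\omega - \dif(\iota_{\partial_v}\omega) = \mathcal{L}_{\partial_v}\omega - \dif(1).
\end{equation*}
The first term is the Lie derivative along the fundamental field, which equals $\dtzero L_{\exp(t)}^*\omega = 0$ by the invariance just recorded, and the second term vanishes because $\iota_{\partial_v}\omega \equiv 1$ is constant. Hence $\iota_{\partial_v} K = 0$, so $K$ is horizontal. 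Together with invariance this makes $K$ basic, and therefore $K = \pi_v^*(k)$ for a unique $k \in \Omega^2(\Spec(V),\C)$; uniqueness is automatic since $\pi_v^*$ is injective on forms for a surjective submersion.

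Alternatively, and perhaps more transparently for this paper, I would read off the result directly from the coordinate expression $K = \dif\theta \wedge \dif v$ together with the bijection $\Eig(A) \to \Eig^\vee(A)$. Under the scaling $z \cdot (v,\theta) = (zv, \theta z^{-1})$, the product $\theta\, \dif v$ is invariant, and differentiating shows $\dif\theta \wedge \dif v$ is invariant as well; moreover contracting with $\partial_v$ (which acts as $v\,\partial_v$ on the $V$-factor and $-\theta\,\partial_\theta$ on the covector) annihilates the wedge because $\iota_{\partial_v}(\dif\theta \wedge \dif v) = (\iota_{\partial_v}\dif\theta)\dif v - \dif\theta(\iota_{\partial_v}\dif v) = -\theta\,\dif v - \dif\theta\, v = -\dif(\theta v) = -\dif(1) = 0$, using $\theta(v) = 1$. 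The main obstacle is purely bookkeeping: one must be careful that $\dif\theta$ is defined exactly as $\dif v$ was (via the canonical identification of the relevant tangent space with $V^\vee$) so that the contraction with the fundamental vertical field is computed consistently on both factors; once that identification is pinned down, horizontality is immediate and the descent is a formal application of the basic-form criterion.
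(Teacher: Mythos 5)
Your proof is correct, and it rests on the same underlying idea as the paper, namely descent of a basic form along the principal $\C^\times$-bundle $\pi_v \from \Eig(V) \to \Spec(V)$; in fact it is more complete than what the paper offers. The paper states this lemma without a proof environment; its entire justification is the preceding remark that, since $\C^\times$ is commutative, $K$ is invariant under the group action and ``hence'' admits push-forward to $\Spec(V)$. As you correctly recognize, invariance alone does not suffice for a form to descend along a submersion with positive-dimensional fibers: the connection form $\omega$ itself is invariant (again by commutativity of $\C^\times$) yet does not descend, precisely because $\iota_{\partial_v}\omega = 1 \ne 0$. Your verification of horizontality---via Cartan's formula $\iota_{\partial_v}\dif\omega = \mathcal{L}_{\partial_v}\omega - \dif(\iota_{\partial_v}\omega) = 0 - \dif(1) = 0$, or equivalently via the coordinate computation $\iota_{\partial_v}(\dif\theta\wedge\dif v) = -\theta\,\dif v - \dif\theta\, v = -\dif(\theta(v)) = 0$ using $\theta(v)=1$---is exactly the ingredient the paper leaves implicit, and the second version meshes naturally with the paper's own expression $K = \dif\theta\wedge\dif v$. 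The only point requiring care is the one you flag yourself: $\dif\theta$ must be defined through the same canonical identification as $\dif v$, which the paper does arrange (``where $\dif\theta$ is defined similar to $\dif v$''), so both of your routes go through.
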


\subsection{Geometry behind the geometric phase} 

We will now show that the connection $\omega$ yields the geometric phases in adiabatic quantum mechanics in a natural way. That is, $\Eig(V)$ provides a geometric model for the geometric phase. This is also applicable to non-cyclic states, which appear in the presence of EPs of non-Hermitian Hamiltonian families.

An argument similar to the one that led us from $\Spec(V)$ to $\Spec(H)$ in Section~\ref{sec:Spec(H)} holds here. Namely, the space $\Eig(V)$ can be seen as a general abstract model, which can be adjusted for a specific Hamiltonian family using pull-back by $H$. This pull-back of $\Eig(V)$ along the Hamiltonian family $H$ yields the \emph{eigenstate bundle of $H$}, given by
\begin{equation*}
 \Eig(H)=\set{(x,E,\psi) \in \Spec(H)\times V\setminus\{0\}}{\psi\in\Eig_E(H(x))}.
\end{equation*}
We observe that the pull-back construction of this space immediately guarantees various properties of $\Eig(H)$. Clearly, $\Eig(H)$ is a smooth manifold. One can also write an element as $(x,E,\psi,\chi)$, where $\chi$ is the unique left-eigenstate of $H(x)$ corresponding to $E$ so that $\chi(\psi)=1$. Similarly, one may omit $E$ from the tuple. The projection $\pi_{\lambda v}^H \from \Eig(H)\to N(H)$ is a semi-principal $\C^\times$-bundle, and the projection $\pi_v^H \from \Eig(H)\to \Spec(H)$ is a principal $\C^\times$-bundle. The latter bundle confirms that also $\Eig(H)$ can have non-trivial topology arising from the permutations around the EPs of $H$.

The local sections of $\pi_{\lambda v}^H$ are in correspondence with local eigenstates. Here, a local eigenstate is a smooth function $\psi \from U\to \Eig(H)$, with $U\subset N(H)$ open, such that $\psi(x)$ is an eigenstate of~$H(x)$. Clearly, this fixes a local eigenvalue $E(x)$. This data is summarized in the corresponding local section $\hat{\psi}$ as
\begin{equation*}
 \hat{\psi} \from\ x\mapsto (x,E(x),\psi(x)).
\end{equation*}
Observe that for each $x\in U$, knowing $E(x)$ and $\psi(x)$ fixes a unique left-eigenstate $\chi(x)$ of $H(x)$, which depends smoothly on $x$. Hence we can include $\chi(x)$ as the fourth component of the local section $\hat{\psi}$, following our earlier remark on notation of elements of $\Eig(H)$.

The space $\Eig(H)$ also inherits a connection, which we write as $\omega_H$. Again, for any family $H$, the smoothness and other properties of $\Eig(H)$ are immediate from the pull-back construction. It is now easy to verify that parallel transport w.r.t.\ this connection is equivalent to studying the geometric phase. Namely, using the local section $\hat{\psi}$, we can express $\omega_H$ locally as
\begin{equation*}
 \hat{\psi}^*\omega_H=\chi \dif \psi.
\end{equation*}
This is indeed the expression reported by Garrison and Wright \cite{Garrison1988ComplexSystems} as the generalization of the Berry connection. That is, the parallel transport on $\Eig(H)$ defined via the horizontal lifts w.r.t.~$\omega_H$ is equivalent to the calculation of geometric phase, for both Hermitian and non-Hermitian Hamiltonians. We may thus study geometric phases by studying the space $\Eig(H)$.

For example, the connection $\omega_H$ on $\Eig(H)$ can be flat, in which case the geometric phase is actually of a topological nature. By this we mean that the geometric phase due to a loop $\gamma$ is invariant under continuous deformation of $\gamma$, which is equivalent to vanishing of the curvature~$K_H$ of $\Eig(H)$. Hence, only non-contractible loops can yield a non-trivial geometric phase. Flatness is also of practical relevance, as the following result shows.

\begin{Proposition} \label{prop:flat connection}
 The connection $\omega_H$ on $\Eig(H)$ is flat in case
 \begin{itemize}\itemsep=0pt
 \item the Hamiltonian family $H$ is an analytic function in a single complex variable $z$.
 \item $H(x)$ is a symmetric matrix w.r.t.\ a fixed basis of $V$ for all $x \in N(H)$. Explicitly, if a~local eigenstate $\psi$ is such that $\psi^\mathrm{T}\psi=1$, then $\hat{\psi}^*\omega_H=0$.
 \end{itemize}
\end{Proposition}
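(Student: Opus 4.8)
The plan is to compute the curvature through a conveniently chosen local section, exploiting that for an abelian structure group the local curvature form does not depend on the gauge. Recall that the curvature is $K_H=\dif\omega_H$ and that along any local section $\hat\psi$ one has $\hat\psi^*\omega_H=\chi\,\dif\psi$, so $\hat\psi^*K_H=\dif(\chi\,\dif\psi)=\dif\chi\wedge\dif\psi$. Changing the section multiplies $\psi$ by a nowhere-zero function $g$ and $\chi$ by $g^{-1}$, which alters $\hat\psi^*\omega_H$ only by the exact (hence closed) form $\dif\log g$; therefore $\hat\psi^*K_H$ is the same for every section. It thus suffices, in each of the two cases, to exhibit one local section on which $\dif(\chi\,\dif\psi)$ vanishes, and since local sections cover $N(H)$ this gives $K_H=0$, i.e.\ flatness.

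For the first case, suppose $M\subseteq\C$ and $H$ is holomorphic. On $N(H)$ every eigenvalue is simple, so by analytic perturbation theory (as in \cite{Kato1966PerturbationOperators}) one may choose the local eigenvalue $E$, the eigenstate $\psi$, and the associated left-eigenstate $\chi$ all holomorphic in $z$. Then $\dif\psi=(\partial_z\psi)\,\dif z$ and $\hat\psi^*\omega_H=\chi\,\dif\psi=f\,\dif z$ with $f:=\chi\,\partial_z\psi$ holomorphic. Because $M$ has complex dimension one, $\dif(f\,\dif z)=(\partial_{\bar z}f)\,\dif\bar z\wedge\dif z=0$, the last equality holding since $f$ is holomorphic. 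Hence $\hat\psi^*K_H=0$ and $\omega_H$ is flat.

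For the second case, suppose $H(x)^{\mathrm T}=H(x)$ throughout. Transposing $\chi H=E\chi$ gives $H\chi^{\mathrm T}=E\chi^{\mathrm T}$, so $\chi^{\mathrm T}$ is a right eigenvector for the simple eigenvalue $E$ and is therefore parallel to $\psi$; together with $\chi\psi=1$ this forces $\chi=\psi^{\mathrm T}/(\psi^{\mathrm T}\psi)$. Here $\psi^{\mathrm T}\psi\ne0$: eigenvectors of a symmetric matrix belonging to distinct eigenvalues are orthogonal for the bilinear form $(u,w)\mapsto u^{\mathrm T}w$, so if $\psi^{\mathrm T}\psi$ also vanished then $\psi$ would be orthogonal to a whole basis of $V$, forcing $\psi=0$ by non-degeneracy of this form. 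Shrinking $U$ we may thus rescale so that $\psi^{\mathrm T}\psi=1$, whence $\chi=\psi^{\mathrm T}$ and $\hat\psi^*\omega_H=\psi^{\mathrm T}\dif\psi$. Differentiating the normalization gives $0=\dif\big(\psi^{\mathrm T}\psi\big)=2\,\psi^{\mathrm T}\dif\psi$, so $\hat\psi^*\omega_H=\psi^{\mathrm T}\dif\psi=0$, which is the explicit claim and, a fortiori, yields $\hat\psi^*K_H=0$ and flatness.

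The routine parts are the two differentiations; the care lies in the inputs that make them valid. In the first case the crucial point is the holomorphic dependence of the \emph{entire} section, in particular of the left-eigenstate $\chi$, which I would justify via analytic perturbation theory for simple eigenvalues, combined with the observation that a holomorphic $1$-form is automatically closed once the base is one complex-dimensional. In the second case I expect the main obstacle to be the non-degeneracy $\psi^{\mathrm T}\psi\ne0$ that licenses the normalization $\psi^{\mathrm T}\psi=1$ and the smooth (or analytic) local choice of its square root; I would settle this using the orthogonality of eigenvectors under the symmetric bilinear form together with its non-degeneracy on $V$.
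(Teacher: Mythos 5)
Your proposal is correct and follows essentially the same route as the paper: in the analytic case you choose a holomorphic local eigenstate so that $\hat\psi^*\omega_H=f(z)\,\dif z$ with $f$ holomorphic and the $\bar z$-derivative kills the curvature, and in the symmetric case you normalize $\psi^\mathrm{T}\psi=1$ to identify $\chi=\psi^\mathrm{T}$ and conclude $\hat\psi^*\omega_H=\psi^\mathrm{T}\dif\psi=0$. Your added justifications (gauge-independence of the pulled-back curvature, and non-vanishing of $\psi^\mathrm{T}\psi$ via the symmetric bilinear form) only make explicit what the paper leaves implicit.
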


\begin{proof}
 If $H=H(z)$ is analytic, we can find an analytic local eigenstate $\psi(z)$. Consequently, $\hat{\psi}^*\omega_H=f(z)\dif z$ for some analytic function $f$. Using coordinates $z$ and $\bar{z}$ for the parameter space, $\hat{\psi}^*K_H$ is proportional to $\frac{\partial f(z)}{\partial \bar{z}}$. As this derivative vanishes, it follows that $K_H=0$.

 In the symmetric case, $\psi^\mathrm{T}$ is a non-zero multiple of the left-eigenstate $\chi$ corresponding to $\psi$. Hence the function $\psi^\mathrm{T} \psi$ is non-vanishing, and one may always (locally) scale $\psi$ so that $\psi^\mathrm{T}\psi=1$, implying $\chi=\psi^\mathrm{T}$. By partial integration, $\hat{\psi}^*\omega_H=\tfrac{1}{2}(\chi \dif \psi - \dif \chi \psi)=\tfrac{1}{2}\big(\psi^\mathrm{T} \dif \psi - \dif \psi^\mathrm{T} \psi\big)=0.$
\end{proof}

\subsubsection{Calculating the lift via an ansatz}

An explicit calculation of the lift of a path to $\Eig(H)$ can be done using an ansatz technique. For this technique, it does not matter if the path lies in $N(H)$ or $\Spec(H)$. Indeed, if a path $\gamma$ in $N(H)$ and an initial eigenstate $\psi_0$ of $H(x_0)$ is chosen, then $\psi_0$ has an energy $E_0$, and covering theory yields the corresponding path $\bar{\gamma}(t)=(\gamma(t),E(t))$ in $\Spec(H)$. Conversely, any path $\bar{\gamma}$ in $\Spec(H)$ yields a path in $N(H)$, which obviously lifts to $\bar{\gamma}$. If we restrict to loops the story changes, as we will see after discussing the ansatz technique.

The ansatz technique here is simply the application of the usual one to the parallel transport equation. Let us continue with the path $\gamma$ in $N(V)$ and the initial state $\psi_0$. This data fixes a~unique lift to $\Eig(H)$, which we denote by $\Gamma$ and which is given by
\begin{equation*}
 \Gamma(t)=(\gamma(t),E(t),\Psi(t)),
\end{equation*}
where $\Psi(t)$ is the adiabatically evolved eigenstate at time $t$, up to dynamical phase. Naturally, this lift is also given as $(\bar{\gamma}(t),\Psi(t))$ in the perspective of lifting a path $\bar{\gamma}$ from $\Spec(V)$ to $\Eig(V)$.

In practice, one would not calculate $\Psi(t)$ directly, but instead approach this problem using reference instantaneous eigenstates $\psi(t)$. That is, for each $t$, one finds an eigenstate $\psi(t)$ of~$H(t)$ with energy $E(t)$, or $\psi(t)\in \Eig_{E(t)}(H(t))$ for short. For example, $\psi(t)$ can be obtained by explicitly calculating an eigenstate of $H(\gamma(t))$, and then letting $t$ vary. We may assume that $t\mapsto \psi(t)$ is differentiable and satisfies the initial condition $\psi(0)=\psi_0$. These instantaneous eigenstates define a path $\Gamma_0(t)=(\gamma(t),E(t),\psi(t),\chi(t))$ in $\Eig(H)$, where $\chi(t)$ is the path of accompanying left-eigenstates.

Of course, $\Gamma_0$ does not need to be the lift $\Gamma$, as equivalently $\psi(t)$ does not need to be equal the actual state $\Psi(t)$. However, $\Gamma_0$ can function as an ansatz to calculate $\Gamma$ explicitly. Indeed, both $\psi(t)$ and $\Psi(t)$ must lie in $\Eig_{E(t)}(H(t))$, and so differ only by a scale factor ${\rm e}^{f(t)}$. That is, one has
\begin{equation*}
 \Gamma(t)={\rm e}^{f(t)}\cdot \Gamma_0(t)=\big(\gamma(t),E(t),{\rm e}^{f(t)}\psi(t),\chi(t){\rm e}^{-f(t)}\big)
\end{equation*}
with $f(t)$ some differentiable complex-valued function satisfying $f(0)=0$. This $f$ depends solely on the ansatz $\psi$ by imposing the lift condition
\begin{equation*}
 0=(\omega_H)_{\Gamma(t)}\big(\dot{\Gamma}(t)\big)=\chi(t){\rm e}^{-f(t)}\frac{\dif}{\dif t}\big({\rm e}^{f(t)}\psi(t)\big)=\dot{f}(t)+\chi(t)\dot{\psi}(t).
\end{equation*}
Solving for $f(t)$, we thus find the actual state from the ansatz by application of the scale factor~${\rm e}^{f(t)}$;
\begin{equation} \label{eq:lift from geo correction}
 \Psi(t)=\exp\bigg({-}\int_0^t \chi(t')\dot{\psi}(t') \dif t'\bigg)\psi(t).
\end{equation}

This is the general expression for a state $\psi_0$ that undergoes parallel transport along a path~$\gamma$ or~$\bar{\gamma}$, also in the non-Hermitian case, expressed using an ansatz. Although the integral $\int_0^t \chi(t')\dot{\psi}(t')\dif t'$ is famous for its relation to the geometric phase, without additional assumptions it does not have physical significance. Indeed, the integral is not ansatz independent. That is, given another ansatz $\psi'(t)$, then $\psi'(t)=\exp(a(t))\psi(t)$ for some complex-valued function $a=a(t)$, and $\int_0^t \chi(t')\dot{\psi}(t')\dif t'$ becomes $\int_0^t \chi'(t')\dot{\psi}'(t')\dif t'+[a(t)-a(0)]$, which can in principle be any complex-valued continuous function. The actual state $\Psi(t)$ is of course ansatz independent; assuming the same initial condition, i.e., $\psi'(0)=\psi_0$, then $a(0)=0$ and as desired we obtain
\begin{gather*}
 \Psi'(t)=\exp\bigg({-}\!\int_0^t \chi'(t')\dot{\psi}'(t')\dif t'\bigg)\psi'(t)=\exp\bigg({-}\!\int_0^t \chi(t')\dot{\psi}(t')\dif t'-a(t)\bigg){\rm e}^{a(t)}\psi(t)=\Psi(t).
\end{gather*}
This confirms that the integral is, in general, only a correction factor to the specific ansatz $\psi(t)$; it merely quantifies how much our ansatz $\psi(t)$ was away from being the lift $\Psi(t)$.

In fact, just from physical arguments one should not expect the integral to yield an observable; we did not assume the state to be cyclic, so that there is no particular phase the integral can be equal to. Of course, if the state is cyclic, say the state returns at time $T$, then a geometric phase is well-defined. With an additional assumption on the ansatz $\psi$, namely $\psi(T)=\psi_0$, this geometric phase is indeed given by the integral at $t=T$. However, even in this cyclic case, if we consider intermediate times $t$, i.e., $0<t<T$, the integral does not yield a physical quantity; we fixed $a(T)=0$, but for intermediate times $a(t)$ can still be anything. In~other words, concerning a state acquiring a geometric phase, there is no well-defined rate at which this happens. This reflects that a geometric phase depends only on the locus of a (closed) path, not its parametrization. Still, the integral $\int_0^t \chi'(t')\dot{\psi}'(t')\dif t'$ can be related to path ``lengths'', as we consider in Section~\ref{sec:geo phase and distance}.

We will consider the state evolution in more detail in the following. We will distinguish between the cyclic and the non-cyclic case. The approach is summarized in diagram~\eqref{eq:diag Eig(H) Spec(H) N(H)} below, which is the pullback of diagram~\eqref{eq:Eig(V) Spec(V) N(V) triangle} along $H$. Namely, we see that the projection $\pi_{\lambda v}^H\from \Eig(H) \to N(H)$ can be considered as the main bundle to model the evolution of states. Indeed, given any loop in $N(H)$, it will certainly induce a holonomy operation on $\Eig(H)$, regardless of states being cyclic or non-cyclic. In contrast, one can also consider the bundle $\pi_v^H\from \Eig(H) \to \Spec(H)$. Clearly, loops in $\Spec(H)$ correspond to cyclic evolution only.\footnote{In an adiabatic setting, with cyclic we also assume that the system parameters are restored.} However, this bundle is principal and so has the advantage that its holonomies are easier to describe. We thus remark that for cyclic states one uses $\pi_v^H$, and one uses $\pi_{\lambda v}^H$ primarily for non-cyclic states. We also emphasize that the calculation of $\Psi(t)$ above can be used in both cases:
\begin{equation} \label{eq:diag Eig(H) Spec(H) N(H)}
 \begin{tikzcd}[column sep=1em,ampersand replacement=\&]
 \Eig(H) \ar{rr}{\text{cyclic states}} \ar[swap]{rd}{\text{non-cyclic states}} \&\& \Spec(H) \ar{ld}{\text{energy swaps}} \&\&\& (x,E,\psi)\ar[mapsto]{rr}{\pi_v^H} \ar[mapsto,swap]{rd}{\pi_{\lambda v}^H} \&\& (x,E) \ar[mapsto]{ld}{\pi_\lambda^H}\\
 \& N(H), \&\&\&\&\& x.
 \end{tikzcd}
\end{equation}

\subsubsection{The cyclic case}

Let us start with the cyclic case. As said, this concerns the bundle $\pi_v^H\from \Eig(H) \to \Spec(H)$. We remark that the only difference between a path in $N(H)$ and a path in $\Spec(H)$ is that the latter not only specifies the change in system parameters, but also which energy level is of interest. Moreover, as we have seen, fixing an initial energy $E_0$ uniquely specifies a path in $\Spec(H)$ given a path in $N(H)$. It is thus natural to consider a loop $\bar{\gamma}$ in $\Spec(H)$ as the input data for cyclic evolution.

Let us now consider the evolution of the eigenstates. Denote the basepoint of $\bar{\gamma}$ again by $(x_0,E_0)$, and assume $\bar{\gamma}$ returns to this point at time $t=T$. The state evolution is then described by the parallel transport map $P_{\bar{\gamma}}$, e.g., $P_{\bar{\gamma}}(\psi_0)=\Psi(T)$ following the previous calculation. As $\pi_v^H$ defines a principal bundle and $\omega_H$ is a principal connection, $P_{\bar{\gamma}}$ follows from standard holonomy theory. That is, $P_{\bar{\gamma}}$ is an automorphism of the $\C^\times$-torsor $\Eig_{E_0}(H(x_0))$, and thus amounts to scaling by a unique element in $\C^\times$. This element is clearly the geometric phase factor that any state in $\Eig_{E_0}(H(x_0))$ acquires due to following $\bar{\gamma}$. As we allow for non-Hermitian Hamiltonians, this phase factor need not be unitary. We hence obtain a definition of (generalized) geometric phase from holonomy as follows.
\begin{Definition}[geometric phase]
 Let $\bar{\gamma}$ be a loop in $\Spec(H)$ based at $(x_0,E_0)$. The geometric phase $\gamma_{\mathrm{geo}}$ due to $\bar{\gamma}$ is defined, modulo $2\pi$, via
 \begin{equation*}
 P_{\bar{\gamma}}= {\rm e}^{{\rm i}\gamma_{\mathrm{geo}}} \cdot \id_{\Eig_{E_0}(H(x_0))}.
 \end{equation*}
\end{Definition}

It now remains to calculate $\gamma_{\mathrm{geo}}$ explicitly. For this, we can use the earlier result from the ansatz technique. As usual, one only has to compare the phase difference between the final state~$\Psi(T)$ and the initial state $\psi_0$, i.e., $\Psi(T)=P_{\bar{\gamma}}(\psi_0)=\exp({\rm i}\gamma_{\mathrm{geo}}) \psi_0$. We thus consider the unique non-zero complex scalar $[\Psi(T)/\psi_0]$ such that $\Psi(T)=[\Psi(T)/\psi_0]\psi_0$. Equating $\exp({\rm i}\gamma_\mathrm{geo})=[\Psi(T)/\Psi(0)]$ and substituting our expressing for $\Psi(t)$ in terms of our ansatz $\psi(t)$, we find the expression
\begin{equation} \label{eq:geo phase general}
 \gamma_\mathrm{geo}={\rm i}\int_0^T \chi(t)\dot{\psi}(t)\dif t-{\rm i}\ln([\psi(T)/\psi_0]),
\end{equation}
where the logarithm term yields the usual modulo $2\pi$ of a phase. By construction this is invariant under both replacing $\psi_0$ with another state in the same ray and picking another ansatz $\psi$.

We see two terms of different nature in equation~\eqref{eq:geo phase general}. The integral term is clearly the usual integral for the geometric phase, and corrects for changes of $\psi$ along its own direction. The logarithm term is a correction for $\psi$ not closing on itself. Observe that only the two terms together are invariant under changing the choice of the ansatz $\psi$. The logarithm term vanishes whenever $\psi(1)=\psi(0)$, which happens, e.g., if $\psi$ is built using a local eigenstate. On the other hand, the integral term vanishes, e.g., when $\psi$ is chosen to be the lift $\Psi$, in which case the integrand is identically zero. In this case $\gamma_\mathrm{geo}$ can be computed using the end points only, see Example~\ref{ex:DP 1} below for an explicit example.

\begin{Example} \label{ex:DP 1}
 Let us consider a typical system with a diabolic point (DP), named after the diabolo shape of the energy bands \cite{Berry1984DiabolicTriangles}. Let us pick $V=\C^2$ with standard basis $(e_1,e_2)$, in which the Hamiltonian family reads
 \begin{equation*}
 H(a,b)=\begin{pmatrix} a& b\\b & -a \end{pmatrix}\!,
 \end{equation*}
 where $a$, $b$ are real numbers, i.e., $M=\R^2$. The energy bands are given as $\lambda_\pm(a,b)=\pm\sqrt{a^2+b^2}$, which have a single degeneracy at the origin. This is located at the apex of the diabolo, and is the DP of this system.

 Let us show how the geometric phase integral can be seen as a correction factor. Therefore, let us traverse the unit circle using the path $\gamma(t)=(\cos(t),\sin(t))$ with time interval $[0,2\pi]$, and consider the $\lambda_+$ level ($\lambda_-$ is similar). By looking at $H(\gamma(t))-\lambda_+(\gamma(t))I$, one readily finds an~nsatz $\psi(t)$ for the evolution of the eigenstate, together with a left-eigenstate path $\chi(t)$, as
 \begin{equation*}
 \psi(t)=\frac{1}{2}\begin{pmatrix}1+\cos(t) \\\sin(t) \end{pmatrix}\!, \qquad \chi(t)=\frac{1}{1+\cos(t)}\begin{pmatrix}1+\cos(t) \\\sin(t) \end{pmatrix}^\mathrm{T}\!.
 \end{equation*}
 The factor $1/2$ is introduced to have $\psi(0)=e_1$. Note that $\psi$ and $\chi$ are defined only for $t\in [0,\pi)$; being a (left-)eigenstate, they are not allowed to vanish. Still, for these times we can calculate the lift of $\gamma(t)$ starting at $e_1$.

 We can correct the ansatz $\psi(t)$ following equation~\eqref{eq:lift from geo correction}. We thus evaluate the geometric phase integral, whose integrand is
 \begin{equation*}
 \chi(t) \dot{\psi}(t)=\frac{-\sin(t)}{2(1+\cos(t))}=\frac{1}{2}\frac{\dif}{\dif t}\ln(1+\cos(t)).
 \end{equation*}
 Hence we find the lift of $\gamma$ starting at $e_1$ to be
 \begin{equation*}
 \Psi(t)=\sqrt{\frac{1}{2(1+\cos(t))}}\begin{pmatrix}1+\cos(t) \\ \sin(t) \end{pmatrix}=\begin{pmatrix}\cos(t/2) \\ \sin(t/2) \end{pmatrix}\!.
 \end{equation*}
 As the scale factor is real, we can interpret it as a length correction. This length interpretation clearly shows that our original $\psi(t)$ was varying along itself.

 Note that our final expression of $\Psi(t)$ can be extended to arbitrary times, hence yields the full lift of $\gamma$. Consequently, it is now convenient to obtain the geometric phase via $\Psi(t)$. For lifts, only the logarithm term in equation~\eqref{eq:geo phase general} contributes, and we find the phase (modulo $2\pi$)
 \begin{equation*}
 \gamma_\mathrm{geo}=-{\rm i}\ln([\Psi(2\pi)/\psi_0])=-{\rm i}\ln([-e_1/e_1])=-{\rm i}\ln(-1)=\pi.
 \end{equation*}
 In the perspective of loops in $\Spec(H)$, the above concerns the loop $\bar{\gamma}_+(t):=(\gamma(t),\lambda_+(\gamma(t)))=(\gamma(t),1)$, revealing that $P_{\bar{\gamma}_+}=-\id$. A similar argument shows that the loop $\bar{\gamma}_-(t)=(\gamma(t),-1)$, i.e., considering the other energy band, yields $P_{\bar{\gamma}_-}=-\id$ as well.
\end{Example}

In some situations, one can express the geometric phase alternatively as a curvature integral. A standard argument is as follows. Let $U\subset N(H)$ be a neighborhood on which a local eigenstate $\psi=\psi(x)$ is defined, which has as partner the local left-eigenstate $\chi$. If $\gamma$ is a loop in $U$ that forms the boundary of a surface $S$ also contained in $U$, then the geometric phase is given by
\begin{equation*}
 \gamma_{\mathrm{geo}}={\rm i}\int_\gamma \chi \dif \psi={\rm i}\int_{\partial S} \hat{\psi}^*\omega_H={\rm i}\int_S \hat{\psi}^*(\dif \omega_H)={\rm i}\int_{\hat{\psi}(S)} K_H.
\end{equation*}
We see that the assumption of $\gamma$ being a boundary is essential. If $\gamma$ is a loop but not a boundary, then the geometric phase is not given by a curvature integral. This motivates us to consider homology theory.

In order to do this, it is convenient to work on $\Spec(H)$ instead. Indeed, the integral above can be rewritten as an integral over $k_H$, which is the reduced curvature on $\Spec(H)$. Clearly, the local eigenstate $\psi(x)$ defines an energy function $E(x)$. The map $\hat{E}\from U\to \Spec(H)$, $x\mapsto (x,E(x))$ is then a local section of $\Spec(H)$. Moreover, as $\hat{E}=\pi_v^H\circ \hat{\psi}$, where $\pi_v^H$ is the projection $\Eig(H)\to \Spec(H)$, we have
\begin{equation*}
 \gamma_{\mathrm{geo}}={\rm i}\int_S \hat{\psi}^*(K_H)={\rm i}\int_S \hat{\psi}^*\big(\big(\pi_v^H\big)^*(k_H)\big)={\rm i}\int_S \hat{E}^*(k_H)={\rm i}\int_{\hat{E}(S)} k_H.
\end{equation*}

Let us interpret this result. First, we remark that the integral over $k_H$ is manifestly independent of the chosen local eigenstate $\psi$. More precisely, only the energy bands matter, not the exact eigenstates. In addition, we now consider surfaces in $\Spec(H)$, hence the above argument can be used whenever our path $\bar{\gamma}$ is a boundary. We thus find that the homology of $\Spec(H)$, i.e., the homology of the energy bands, plays a key role in the relation of geometric phase to curvature. Moreover, in case $\Eig(H)$ is flat, the homology theory allows one to see the topological nature of the geometric phase in an explicit way, as shown in the following.
\begin{Proposition}
 Let $\bar{\gamma}$ be a loop in $\Spec(H)$. If $\bar{\gamma}$ is the boundary of a surface $\Sigma$ in $\Spec(H)$, then the geometric phase acquired by traversing $\bar{\gamma}$ equals ${\rm i}\int_\Sigma k_H$.
\end{Proposition}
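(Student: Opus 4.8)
The plan is to recognise $P_{\bar\gamma}$ as the holonomy of the principal $\C^\times$-connection $\omega_H$ on the principal bundle $\pi_v^H\from\Eig(H)\to\Spec(H)$ around the loop $\bar\gamma$, and to evaluate that holonomy through its curvature by Stokes' theorem. This is exactly the local computation carried out just before the statement, except that the surface $\Sigma$ can no longer be assumed to lie in a single chart on which a local eigenstate $\hat\psi$ is defined. The one genuinely new ingredient is therefore a \emph{global} replacement for $\hat\psi$ over all of $\Sigma$, and the key point is to argue intrinsically on the principal bundle over the base $\Spec(H)$ rather than on the covering $\pi_\lambda^H$.

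First I would restrict $\pi_v^H$ to $\Sigma$. Since $\bar\gamma=\partial\Sigma\neq\emptyset$, the compact surface $\Sigma$ deformation retracts onto a $1$-complex, and principal $\C^\times$-bundles are classified by $H^2(-;\Z)$, which vanishes on such a space. Hence the restricted bundle is trivial and admits a smooth section $s\from\Sigma\to\Eig(H)$ with $\pi_v^H\circ s=\iota_\Sigma$, the inclusion. This $s$ takes over the role that $\hat\psi$ played locally, but is now defined over the whole of $\Sigma$.

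Next I would compute the holonomy using $s$. Because $\C^\times$ is abelian, the horizontal-lift equation for a lift of $\bar\gamma$ of the form $t\mapsto s(\bar\gamma(t))\cdot g(t)$ reduces to $\dot g/g=-(s^*\omega_H)(\dot{\bar\gamma})$, so that parallel transport around $\bar\gamma$ is multiplication by $\exp\big({-}\oint_{\bar\gamma}s^*\omega_H\big)$; that is, $P_{\bar\gamma}=\exp\big({-}\oint_{\bar\gamma}s^*\omega_H\big)\cdot\id$. Applying Stokes' theorem together with $K_H=\dif\omega_H$ and the reduction $K_H=(\pi_v^H)^*k_H$ then gives
\begin{equation*}
 \oint_{\bar\gamma}s^*\omega_H=\int_\Sigma\dif(s^*\omega_H)=\int_\Sigma s^*K_H=\int_\Sigma(\pi_v^H\circ s)^*k_H=\int_\Sigma k_H.
\end{equation*}
Comparing with the definition $P_{\bar\gamma}={\rm e}^{{\rm i}\gamma_{\mathrm{geo}}}\cdot\id$ yields ${\rm i}\gamma_{\mathrm{geo}}=-\int_\Sigma k_H$, i.e., $\gamma_{\mathrm{geo}}={\rm i}\int_\Sigma k_H$ modulo $2\pi$, as claimed.

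The main obstacle is precisely the globalisation in the second step: a loop $\bar\gamma$ whose projection to $N(H)$ is non-contractible (for instance one encircling an EP) cannot be filled by a surface covered by a single local eigenstate, so the naive one-chart Stokes argument breaks down. Working on the principal bundle $\pi_v^H$ and using triviality of a $\C^\times$-bundle over a surface with boundary resolves this cleanly, and the abelianness of $\C^\times$ guarantees that no transition-function contributions survive, so the single section $s$ suffices. I would finish by checking that the orientation and sign conventions agree with those fixed by the preceding local computation, so that the factor ${\rm i}$ enters with the correct sign.
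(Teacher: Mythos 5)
Your proof is correct, and it supplies rigorously the one step that the paper only asserts. The paper's own argument is local: it fixes a local eigenstate $\hat{\psi}$ on an open set $U\subset N(H)$, applies Stokes' theorem inside that single chart to get $\gamma_{\mathrm{geo}}={\rm i}\int_S \hat{\psi}^*K_H$, and then uses the reduction $K_H=\big(\pi_v^H\big)^*k_H$ together with $\hat{E}=\pi_v^H\circ\hat{\psi}$ to rewrite this as ${\rm i}\int_{\hat{E}(S)}k_H$; the extension to an arbitrary surface $\Sigma$ in $\Spec(H)$ with $\partial\Sigma=\bar{\gamma}$ is then justified only by the remark that the $k_H$-integral no longer refers to any choice of eigenstate, leaving implicit the patching one would need when $\Sigma$ does not sit over a single chart (e.g., subdividing $\Sigma$ into pieces lying over chart domains and cancelling interior edges by additivity of abelian holonomy). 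Your route replaces that patching by a single topological fact: the principal $\C^\times$-bundle $\pi_v^H\from\Eig(H)\to\Spec(H)$ restricted to a compact surface with non-empty boundary is trivial, since such a surface retracts onto a $1$-complex and $\C^\times$-bundles are classified by $H^2(-;\Z)$. With the resulting global section $s$, one application of Stokes' theorem and the same reduction $s^*K_H=\big(\pi_v^H\circ s\big)^*k_H=k_H|_\Sigma$ finishes the argument, and your abelian holonomy formula $P_{\bar{\gamma}}=\exp\big({-}\oint_{\bar{\gamma}}s^*\omega_H\big)\cdot\id$ agrees in sign with the paper's lift formula and its definition of the geometric phase, so $\gamma_{\mathrm{geo}}={\rm i}\int_\Sigma k_H$ modulo $2\pi$ comes out with the correct factor of ${\rm i}$. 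In short, both arguments rest on the same two pillars---holonomy of an abelian connection as an exponentiated line integral of $s^*\omega_H$, and $K_H=\big(\pi_v^H\big)^*k_H$---but the paper's version emphasizes chart-by-chart eigenstate-independence, while yours trades that bookkeeping for a bundle-triviality argument that makes the global statement genuinely complete.
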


\begin{Corollary}
 If $\Eig(H)$ is flat, then the geometric phase due to a loop in $\Spec(H)$ only depends on the class of the loop in the first homology group $H_1(\Spec(H))$.
\end{Corollary}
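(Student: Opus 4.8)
The plan is to realize the geometric phase as the holonomy of the flat principal $\C^\times$-connection $\omega_H$ and then exploit that the structure group $\C^\times$ is abelian. Recall from the cyclic case that for a loop $\bar\gamma$ in $\Spec(H)$ based at $(x_0,E_0)$ the parallel transport $P_{\bar\gamma}$ is an automorphism of the $\C^\times$-torsor $\Eig_{E_0}(H(x_0))$, i.e., multiplication by ${\rm e}^{{\rm i}\gamma_{\mathrm{geo}}}\in\C^\times$. Since $\pi_v^H$ is a principal bundle and $\omega_H$ a principal connection, standard holonomy theory tells us that concatenation of loops corresponds to multiplication of transports and path-reversal to inversion. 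Thus the assignment $[\bar\gamma]\mapsto P_{\bar\gamma}$ is multiplicative, and the task reduces to showing that it descends from loops to homology classes.

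First I would establish homotopy invariance using flatness. By hypothesis $K_H=\dif\omega_H=0$, and since $K_H=(\pi_v^H)^*(k_H)$ with $\pi_v^H$ a surjective submersion, this forces the reduced curvature $k_H$ to vanish. Now if two loops $\bar\gamma_0,\bar\gamma_1$ are homotopic rel basepoint, the concatenation $\bar\gamma_0\ast\bar\gamma_1^{-1}$ is null-homotopic and hence bounds a (singular) surface $\Sigma$ in $\Spec(H)$. By the preceding Proposition the phase acquired along this loop equals ${\rm i}\int_\Sigma k_H=0$. Using multiplicativity, $\gamma_{\mathrm{geo}}(\bar\gamma_0)=\gamma_{\mathrm{geo}}(\bar\gamma_1)$, so the holonomy factors through a homomorphism $\pi_1(\Spec(H),(x_0,E_0))\to\C^\times$.

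Finally, I would pass from homotopy to homology purely algebraically. Because the target $\C^\times$ is abelian, the homomorphism kills all commutators, hence factors through the abelianization $\pi_1(\Spec(H))^{\mathrm{ab}}$. By the Hurewicz theorem this abelianization is canonically $H_1(\Spec(H);\Z)$, so ${\rm e}^{{\rm i}\gamma_{\mathrm{geo}}}$, and therefore $\gamma_{\mathrm{geo}}$ modulo $2\pi$, depends only on the class of $\bar\gamma$ in $H_1(\Spec(H))$. The abelianness of $\C^\times$ also makes the holonomy a free-homotopy invariant, so no basepoint subtleties remain.

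The main obstacle I expect is the homotopy-invariance step: one must argue cleanly that a null-homotopic loop bounds a surface to which the curvature-integral formula of the preceding Proposition applies, and that the parametrization of the bounding homotopy does not affect the vanishing integral. Once flatness has been converted into $k_H=0$ and applied here, the remaining reduction to $H_1$ via Hurewicz and the abelian target is routine.
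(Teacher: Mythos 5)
Your proposal is correct, but it takes a different route than the paper intends. The paper states this Corollary as an immediate consequence of the preceding Proposition: if two loops in $\Spec(H)$ are homologous, their difference (as a cycle) bounds a $2$-chain $\Sigma$, and by multiplicativity of parallel transport the ratio of the two phase factors is $\exp\big({-}\!\int_\Sigma k_H\big)$, which is $1$ since flatness of $\omega_H$ gives $K_H=0$ and hence $k_H=0$ (exactly by your pullback argument, as $K_H=(\pi_v^H)^*k_H$ with $\pi_v^H$ a surjective submersion). You instead split the argument into an analytic step and an algebraic step: flatness gives homotopy invariance of the holonomy, so that $[\bar\gamma]\mapsto P_{\bar\gamma}$ defines a homomorphism $\pi_1(\Spec(H),(x_0,E_0))\to\C^\times$, and then the abelian target forces this to factor through the abelianization, which is $H_1(\Spec(H))$ by Hurewicz. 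Both are valid. Your version buys a cleaner separation of concerns and explicitly handles the basepoint/conjugation issue (trivial because $\C^\times$ is abelian), and the homotopy-invariance step can even bypass the curvature-integral Proposition entirely: pulling the bundle and connection back along the null-homotopy to the square, a flat connection over a simply connected domain has trivial holonomy, which avoids having to interpret the image of a homotopy as a ``surface'' in the sense of the Proposition. The paper's version is shorter and stays entirely within the Stokes-type statement it just proved, but implicitly requires reading ``surface'' as a singular $2$-chain so that homologous loops literally cobound. The one point you flag as an obstacle is indeed the only place requiring care in your approach, and the standard pullback argument above closes it.
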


\begin{Example} 
 Let us continue Example~\ref{ex:DP 1}. As $H$ is symmetric, by Proposition~\ref{prop:flat connection} the connection is flat. Hence the geometric phases only depend on the homology of $\Spec(H)$, and thus are of topological nature. Clearly, $\Spec(H)$ is the diabolo minus the DP, which is homeomorphic to two punctured planes. Hence $H_1(\Spec(H))\cong \Z^2$, generated by (the classes of) the images of the unit circle under $\lambda_+$ and $\lambda_-$. It thus suffices to know the phase due to each generator, which is $\pi$ following our earlier calculation in Example~\ref{ex:DP 1}.
\end{Example}

It is now straightforward to recognize the bundles introduced by Simon \cite{Simon1983HolonomyPhase} in the bundle $\Eig(H)\to \Spec(H)$. According to Corollary~\ref{cor:real eigenvalues->trivial}, for Hermitian $H$ the bundle $\Spec(H)\to N(H)$ is trivial, i.e., $\Spec(H)$ separates into $n$ distinct energy bands. Let us label these energy bands by $1,\dots,n$ and write $\Spec_k(H)$ for energy band $k$, and similarly $\Eig_k(H)$ for the subbundle over $\Spec_k(H)$. Then $\Eig_k(H)\to\Spec_k(H)$ is a principal $\C^\times$-bundle, which is the bundle used by Simon for energy level $k$ adapted to our language. It is thus geometrically clear why this approach does not apply to non-Hermitian Hamiltonians; in that case the energy bands in $\Spec(H)$ need not be separated but connected via ``spiral staircase'' like structures. If this is the case, i.e., if non-cyclic states appear, then the bundle $\Eig(H)\to \Spec(H)$ can still model these states by parallel transport, but as the path $\bar{\gamma}$ is then not a loop this does not fit in the framework of holonomy.

\subsubsection{The non-cyclic case}

Let us now demonstrate how non-cyclic states do have a holonomy description when using the other projection, i.e., the bundle $\pi_{\lambda v}^H \from \Eig(H) \to N(H)$. Here, holonomy is to be understood in the context of a semi-principal bundle. Similarly to the case above, parallel transport along a~loop $\gamma$ in $N(H)$ based at $x_0$ induces the parallel transport map $P_\gamma$, which is an automorphism of the $\C^\times$-semi-torsor $\Eig(H(x_0))$. The set of all $P_\gamma$ for a fixed base point $x_0$ then defines the holonomy group
\begin{equation*}
 \Hol^{\Eig(H)}_{N(H)}(x_0)=\set{P_\gamma \in \Aut(\Eig(H(x_0)))}{\gamma\in \mathrm{Loop}(N(H),x_0)}.
\end{equation*}
However, these automorphisms are harder to describe. Whereas we could previously identify a~map $P_{\bar{\gamma}}$ with an element in $\C^\times$, this need not be for $P_\gamma$. In particular, by definition a non-cyclic state does not return to the same group orbit, i.e., the initial eigenray, hence no element in $\C^\times$ can relate the initial and final states. Instead, one must find how the eigenrays are transported individually, as illustrated in the following example.

\begin{Example}
 Let us consider a standard EP2 example, using the Hamiltonian family of Example~\ref{ex:start EP2}. We take $x_0=0$ again as our reference point, and from there we encircle the EP at $x=+{\rm i}$ in positive direction by following a path $\gamma$. One can take $\gamma$ to be a circular path, but as the connection is flat by Proposition~\ref{prop:flat connection}, the exact shape of $\gamma$ does not play a role.

 The evolution of the states due to encircling $\gamma$ is captured by the map $P_\gamma \from \Eig(H(0)) \to \Eig(H(0))$. Clearly, $\Eig(H(0))$ is the disjoint union of the two eigenrays, i.e., \begin{equation*}
 \Eig(H(0))=\C^\times e_1 \sqcup \C^\times e_2,
 \end{equation*}
 where $e_1=(1,0)^T$ and $e_2=(0,1)^T$ are the standard basis vectors. Hence, an element of $\Eig(H(0))$ is of the form $ze_1$ or $ze_2$ with $z\in \C^\times$. We emphasize that $\Eig(H(0))$ is a semi-torsor and not a vector space; linear combinations of $e_1$ and $e_2$ are not present, as these are not eigenstates of $H(0)$. Accordingly, $P_\gamma$ is equivariant and not linear.

 We can now calculate $P_\gamma$ as follows. First, it is sufficient to know $P_\gamma(e_1)$ and $P_\gamma(e_2)$ due to equivariance; $P_\gamma(ze_k)=zP_\gamma(e_k)$ for $k=1,2$. That is, it is sufficient to follow $e_1$ and $e_2$ around the EP. This can be done using an explicit parametrization of $\Spec(H)$. One can easily find eigenstates depending on $x$, and following Proposition~\ref{prop:flat connection} we ``normalize'' them so that no geometric phase will appear.\footnote{Note that this uses flatness of the connection. In the non-flat case the geometric phase depends on the exact shape of $\gamma$ and so cannot be expressed using local eigenstates. One should then use, e.g., the ansatz method instead.} Hence we arrive at the expressions
 \begin{equation*}
 \psi_\pm(x)= \mp \frac{1}{\sqrt{2(1+x^2\mp\sqrt{1+x^2}})}\begin{pmatrix}
 -x\\1\mp\sqrt{1+x^2}\end{pmatrix}\!,
 \end{equation*}
 where we remark that $\psi_+$ has a removable singularity at $x=0$; the limit reads $\lim_{x\to 0}\psi_+(x)\allowbreak=e_1$. Hence at $x_0=0$, $\psi_+(0)=e_1$ and $\psi_-(0)=e_2$ as desired. The lifts of these states to $\Eig(H)$ can thus be found by inspection of $\psi_\pm(x)$. Encircling the EP in positive direction swaps the $\pm$ signs, and in addition the overall root in the denominator obtains a factor of ${\rm i}$. Thus, after following $\psi_\pm(x)$ around the EP, we return with ${\rm i}^{-1}\psi_\mp(x)$. At our reference $x_0=0$, this rule becomes $e_1 \mapsto -{\rm i}e_2$, $e_2 \mapsto -{\rm i}e_1$. This information is enough to specify $P_\gamma$, which we thus find to be given by
 \begin{equation} \label{eq:Pgamma EP2}
 P_\gamma \from \quad ze_1 \mapsto -{\rm i}ze_2, \qquad ze_2 \mapsto -{\rm i}ze_1.
 \end{equation}
\end{Example}

In order to study $P_\gamma$, let us consider the following statement showing how a general automorphism of a semi-torsor can be studied by its invariant subspaces, similar to linear algebra.
\begin{Proposition} \label{prop:semi-torsor invariant subspaces}
 Let $G$ be a Lie group, $F$ a $G$-semi-torsor and $\phi \from F\to F$ an automorphism. Then $F=\sqcup_{i\in I'} F_i$, for some index set $I'$, decomposes $F$ into minimal $\phi$-invariant subspaces, i.e., $\phi(F_i)=F_i$ for all $i$. Moreover, if $G$ is Abelian and a particular $F_i$ consists of $k<\infty$ orbits, then $(\phi|_{F_i})^k$ equals translation by an element of $G$.
\end{Proposition}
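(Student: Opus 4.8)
The plan is to exploit that an automorphism of a $G$-semi-torsor is by definition $G$-equivariant, so that $\phi$ must respect the orbit structure. Concretely, for any $x\in F$ equivariance gives $\phi(G\cdot x)=G\cdot \phi(x)$, so $\phi$ carries each $G$-orbit bijectively onto another $G$-orbit. Writing $F=\sqcup_{j\in I}O_j$ for the decomposition into orbits, this means $\phi$ induces a genuine permutation $\bar\phi$ of the index set $I$, determined by $\phi(O_j)=O_{\bar\phi(j)}$.

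First I would obtain the decomposition from the cycle structure of $\bar\phi$. Any permutation of a set partitions it into its cycles, i.e.\ the orbits of the group $\langle\bar\phi\rangle$ (possibly infinite); write $I=\sqcup_{i\in I'}I_i$ for this partition and set $F_i:=\sqcup_{j\in I_i}O_j$. Each $F_i$ is a $G$-invariant, $\phi$-invariant sub-semi-torsor, and it is minimal among such: any nonempty $\phi$-invariant union of orbits inside $F_i$ projects to a nonempty $\bar\phi$-invariant subset of the single cycle $I_i$, which must be all of $I_i$. This yields the claimed decomposition $F=\sqcup_{i\in I'}F_i$ into minimal $\phi$-invariant subspaces.

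For the second statement I would specialize to Abelian $G$ and a fixed $F_i$ whose cycle $I_i$ has $k<\infty$ elements, say $O_1,\dots,O_k$ with $\phi(O_j)=O_{j+1}$ cyclically. Then $\phi^k$ preserves every $O_j$, so $\phi^k|_{O_j}$ is an equivariant self-map of a single torsor. The elementary fact I would use is that such a self-map is a translation: fixing $x_j\in O_j$, freeness and transitivity give a unique $h_j\in G$ with $\phi^k(x_j)=h_j\cdot x_j$, and equivariance together with commutativity yields $\phi^k(g\cdot x_j)=g\cdot h_j\cdot x_j=h_j\cdot(g\cdot x_j)$, i.e.\ $\phi^k|_{O_j}=t_{h_j}$ with $t_h\from x\mapsto h\cdot x$ (which is equivariant precisely because $G$ is Abelian).

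It remains --- and this is the crux --- to show that this translation element is the \emph{same} $h$ on all $k$ orbits, so that $(\phi|_{F_i})^k=t_h$ globally rather than orbit-by-orbit. Here I would use that $\phi$ commutes with its own power. Starting from any $x_1\in O_1$ and propagating by $x_{j+1}:=\phi(x_j)$, write $\phi^k(x_1)=h\cdot x_1$; then
\begin{equation*}
 \phi^k(x_j)=\phi^k\big(\phi^{j-1}(x_1)\big)=\phi^{j-1}\big(\phi^k(x_1)\big)=\phi^{j-1}(h\cdot x_1)=h\cdot\phi^{j-1}(x_1)=h\cdot x_j,
\end{equation*}
so $h_j=h$ for every $j$, and hence $(\phi|_{F_i})^k=t_h$ is translation by a single element of $G$. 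I expect the main obstacle to be exactly this coordination across orbits: the Abelian hypothesis is already needed to realize each $\phi^k|_{O_j}$ as the action of a single group element, and it is the intertwining $\phi\circ\phi^k=\phi^k\circ\phi$ together with equivariance that forces these per-orbit elements to coincide.
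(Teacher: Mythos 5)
Your proof is correct and follows essentially the same route as the paper's: the decomposition via the cycle structure of the induced permutation on orbits, the per-orbit translation element obtained from freeness, transitivity and commutativity, and the propagation of that element across the cycle. Your identity $\phi^k(\phi^{j-1}(x_1))=\phi^{j-1}(\phi^k(x_1))$ combined with equivariance is exactly the paper's step ``$\phi^k(\phi(f))=g(f)\phi(f)$'', just iterated explicitly, and your added minimality argument only spells out what the paper leaves implicit.
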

\begin{proof}
 Clearly, the index set $I'$ is the original orbit space modulo the relation that orbits mapped into one another by $\phi$ are identified. If $F_i$ consists of $k$ orbits, then by minimality $(\phi|_{F_i})^k$ preserves the orbits in $F_i$. Picking $f\in F_i$, we find $\phi^k(f)=g(f) f$ for some $g(f)\in G$. If~$G$ is Abelian, then $g(f)$ is constant on the orbit through $f$. In addition, as equivariance yields $\phi^k(\phi(f))=g(f) \phi(f)$, we see $\phi(f)$ is scaled by the same element. It follows that $g(f)$ is constant on the subspace $F_i$.
\end{proof}

For the map $P_\gamma$, it is clear that a minimal subspace is any minimal union of eigenrays whose energies are permuted upon traversing $\gamma$. The element of $\C^\times$ associated to such a minimal union is a phase factor. Indeed, if there are $k$ rays in the union, then a ray first returns to itself by following $\gamma$ exactly $k$ times, after which is has obtained precisely this phase factor. Again, if $k>1$ the state is non-cyclic and there is no definite phase between an initial state $\psi_0$ in the union and its transport $P_\gamma(\psi_0)$. We come back to this in Section~\ref{sec:holonomy}, where we treat how $P_\gamma$ can be expressed via a holonomy matrix. Of course, if $k=1$, i.e., $\psi_0$ is cyclic, then we do recover the geometric phase. We summarize these findings in the following.

\begin{Corollary}
 Given a loop $\gamma$, the invariant subspaces of $P_\gamma$ are the minimal unions of eigenrays. If a union consists of $k$ eigenrays, then the characteristic phase of the union equals the geometric phase obtained by traversing the loop $\bar{\gamma}^k$, where $\bar{\gamma}$ is the lift of $\gamma$ to $\Spec(H)$ with an energy corresponding to any of the eigenrays in the union.
\end{Corollary}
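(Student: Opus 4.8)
The plan is to apply Proposition~\ref{prop:semi-torsor invariant subspaces} with $G=\C^\times$, $F=\Eig(H(x_0))$, and $\phi=P_\gamma$. The semi-torsor $\Eig(H(x_0))$ is exactly the disjoint union of the eigenrays of $H(x_0)$, and $P_\gamma$ permutes these eigenrays according to the energy monodromy $p_\gamma$ of Section~\ref{sec:Spec(H)}: since the horizontal lift defined by $\omega_H$ projects under the energy coordinate $\pi_\lambda^H$ to the covering lift of $\gamma$, the orbit-level action of $P_\gamma$ realizes $p_\gamma$. The first claim then follows immediately, as the minimal $P_\gamma$-invariant subspaces are precisely the minimal unions of eigenrays closed under this permutation, i.e., the unions corresponding to the cycles of $p_\gamma$.

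For the second claim, fix such an invariant subspace $F_i$ consisting of $k$ eigenrays. Because $\C^\times$ is Abelian, Proposition~\ref{prop:semi-torsor invariant subspaces} yields that $(P_\gamma|_{F_i})^k$ equals scaling by a single element $g\in\C^\times$, which is by definition the characteristic phase of the union. I would then identify $g$ with the geometric phase factor of the loop $\bar{\gamma}^k$ in $\Spec(H)$. Picking a state $\psi_0$ in one eigenray of $F_i$ over the energy $E_0$, following $\gamma$ a total of $k$ times cycles the eigenray of $\psi_0$ back to itself; tracked on the energy coordinate this is exactly the covering lift of $\gamma^k$ starting at $E_0$, which is the loop $\bar{\gamma}^k$ in $\Spec(H)$.

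The crux is to observe that $(P_\gamma|_{F_i})^k$, restricted to the ray of $\psi_0$, coincides with the principal-bundle parallel transport $P_{\bar{\gamma}^k}$. Both are computed from the \emph{same} connection $\omega_H$ on the single total space $\Eig(H)$. The horizontal lift of $\gamma^k$ through $\psi_0$ is one curve in $\Eig(H)$; its projection under $\pi_{\lambda v}^H$ is $\gamma^k$, while its projection under $\pi_v^H$ is $\bar{\gamma}^k$, consistently because $\pi_{\lambda v}^H=\pi_\lambda^H\circ\pi_v^H$. Hence the two endpoints agree, giving $g=P_{\bar{\gamma}^k}=\mathrm{e}^{\mathrm{i}\gamma_\mathrm{geo}}$ by the cyclic-case analysis, and the characteristic phase equals the geometric phase of $\bar{\gamma}^k$ as asserted.

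The main obstacle I expect is precisely this identification of the two parallel-transport operations as a single horizontal lift: one must check that the covering lift $\bar{\gamma}^k$, determined by continuity of the energy coordinate, agrees with the energy coordinate of the $\omega_H$-horizontal lift of $\gamma^k$. Once this compatibility of the semi-principal and principal structures under the shared connection is made explicit, the geometric-phase identification reduces immediately to the already-established cyclic case.
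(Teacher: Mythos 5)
Your proposal is correct and follows essentially the same route as the paper: apply Proposition~\ref{prop:semi-torsor invariant subspaces} to $\phi=P_\gamma$, identify the minimal invariant subspaces with the cycles of the energy permutation $p_\gamma$, and recognize the characteristic phase of a $k$-cycle as the geometric phase of the loop $\bar{\gamma}^k$ in $\Spec(H)$. The only difference is that you spell out the compatibility step the paper leaves implicit, namely that the $\omega_H$-horizontal lift of $\gamma^k$ projects under $\pi_v^H$ to the covering lift $\bar{\gamma}^k$, so that $(P_\gamma)^k$ restricted to the ray of $\psi_0$ coincides with the principal-bundle transport $P_{\bar{\gamma}^k}$.
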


It is also clear that $P_\gamma$ contains the information of the underlying permutation $p_\gamma$ of the energies. This is formally expressed by $\pi_v^H$ being a bundle map that is equivariant w.r.t.\ the collapsing quotient homomorphism $\C^\times \to 1$. As $\pi_v^H$ maps $\Eig(H)$ to $\Spec(H)$, this quotient reduces $P_\gamma$ to $p_\gamma$.
\begin{Proposition} \label{prop:Pgamma reduces to pgamma}
 For any $\gamma\in \Loop(N(H),x_0)$, the map $P_\gamma$ induces the following commutative diagram:
 \begin{equation*}
 \begin{tikzcd}[ampersand replacement=\&]
 \Eig(H(x_0)) \ar{r}{P_\gamma} \ar{d}{\pi_v^H} \& \Eig(H(x_0)) \ar{d}{\pi_v^H}\\
 \Spec(H(x_0)) \ar{r}{p_\gamma} \& \Spec(H(x_0)).
 \end{tikzcd}
 \end{equation*}
\end{Proposition}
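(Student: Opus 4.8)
The plan is to prove the commutativity $\pi_v^H \circ P_\gamma = p_\gamma \circ \pi_v^H$ on the fiber $\Eig(H(x_0))$ by tracing a single point through both routes and invoking uniqueness of lifts. The conceptual content is that $P_\gamma$ and $p_\gamma$ are both produced by lifting the \emph{same} loop $\gamma$, and that the bundle morphism $\pi_v^H$ carries horizontal lifts (in the sense of the connection $\omega_H$ on the semi-principal bundle) to covering-space lifts (in the sense of the covering $\pi_\lambda^H$ on $\Spec(H)$).

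First I would fix a starting point $(x_0,E_0,\psi_0)$ in the fiber $\Eig(H(x_0))$ and, taking $\gamma$ defined on $[0,T]$, let $\Gamma(t)=(\gamma(t),E(t),\Psi(t))$ denote the horizontal lift of $\gamma$ with respect to $\omega_H$ issuing from this point, so that by definition $P_\gamma(x_0,E_0,\psi_0)=\Gamma(T)$. Next I would observe that $\pi_v^H \circ \Gamma$ is a continuous path in $\Spec(H)$ that covers $\gamma$: applying $\pi_\lambda^H$ and using the factorization $\pi_{\lambda v}^H=\pi_\lambda^H \circ \pi_v^H$ (the pullback by $H$ of the decomposition established earlier), one gets $\pi_\lambda^H\big(\pi_v^H(\Gamma(t))\big)=\pi_{\lambda v}^H(\Gamma(t))=\gamma(t)$, since $\Gamma$ is by construction a lift of $\gamma$ to $\Eig(H)$.

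The heart of the argument is then the uniqueness of path lifting for the covering map $\pi_\lambda^H$, which is a covering by Theorem~\ref{thm:Spec(V) bundle over N(V)} pulled back along $H$. The path $\pi_v^H\circ\Gamma$ is a lift of $\gamma$ to $\Spec(H)$ starting at $\pi_v^H(x_0,E_0,\psi_0)=(x_0,E_0)$; such a lift is unique, and its endpoint is precisely $p_\gamma(x_0,E_0)$ by the definition of the monodromy map. Evaluating at $t=T$ then yields
\begin{equation*}
 \pi_v^H\big(P_\gamma(x_0,E_0,\psi_0)\big)=\pi_v^H(\Gamma(T))=p_\gamma(x_0,E_0)=p_\gamma\big(\pi_v^H(x_0,E_0,\psi_0)\big),
\end{equation*}
which is the desired commutativity, the point being arbitrary.

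I expect the only genuine subtlety to lie in reconciling the two a priori different notions of lift: the horizontal lift determined by the connection $\omega_H$ on the semi-principal bundle, versus the covering lift determined by unique path lifting on $\Spec(H)$. Once one checks that $\pi_v^H\circ\Gamma$ is an admissible covering lift of $\gamma$ for $\pi_\lambda^H$, uniqueness forces it to coincide with the monodromy path, and the rest is bookkeeping. Abstractly this is an instance of the general fact that a connection-preserving morphism of (semi-)principal bundles intertwines the induced parallel transports; here $\pi_v^H$ is equivariant along the collapsing homomorphism $\C^\times\to 1$, which reduces $\omega_H$ to the trivial connection underlying the covering $\Spec(H)$, as recorded in \cite{Pap2020FramesTheory}.
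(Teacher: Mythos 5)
Your proof is correct and is essentially the paper's argument made explicit: the paper justifies the proposition by noting that $\pi_v^H$ is a bundle map equivariant w.r.t.\ the collapsing homomorphism $\C^\times \to 1$, which is exactly your closing observation, and your path-lifting argument (horizontal lift for $\omega_H$ projects via $\pi_{\lambda v}^H=\pi_\lambda^H\circ\pi_v^H$ to the unique covering lift defining the monodromy) is precisely the detail that remark leaves implicit. No gaps; the uniqueness-of-lifts step you flag as the only subtlety is handled correctly.
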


\subsection{Including the dynamical phase} 

The geometric properties of adiabatic dynamics for non-degenerate operators, Hermitian or non-Hermitian, can be described on $\Eig(V)$ using the connection $\omega$. This leaves out an important non-geometrical property, namely the dynamical phase. Nevertheless, $\Eig(V)$ does support a~calculation of the dynamical phase. This builds on a particular complex-valued function, which simply extracts the eigenvalue from the elements in $\Eig(V)$, i.e.,
 \begin{align*}
 \mathcal{E} \from\ \Eig(V) &\to \C,
 \\
 (A,\lambda,v) &\mapsto \lambda.
 \end{align*}
Given a lift $\Gamma$ in $\Eig(V)$, the corresponding dynamical phase is then the integral
\begin{equation*}
 \gamma_\mathrm{dyn}=-\frac{\rm i}{\hbar}\int \mathcal{E}(\Gamma(t)) \dif t.
\end{equation*}

It is even possible to put the dynamical phase explicitly in the lift. Clearly, the lift $\Gamma$ without dynamical phase vanishes under the covariant derivative
\begin{equation*}
 D={\rm d}+\omega.
\end{equation*}
If one modifies this to
\begin{equation*}
 D_\mathrm{tot}={\rm d}+\omega+\frac{\rm i}{\hbar}\mathcal{E}
\end{equation*}
and calculates the lift $\Gamma_\mathrm{tot}$ of $\gamma$ given by $D_\mathrm{tot}\Gamma_\mathrm{tot}=0$, then $\Gamma_\mathrm{tot}$ is the path of the eigenstate, including the dynamical phase, assuming $\gamma$ is parametrized by physical time.

\subsection{Relation with the work of Aharonov and Anandan} \label{sec:AA}

We found that $\Eig(V)$ provides a framework suitable for any non-degenerate finite-dimensional Hamiltonian, including non-Hermitian Hamiltonians in particular. However, this brings us to the question how the theory of Hermitian systems relates to $\Eig(V)$. The geometric framework for such Hermitian cases was pioneered by Aharonov and Anandan in \cite{Aharonov1987PhaseEvolution} (see also the elaboration in \cite{Anandan1990GeometryEvolution}). The geometric spaces found there differ significantly from the spaces obtained here. Nevertheless, we will show that they can be obtained from $\Eig(V)$.

Let us summarize the theory of \cite{Aharonov1987PhaseEvolution}, rephrasing it in line with our approach to $\Eig(V)$. First, the state space $V$ is now assumed to be a Hilbert space, i.e., $V$ should be equipped with a~Her\-mi\-tian inner product $\inn{\;}{\;}$. This allows one to define the unit sphere $S_1(V)$ inside $V$ by restricting to norm 1 states. As norm 1 fixes a state up to a~$U(1)$-phase, $S_1(V)$ is naturally a $U(1)$-manifold, and the quotient is the projective space $P(V)$ of $V$. This quotient defines the principal bundle
\begin{equation} \label{eq:AA bundle}
 \begin{tikzcd}
 U(1) \ar{r} & S_1(V) \ar{r} & P(V),
 \end{tikzcd}
\end{equation}
which is the central object in the formalism. The base $P(V)$ can be viewed as the space of rays in $V$, but also as the set of projectors projecting to a line. That is, the ray through $\ket{\psi}\in S_1(V)$ can be identified with the projection operator $\ket{\psi}\bra{\psi}$. The metric on $V$ obtained from the inner product $\inn{\;}{\;}$ induces a metric tensor on $S_1(V)$, and hence a connection 1-form. On a path $\ket{\psi(t)}$ in $S_1(V)$, this connection yields $\inn{\psi(t)}{\dot{\psi}(t)}$. If one lifts a loop from $P(V)$ to $S_1(V)$, it follows that the final state must lie in the same ray as the initial state. In this case, one says that the state is cyclic. If the path $\ket{\psi(t)}$ satisfies $\ket{\psi(1)}=\ket{\psi(0)}$, then the obtained Aharonov--Anan\-dan~(AA) phase is given by the integral ${\rm i}\int_0^1 \inn{\psi(t)}{\dot{\psi}(t)}\dif t$. This coincides with the adiabatic Berry phase if $\ket{\psi(t)}$ evolves adiabatically.

The AA phase is thus a generalization of the Berry phase from adiabatic state evolution to any path of states. Hence the AA phase is viewed as a non-adiabatic generalization. Still, we argue that it can be obtained from $\Eig(V)$, equipped with the ``adiabatic'' connection $\omega$. That is, we will show that the bundle in equation~\eqref{eq:AA bundle}, including connection, can be obtained by collapsing the principal bundle $\pi_v \from \Eig(V) \to \Spec(V)$. This can be done in two steps. First, we use the inner product to restrict the operators to the Hermitian ones and the vectors to normalized ones. After this, listing the operator will be redundant, and the second step is to discard it.

Let us describe the first step. The only additional ingredient we use is the chosen inner product $\inn{\;}{\;}$ on $V$. It allows us to talk about Hermitian operators, and we restrict $N(V)$ accordingly to the closed subset
\begin{equation*}
 N_{\inn{\;}{\;}}(V):=\set{A\in N(V)}{A^\dagger=A \text{ with respect to }\inn{\;}{\;}}
\end{equation*}
of all non-degenerate Hermitian operators on $V$. This is a non-canonical subset of $N(V)$; a~dif\-fe\-rent inner product may yield a different subset. Consequently, there are the subbundles over $N_{\inn{\;}{\;}}(V)$ given by
\begin{gather*}
 \Spec_{\inn{\;}{\;}}(V):=\set{(A,\lambda)\in \Spec(V)}{A^\dagger=A \text{ with respect to }\inn{\;}{\;}},
 \\
 \Eig_{\inn{\;}{\;}}(V):=\set{(A,\lambda,v)\in \Eig(V)}{A^\dagger=A,\norm{v}=1 \text{ with respect to }\inn{\;}{\;}}.
\end{gather*}
The $\C^\times$-action on $\Eig(V)$ reduces to $U(1)$-phase rotation on $\Eig_{\inn{\;}{\;}}(V)$, which has quotient space $\Spec_{\inn{\;}{\;}}(V)$.

The key observation now is that, after this restriction, we no longer need to know $A$ and~$\lambda$ to compute the eigencovector and eigenprojectors. Indeed, given the normalized vector $\ket{v}$, the covector $\theta$ is $\bra{v}$ and the eigenprojector is $\ket{v}\bra{v}$, regardless of the exact $A$ and~$\lambda$. Hence, we are motivated to discard $A$ and~$\lambda$.
To do so, it is convenient to describe $\Spec(V)$ using projectors. Clearly, any pair $(A,\lambda)\in \Spec(V)$ defines an eigenprojector $P_{(A,\lambda)}$ projecting on the eigenspace of $A$ corresponding to $\lambda$. Conversely, given an eigenprojector $P$, the eigenvalue can be retrieved from the identity $AP=PA=\lambda P$. Hence we may write $(A,\lambda)\in \Spec(V)$ equivalently as $(A,P_{(A,\lambda)})$. For $\Spec_{\inn{\;}{\;}}(V)$, we may even write an element as $(A,\ket{v}\bra{v})$, where~$\ket{v}$ is a~normalized eigenvector of $A$.

We can now perform the second step, i.e., the reduction. Reducing $\Spec_{\inn{\;}{\;}}(V)$ to the space~$P(V)$ is straightforward using the projector description where the element $(A,\ket{v}\bra{v})$ goes to~$\ket{v}\bra{v}$. Reducing $\Eig_{\inn{\;}{\;}}(V)$ to $S_1(V)$ is similar; we only keep the vector. Together, these maps define a morphism of bundles as follows:
\begin{equation*}
 \begin{tikzcd}
 \Eig_{\inn{\;}{\;}}(V) \ar{d}{\pi_v} \ar[dashed]{r} & S_1(V) \ar{d} && (A,\lambda,\ket{v}) \ar[mapsto]{r} \ar[mapsto]{d} & \ket{v} \ar[mapsto]{d}\\
 \Spec_{\inn{\;}{\;}}(V) \ar[dashed]{r} & P(V), && (A,\ket{v}\bra{v}) \ar[mapsto]{r} & \ket{v}\bra{v}.
 \end{tikzcd}
\end{equation*}

Our final claim is that the connection $\omega$ on $\Eig(V)$ also carries over to $S_1(V)$. This again follows the same two steps. First, clearly $\omega$ restricts to a $U(1)$-connection on $\Eig_{\inn{\;}{\;}}(V)$. Second, this restricted form admits push-forward to $S_1(V)$. Concerning explicit formulas, this push-forward is given by substituting $\theta=\bra{v}$, which yields exactly the connection used to define the~AA phase. The following then summarizes these findings.

\begin{Proposition}
 The restricted projection $\Eig_{\inn{\;}{\;}}(V) \to \Spec_{\inn{\;}{\;}}(V)$ has a canonical reduction to the bundle $S_1(V) \to P(V)$. Moreover, the restricted connection on $\Eig_{\inn{\;}{\;}}(V)$ admits push-forward to $S_1(V)$, yielding the standard $U(1)$-connection as used for the AA phase.
\end{Proposition}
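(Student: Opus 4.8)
The plan is to establish both assertions by writing down the reduction maps explicitly and then tracking the connection through them, using throughout the observation recorded above: once one restricts to Hermitian operators and unit vectors, the eigencovector and eigenprojector are fixed by $\ket{v}$ alone, namely $\theta=\bra{v}$ and $P=\ket{v}\bra{v}$, so that $A$ and $\lambda$ carry no further information. This redundancy is exactly what makes the reduction canonical once the inner product is fixed.

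First I would define the two horizontal maps of the square, namely $p\from\Eig_{\inn{\;}{\;}}(V)\to S_1(V)$, $(A,\lambda,\ket{v})\mapsto\ket{v}$, and $\phi\from\Spec_{\inn{\;}{\;}}(V)\to P(V)$, $(A,\lambda)\mapsto\ket{v}\bra{v}$, where the latter uses the eigenprojector description of $\Spec_{\inn{\;}{\;}}(V)$. Both are smooth, since the eigenprojector of a non-degenerate operator depends smoothly on the pair $(A,\lambda)$, and both are surjective, because any unit vector $\ket{v}$ arises as an eigenvector of some non-degenerate Hermitian $A$ (complete $\ket{v}$ to an orthonormal basis and assign distinct real eigenvalues). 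The map $p$ is $U(1)$-equivariant for the phase action, and the resulting square commutes, as either route sends $(A,\lambda,\ket{v})$ to $\ket{v}\bra{v}$.

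Next I would verify that this bundle morphism is fiberwise an isomorphism, which is what upgrades it to a reduction in the sense of \cite{Pap2020FramesTheory}. The fiber of the restricted $\pi_v$ over $(A,\lambda)$ is the circle of unit eigenvectors $\{z\ket{v}\mid |z|=1\}$, while the fiber of $S_1(V)\to P(V)$ over $\ket{v}\bra{v}$ is the unit circle of that same ray, and $p$ identifies them by the identity. Hence each fiber map is an isomorphism of $U(1)$-torsors, so that $\Eig_{\inn{\;}{\;}}(V)$ is canonically the pullback $\phi^{*}S_1(V)$ and $S_1(V)\to P(V)$ is realized as the announced canonical reduction.

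For the connection I would restrict $\omega=\theta\,\dif v$ and substitute $\theta=\bra{v}$, obtaining $\omega|_{\Eig_{\inn{\;}{\;}}(V)}=\inn{v}{\dif v}$. As the right-hand side depends only on the $\ket{v}$-coordinate, it is literally the pullback $p^{*}\inn{v}{\dif v}$ of the $1$-form $\inn{v}{\dif v}$ on $S_1(V)$; equivalently, the restricted form annihilates every vector tangent to a fiber of $p$ (along which $\ket{v}$, hence $\dif v$, is constant) and is constant along those fibers, so it pushes forward to this form, which is exactly the connection defining the AA phase. Being the restriction of a $U(1)$-connection transported through a fiberwise isomorphism of $U(1)$-bundles, the descended form is again a $U(1)$-connection. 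The step I expect to demand the most care is the smoothness and surjectivity of the base map $\phi$, which is genuinely many-to-one: it collapses the whole family of Hermitian operators sharing a given eigenray. Establishing that $\phi$ is nonetheless smooth is what rigorously licenses discarding $A$ and $\lambda$; with that in hand, the equivariance, commutativity, and descent checks above are routine.
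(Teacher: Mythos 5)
Your proposal is correct and follows essentially the same route as the paper, whose ``proof'' is the discussion preceding the proposition: both rest on the key observation that after restricting to Hermitian operators and unit vectors the data $(A,\lambda)$ become redundant ($\theta=\bra{v}$, $P=\ket{v}\bra{v}$), so the reduction maps are defined by discarding them, and the connection descends by the substitution $\theta=\bra{v}$, giving $\inn{v}{\dif v}$. You merely make explicit several checks the paper leaves implicit (surjectivity of the base map, the fiberwise $U(1)$-torsor isomorphism, and the pullback identity $\omega|_{\Eig_{\inn{\;}{\;}}(V)}=p^{*}\inn{v}{\dif v}$), all of which are sound.
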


We conclude that the above projection allows us to translate the general theory of $\Eig(V)$ to more specific results, enabled by a Hermitian inner product. We will use the phrase ``in the Hermitian case'' to indicate such a passage has happened. We already saw that notationally this amounts to replacing $\theta$ by $\bra{v}$, but the broad picture contains more. For example, $\Eig(V)$ is locally partitioned according to eigenvalues, while this is completely absent in $S_1(V)$.

\subsection{Quantum geometric tensor}

We will show that the covariant derivatives naturally define a tensor on $\Eig(V)$, which is a~straightforward generalization of the quantum geometric tensor. We also consider its redu\-ced version on $\Spec(V)$. Before we discuss the tensor itself, we first introduce convenient bases of tangent spaces of $\Eig(V)$ and $\Spec(V)$. Afterwards we comment on a relation between geometric phase and distance.

\subsubsection{Bases for tangent space of spectrum and eigenvector bundles}

Let us formulate (complex) bases for the tangent spaces $T_{(A,\lambda,v)}\Eig(V)$ and $T_{(A,\lambda)}\Spec(V)$. The idea is that we pass the $v$, $\lambda$ and $A$ components one-by-one. In this way, both tangent spaces can be described in a similar way.

Let us start with the eigenvector part. Clearly, the main difference between the two tangent spaces is that $T_{(A,\lambda,v)}\Eig(V)$ has a tangent along the eigenray $\Eig_\lambda(A)$. This direction is naturally spanned by the fundamental tangent vector $\partial_v$ originating from the scaling action. An~advantage of picking $\partial_v$ is that we may pick the remaining tangent vectors in $T_{(A,\lambda,v)}\Eig(V)$ to be horizontal. These tangent are then the horizontal lifts of unique tangents in $T_{(A,\lambda)}\Spec(V)$, which means we cover both tangent spaces simultaneously.

We continue with the eigenvalue part. Obviously, a change of eigenvalue must be accompanied by a change of operator. Hence, let us change the operator only by what is absolutely necessary. Writing $P=v\theta$ for the eigenprojector of $A$ corresponding to $\lambda$, a shift in $\lambda$ is then given by the tangent
\begin{equation*}
 \partial_\lambda=\dtzero (A+tP,\lambda+t,v,\theta).
\end{equation*}
We must then find $n-1$ other tangent vectors for the other eigenvalues of $A$. These can be obtained via paths of the form $(A+tP',\lambda,v,\theta)$, with $P'$ the corresponding eigenprojector. Note that $\lambda$ is constant as we vary another eigenvalue, but still consider tangents at $(A,\lambda,v)$. The combination of these $n$ tangent vectors defines the tuple $\partial_{\tilde{\lambda}}$, where we pick an ordering of $\Spec(A)$. Writing the coefficients of these vectors as $\Delta \tilde{\lambda}$, again cf the ordering, we obtain the linear combination $\Delta \tilde{\lambda} \partial_{\tilde{\lambda}}$. The corresponding tangents in $T_{(A,\lambda)}\Spec(V)$ are similar.

It thus remains to describe all changes in operator and eigenvector, where the eigenvector should not change along itself. As we should now avoid to change the spectrum of $A$, let us use similarity transformations. That is, we conjugate by the operator ${\rm e}^{{\rm i}tX}$, where ${\rm i}X\in \End(V)$ can be viewed as a generator; the extra $\rm i$ is for later convenience. We then obtain the infinitesimal conjugation action (extending equation~\eqref{eq:GL(V)-action N(V)}), which we view as the linear map $\End(V)\to T_{(A,\lambda,v)}\Eig(V)$ given by
\begin{equation*}
 {\rm i}X\mapsto \dtzero \big({\rm e}^{{\rm i}tX}A{\rm e}^{-{\rm i}tX},\lambda,{\rm e}^{{\rm i}tX}v,\theta {\rm e}^{-{\rm i}tX}\big).
\end{equation*}
As is, this parametrization of $T_{(A,\lambda,v)}\Eig(V)$ is not compatible with our earlier choices. For instance, if ${\rm i}X=P$ we retrieve $\partial_v$, and for other eigenprojectors the obtained tangent vanishes. Hence we require $X$ to be free of eigenprojectors of $A$, which means that the matrix of $X$ w.r.t.\ any eigenframe of $A$ has zero diagonal. In this case, we say $X$ is $A$-free. Note that the obtained tangent is horizontal if and only if $\theta X v=0$, which is automatically satisfied for $A$-free $X$.

In summary, we may express a general element $\mathfrak{v} \in T_{(A,\lambda,v,\theta)}\Eig(V)$ as
\begin{equation*}
 \mathfrak{v}={\rm i}X+\Delta \tilde{\lambda}\partial_{\tilde{\lambda}}+z\partial_v,
\end{equation*}
where for $T_{(A,\lambda)}\Spec(V)$ the first two terms suffice. This is a basis if we impose $X$ to be $A$-free, in which case the terms span a subspace of dimension $n^2-n$, $n$ and 1, respectively. The values of $\dif v$ and $\dif \theta$ then read
\begin{equation*} 
 \dif v(\mathfrak{v})=\dtzero {\rm e}^{{\rm i}tX}{\rm e}^{tz} v=({\rm i}X+z)v, \qquad
 \dif \theta(\mathfrak{v})=-\theta({\rm i}X+z).
\end{equation*}
We finally remark that it is possible to not impose $X$ to be $A$-free; the map to $T_{(A,\lambda,v)}\Eig(V)$ is then still surjective, but no longer injective. This can be convenient in practice; $X$ can play the role of a Schr{\"o}dinger Hamiltonian, which need not be $A$-free. We will keep this in mind when describing the tensors in the following.

\subsubsection{Generalized quantum geometric tensor}

On the space $\Eig(V)$ there is a canonical tensor which in the Hermitian case reduces to the quantum geometric tensor (QGT). This QGT was first reported in \cite{Provost1980RiemannianStates}, where it was found by looking at infinitesimal distance between states. Its anti-symmetric part was later recognized to essentially be the Berry curvature, demonstrating its relevance to the geometric phase, while its symmetric part yields a metric on parameter space \cite{Berry1989TheAfter}. We now show that $\Eig(V)$ supports a~more general tensor, which relates directly to covariant derivatives. Moreover, this generalized tensor makes no reference to an inner product. It is thus also incorporates a generalization of the QGT based on $\PT$-symmetry as reported in \cite{Zhang2019QuantumMechanics}.

Let us start from the standard expression. Fixing local coordinates on $N(H)$ and a local normalized eigenstate $\ket{\psi}=\ket{\psi(x)}$, the QGT is given by
\begin{equation*}
 T_{ij}=\bra{\partial_i\psi} (1-\ket{\psi}\bra{\psi})\ket{\partial_j\psi}\!.
\end{equation*}
We regard this to be the pull-back of a more abstract tensor $\G$ on $\Eig(V)$. Clearly, this $\G$ is simply given by
\begin{equation} \label{eq:QGT with projector}
 \G_{(A,v,\theta)}(\mathfrak{v}_1,\mathfrak{v}_2)=\dif\theta(\mathfrak{v}_1)(1-v\theta)\dif v(\mathfrak{v}_2), \qquad \mathfrak{v}_1, \mathfrak{v}_2 \in T_{(A,v,\theta)}\Eig(V).
\end{equation}
Because of this straightforward generalization from the Hermitian case to $\Eig(V)$, we refer to~$\G$ as the \emph{(generalized) quantum geometric tensor}. In addition, $\G$ is a natural tensor on $\Eig(V)$ in the following sense. Observe that the projector $1-v\theta$ is naturally obtained by taking the covariant derivative, as
\begin{equation*}
 \nabla v(\mathfrak{v})=\dif v(\mathfrak{v})-\omega(\mathfrak{v})v=\dif v(\mathfrak{v})-\theta(\dif v(\mathfrak{v}))v=(1-v\theta)\dif v(\mathfrak{v}), \qquad
 \mathfrak{v} \in T_{(A,v,\theta)}\Eig(V),
\end{equation*}
and similarly $\nabla \theta=\dif \theta (1-v\theta)$. We thus observe that $\G$ is the natural combination
\begin{equation*}
 \G_{(A,v,\theta)}(\mathfrak{v}_1,\mathfrak{v}_2)=\nabla \theta(\mathfrak{v}_1)(\nabla v(\mathfrak{v}_2)), \qquad
 \mathfrak{v}_1, \mathfrak{v}_2 \in T_{(A,v,\theta)}\Eig(V),
\end{equation*}
which directly displays its scale invariance.

We can write $\G$ more explicitly using our description of the tangent space $T_{(A,\lambda,v,\theta)}\Eig(V)$. Introducing labels 1 and 2 according to the two tangent vectors, this yields
\begin{equation*} 
 \G_{(A,v,\theta)}(\mathfrak{v}_1,\mathfrak{v}_2)=\theta(X_1X_2)v-(\theta X_1v)(\theta X_2v).
\end{equation*}
One can also write this using scale invariant quantities only. Writing points of $\Eig(V)$ as $(A,P,v,\theta)$, we find
\begin{equation*}
 \G_{(A,P,v,\theta)}(\mathfrak{v}_1,\mathfrak{v}_2)=\tr(PX_1X_2)-\tr(PX_1)\tr(PX_2).
\end{equation*}
This also provides an explicit form of the reduced QGT defined on $\Spec(V)$. We observe that in these expressions we need not impose the $X_i$ to be $A$-free. This is similar to correcting for a non-zero mean in probability theory; the second term provides a correction that vanishes for $A$-free $X_i$. Hence we choose not to impose the $A$-free condition, and instead keep the second term.

In the Hermitian case, the QGT is the sum of a symmetric tensor, known as the quantum metric, plus an anti-symmetric part proportional to the Berry curvature \cite{Berry1989TheAfter}. We find a similar decomposition here. The anti-symmetric part of $\G$ is readily seen to be $K/2$, as
\begin{equation*}
 \G^{\mathrm{alt}}_{(A,v,\theta)}(\mathfrak{v}_1,\mathfrak{v}_2)=\theta\frac{[X_1,X_2]}{2}v=\frac{1}{2}\dif \theta\wedge \dif v(\mathfrak{v}_1,\mathfrak{v}_2)=\frac{1}{2}K_{(A,v,\theta)}(\mathfrak{v}_1,\mathfrak{v}_2).
\end{equation*}
We may also write this as $\tr\big(P\frac{[X_1,X_2]}{2}\big)$, which in addition yields an explicit form of the reduced curvature $k$. The symmetric part $\mathcal{M}:=\G^\mathrm{sym}$ of $\G$ is then a tensor on $\Eig(V)$ generalizing the standard quantum metric tensor (QMT). Hence we will refer to $\mathcal{M}$ using the same name. A~scale invariant explicit form is
\begin{equation*} 
 \mathcal{M}_{(A,P,v,\theta)}(\mathfrak{v}_1,\mathfrak{v}_2)=\cov_P(X_1,X_2):=
 \tr\bigg(P\frac{\{X_1,X_2\}}{2}\bigg)-\tr(PX_1)\tr(PX_2),
\end{equation*}
where $\cov_P$ is the covariance of non-commutative operators w.r.t.\ the density $P$. We hence obtain a decomposition similar to \cite[equation~(30)]{Berry1989TheAfter}, but now for the generalized tensors on~$\Eig(V)$.
\begin{Proposition}
 The QGT on $\Eig(V)$ decomposes as a linear combination of the QMT and the curvature as
 \begin{equation*}
 \G=\mathcal{M} +\frac{1}{2}K,
 \end{equation*}
 which is the decomposition of $\G$ into its symmetric and anti-symmetric part, respectively.
\end{Proposition}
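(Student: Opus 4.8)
The plan is to exploit the elementary fact that any bilinear form on a vector space splits uniquely into a symmetric and an antisymmetric piece, and then to match each piece against the two tensors on the right-hand side. Concretely, regarding $\G$ as a $(0,2)$-tensor on $\Eig(V)$, I would write at each point
\begin{equation*}
 \G = \G^{\mathrm{sym}} + \G^{\mathrm{alt}}, \qquad
 \G^{\mathrm{sym}}(\mathfrak{v}_1,\mathfrak{v}_2) = \tfrac{1}{2}\big(\G(\mathfrak{v}_1,\mathfrak{v}_2)+\G(\mathfrak{v}_2,\mathfrak{v}_1)\big),
\end{equation*}
and analogously for $\G^{\mathrm{alt}}$ with a minus sign. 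Since $\M$ is \emph{defined} to be $\G^{\mathrm{sym}}$, the symmetric half of the claimed identity holds by construction, and the only real content is to identify $\G^{\mathrm{alt}}$ with $\tfrac{1}{2}K$.

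For that identification I would start from the explicit formula $\G_{(A,v,\theta)}(\mathfrak{v}_1,\mathfrak{v}_2)=\theta(X_1X_2)v-(\theta X_1v)(\theta X_2v)$ obtained above. The key observation is that the correction term $(\theta X_1v)(\theta X_2v)$ is a product of two scalars and is therefore manifestly invariant under interchanging the two tangent vectors, so it drops out entirely upon antisymmetrization. Hence the whole antisymmetric part comes from the operator-product term, and I compute
\begin{equation*}
 \G^{\mathrm{alt}}_{(A,v,\theta)}(\mathfrak{v}_1,\mathfrak{v}_2)=\tfrac{1}{2}\big(\theta(X_1X_2)v-\theta(X_2X_1)v\big)=\theta\frac{[X_1,X_2]}{2}v.
\end{equation*}
This is precisely the expression already shown to equal $\tfrac{1}{2}K$ through the equality $\theta\frac{[X_1,X_2]}{2}v=\tfrac{1}{2}\dif\theta\wedge\dif v(\mathfrak{v}_1,\mathfrak{v}_2)$. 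Combining the two halves then yields $\G=\M+\tfrac{1}{2}K$, which is the desired decomposition into symmetric and antisymmetric parts.

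I do not anticipate a genuine obstacle, since the statement is largely a repackaging of computations already carried out; the mild point that deserves care is well-definedness. When one drops the $A$-free condition the generators $X_i$ are determined only up to addition of eigenprojectors of $A$, so I would note that both halves are insensitive to this ambiguity: the commutator annihilates any diagonal (eigenprojector) contribution, while the covariance form $\cov_P$ entering $\M$ is built precisely to absorb the corresponding mean. Thus the decomposition descends unambiguously to $\Eig(V)$ and, upon reduction, to the scale-invariant tensors on $\Spec(V)$.
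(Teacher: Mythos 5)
Your proof is correct and takes essentially the same route as the paper: there $\M$ is \emph{defined} as $\G^{\mathrm{sym}}$, and the anti-symmetric part is computed exactly as you do, via the explicit formula whose symmetric product term $(\theta X_1v)(\theta X_2v)$ drops out, leaving $\G^{\mathrm{alt}}=\theta\frac{[X_1,X_2]}{2}v=\tfrac{1}{2}\dif\theta\wedge\dif v=\tfrac{1}{2}K$. Your closing remark on insensitivity to the $A$-free condition is a correct elaboration of a point the paper only mentions in passing, but it does not change the substance of the argument.
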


{\sloppy
Notable properties that do not generalize from the Hermitian case are the following. Clearly~$\mathcal{M}$ and $K$ are complex rather than real-valued tensors, and no longer the real resp.\ imaginary part of $\G$. In addition, $\mathcal{M}$ is a degenerate form, hence does not follow a standard metric interpretation.
We also wish to comment on reducing the QMT to a ``metric'' on parameter space. In~the Hermitian case this can be done for each energy band separately. That is, for a fixed energy band, one can define a global eigenstate and so obtain a metric on $N(H)$ by pull-back of $\M$. In~the non-Hermitian case, or better, whenever $\Spec(H)$ is non-trivial, this need not be possible as a global eigenstate could be unavailable.

}

\subsubsection{Relation between geometric phase and distance} \label{sec:geo phase and distance}

Let us comment on a relation between geometric phase and distance. This was pointed out in~\cite{Pati1991RelationEvolution} for the Hermitian case, but was also considered for the non-Hermitian case in \cite{Cui2014UnificationPhases}. The idea is to compare two distance functions defined on normalized states. Using a given inner product on $V$ and the corresponding bundle $S_1(V)\to P(V)$ from Section~\ref{sec:AA}, these distances are given~by
\begin{equation*}
 L(\psi,\psi')=\norm{\psi-\psi'}, \qquad
 D(\psi,\psi')=\sqrt{2-2\abs{\inn{\psi}{\psi'}}},\qquad
 \psi,\psi'\in S_1(V).
\end{equation*}
Here, $L$ is the inner product metric on $V$ restricted to $S_1(V)$. The ray space $P(V)$ also inherits a metric, whose pull-back to $S_1(V)$ is $D$. In this way, $L$ measures total length, whereas $D$ only measures the underlying change of rays. Their difference is then related to movement along the ray, which is where we find geometric phase. To make this concrete, one works infinitesimally. Setting $\psi'=\psi+\dot{\psi}\dif t+\tfrac{1}{2}\ddot{\psi}\dif t^2+\cdots$ and keeping terms up to $\dif t^2$ only, one finds $\dif L^2=\inn{\dot{\psi}}{\dot{\psi}}\dif t^2$ and $\dif D^2=\big[\inn{\dot{\psi}}{\dot{\psi}}+\inn{\psi}{\dot{\psi}}^2\big]\dif t^2$. One thus finds $\dif L^2-\dif D^2=({\rm i}\inn{\psi}{\dot{\psi}}\dif t)^2$, so that the geometric phase integrand equals $\sqrt{\dif L^2-\dif D^2}$, which is one of the main results in \cite{Pati1991RelationEvolution}. It is tempting to write ``$\dif \gamma_\mathrm{geo}$" for ${\rm i}\inn{\psi}{\dot{\psi}}\dif t$, but we refrain from doing that. Following the discussion after equation~\eqref{eq:lift from geo correction}, there is no definite rate at which a geometric phase is acquired, although the contrary is suggested by such notation. There is also no need to introduce more notation; the tensor field corresponding to this quantity is simply ${\rm i}\omega$.

We now generalize to the non-Hermitian setting of $\Eig(V)$. First, the distances $L$ and $D$ do not carry over. Concerning $L$, we see $L^2=2-\inn{\psi}{\psi'}-\inn{\psi'}{\psi}$ generalizes to the function $2-\theta_1(v_2)-\theta_2(v_1)$ on $\Eig(V)$, but its square root would be multi-valued on $\Eig(V)$, meaning that~$L$ itself does not generalize. Similar issues appear for $D^2=2-2\sqrt{\inn{\psi}{\psi'}\inn{\psi'}{\psi}}$. Nevertheless, the tensor fields $\dif L^2$ and $\dif D^2$ do have a generalization to $\Eig(V)$. One can check that $\dif D^2$ is actually $\M$ for the Hermitian case, see also \cite{Cui2012GeometricMechanics,Cui2014UnificationPhases}. Finally, $\dif L^2$ generalizes to the tensor field $\mathcal{L}:=(\dif \theta \dif v)^\mathrm{sym}$, which is the non-covariant counterpart of $\M$. The relation between $\mathcal{L}$, $\M$ and~$\omega$ is readily found from equation~\eqref{eq:QGT with projector}; first rewriting this to
\begin{equation*}
 \G=\dif \theta \dif v+(-\dif \theta v)\otimes(\theta \dif v)=\dif \theta \dif v+\omega\otimes \omega
\end{equation*}
and then taking the symmetric part yields
\begin{equation*}
 \M=\mathcal{L}+\omega\otimes\omega.
\end{equation*}
This is thus an equality of tensor fields on $\Eig(V)$, generalizing the relation ``$\dif D^2=\dif L^2-\dif \gamma_\mathrm{geo}^2$".

\section{Explicit description of the holonomy} \label{sec:holonomy}

In Sections~\ref{sec:eigval bundle and EPs} and \ref{sec:eigvec bundle and GPs} we saw how the bundles $\Spec(V)$ and $\Eig(V)$ provide a holonomy interpretation for the physics of geometric phases and exceptional points. However, in the non-cyclic case, we did not provide an easier description of the maps $p_\gamma$ and $P_\gamma$ in terms of explicit permutations like $(12)$ resp.\ holonomy matrices. We will now treat how these can be obtained, and afterwards show that this naturally follows holonomy theory as well.

\subsection{Explicit permutations and holonomy matrices} \label{sec:explicit perms and hol mats}

Let us start with the permutations of energies. Our goal is to find a formalism so that a~permutation $p_\gamma$ of the eigenvalues of some $A\in N(V)$ can be expressed using an standard permutation $\sigma\in S_n$. A first attempt would be to label the eigenvalues, i.e., call them $\lambda_1,\dots,\lambda_n$, and define~$\sigma$ by the relation $p_\gamma(\lambda_i)=\lambda_{\sigma(i)}$. However, this approach has a serious disadvantage. Although this works fine if we consider the eigenvalues of only the operator $A$, as the authors did in \cite{Pap2018Non-AbelianPoints}, this method does not extend to all of $N(V)$. The reason is that knowing the action of $S_n$ on a~spectrum is equivalent to having a labelling of the eigenvalues \cite{Pap2020FramesTheory}. For example, if there are 3 eigenvalues, then from the action of $(12)$ one finds which eigenvalue is labeled 3. Hence, if the above extends to a global $S_n$-action, then $\Spec(H)$ admits a global labelling and is thus trivial.

We thus wish to obtain a method that also applies to non-trivial $\Spec(H)$. Instead of considering eigenvalues separately, let us consider specific tuples of eigenvalues. Namely, for an operator $A$, we consider the $n$-tuples that list all eigenvalues exactly once, which brings us to the set
\begin{equation*}
 \Spec(A)!=\set{\tilde{\lambda}=(\lambda_1,\dots,\lambda_n)\in \Spec(A)^n}{\text{all eigenvalues of $A$ appear once in $\tilde{\lambda}$}}.
\end{equation*}
Observe that this set has $n!$ elements, hence the factorial notation. In addition, this space is naturally endowed with the $S_n$-action on $n$-tuples, cf.\ our earlier notation in equation~\eqref{eq:perm action Fr(V)timesCun}. It~is this $S_n$-action that will help us with describing the map $p_\gamma$. Of course, one can act with $p_\gamma$ on a tuple $\tilde{\lambda}$ entry-wise, which leads to the map $\tilde{p}_\gamma \from (\lambda_1,\dots,\lambda_n) \mapsto (p_\gamma(\lambda_1),\dots,p_\gamma(\lambda_n))$. As $\tilde{p}_\gamma\big(\tilde{\lambda}\big)$ is a reordering of $\tilde{\lambda}$, we obtain a unique permutation $\sigma$, depending on $\tilde{\lambda}$, by stating that
\begin{equation*}
 \tilde{p}_\gamma\big(\tilde{\lambda}\big)=\sigma \cdot \tilde{\lambda}.
\end{equation*}

Let us say a few more words on how the tuple method is subtly yet significantly different from the labelling method we started with. First, although we write the eigenvalues with indices (just to distinguish them), in principle they are never labelled. Indeed, like in Example~\ref{ex:EP2 yields (12)} below, we are interested in the \emph{values} of the energies, not the names we use for them. Of course, any tuple induces a labelling by assigning label $k$ to the energy appearing in position $k$, like we also used above. However, $\sigma$~is not based on these labels, but instead on the tuple positions. That is, $\sigma$~will send the energy at place $k$ to place $\sigma(k)$, regardless of what energy is located at place~$k$ or which value we designated as~$\lambda_k$.

\begin{Example} \label{ex:EP2 yields (12)}
 Let us continue with the EP2 example of Example~\ref{ex:start EP2}, where $E_+(x_0)$ and $E_-(x_0)$ are exchanged when one encircles an EP. Picking the ordering of $\Spec(H(x_0))$ to be $(E_+(x_0),E_-(x_0))$, the above reads
 \begin{equation*}
 \tilde{p}_\gamma(E_+(x_0),E_-(x_0))=(E_-(x_0),E_+(x_0))=(12)\cdot (E_+(x_0),E_-(x_0))
 \end{equation*}
 so that $\sigma=(12)$. In this case, as $n=2$, one also obtains $(12)$ for the alternative ordering $(E_-(x_0),E_+(x_0))$.

 Consider now the case where we add a separate constant energy level $E_0(x)=0$, i.e., consider the Hamiltonian $H'(x)=H(x)\oplus 0$ which brings us to $n=3$. Picking the ordering $(E_+(x_0),E_0(x_0),E_-(x_0))$ of $\Spec(H'(x_0))$ one finds
 \begin{gather*}
 \tilde{p}_\gamma(E_+(x_0),E_0(x_0),E_-(x_0))\!=(E_-(x_0),E_0(x_0),E_+(x_0))
\!=(13)\cdot (E_+(x_0),E_0(x_0),E_-(x_0)),
 \end{gather*}
 which thus yields $(13)$. Of course, in the ordering $(E_+(x_0),E_-(x_0),E_0(x_0))$ we would again obtain $(12)$.

 Finally, let us demonstrate how the tuple method differs from labelling the energies. Consider we label $E_+(x_0)=E_1$, $E_0(x_0)=E_2$ and $E_-(x_0)=E_3$. In the ordering $(E_1,E_2,E_3)$, clearly $\sigma$ will replace the labels as expected (up to a convention; here the label $E_k$ will become $E_{\sigma^{-1}(k)}$ instead of $E_{\sigma(k)}$). However, in another ordering this need not be so, for example
 \begin{equation*}
 (12)\cdot (E_3,E_1,E_2) = (E_1,E_3,E_2)
 \end{equation*}
 instead of the tuple $(E_3,E_2,E_1)$ one would obtain by checking the labels. The method of using a reference tuple thus circumvents the need for explicit labels.
\end{Example}

We can extend this approach to describe the evolution of states, as given by a map $P_\gamma$, by a~holonomy matrix. Similarly to the previous approach with eigenvalues, we will consider a~tuple formulation. In this case, for an operator $A\in N(V)$, we consider tuples $\tilde{f}=(f_1,\dots,f_n)$ of eigenvectors of $A$ such that each eigenray is represented exactly once. That is, $\tilde{f}$ should be an eigenframe of $A$, and we write $\EigFr(A)$ for the space of all eigenframes of $A$. This space has a~natural action by the wreath product $\struc$, isomorphic to the group of complex generalized permutation matrices, which we already encountered in Section~\ref{sec:param of non-deg}. In this case, the action reads\looseness=-1
\begin{align*}
 \struc \times \EigFr(A) &\to \EigFr(A),
 \\
 ((z_1,\dots,z_n),\sigma)\cdot (f_1,\dots,f_n) &= (z_1f_{\sigma^{-1}(1)},\dots,z_nf_{\sigma^{-1}(n)}),
 \end{align*}
i.e., the action is similar to the earlier $\struc$-action in equation~\eqref{eq:struc action Fr(V) times Cun}. Using this action, we can express a map $P_\gamma$ by an element $(\tilde{z},\sigma)\in \struc$ once we have chosen a reference eigenframe $\tilde{f}$. Writing $\tilde{P}_\gamma$ for the map on $\EigFr(A)$ obtained by applying $P_\gamma$ entry-wise, we obtain the group element by stating
\begin{equation*}
 \tilde{P}_\gamma(\tilde{f})= (\tilde{z},\sigma) \cdot \tilde{f}.
\end{equation*}
If one represents $(\tilde{z},\sigma)$ by a permutation matrix, which we will denote by $M_{\tilde{f}}$, then $M_{\tilde{f}}$ is the matrix of the linear extension of $P_\gamma$ to the entire state space. The dependence of $M_{\tilde{f}}$ on $\tilde{f}$ (again keeping the loop $\gamma$ fixed) follows the familiar conjugation rule.

\begin{Example}
 In case of the EP2, we found $P_\gamma$ as given by equation~\eqref{eq:Pgamma EP2}. With respect to the eigenframe $(e_1,e_2)$ of $H(0)$, we find
 \begin{equation*}
 \tilde{P}_\gamma(e_1,e_2)=(P_\gamma(e_1),P_\gamma(e_2))=(-{\rm i}e_2,-{\rm i}e_1)=((-{\rm i},-{\rm i}),(12))\cdot (e_1,e_2),
 \end{equation*}
 i.e., we find $((-{\rm i},-{\rm i}),(12)) \in \C^\times \wr I_2$. The holonomy matrix $M_{(e_1,e_2)}$ is thus
 \begin{equation*}
 M_{(e_1,e_2)}=
 \begin{pmatrix}
 -{\rm i} & 0\\
 0 & -{\rm i}\\
 \end{pmatrix}
 \begin{pmatrix}
 0 & 1\\
 1 & 0\\
 \end{pmatrix}
 =
 \begin{pmatrix}
 0 & -{\rm i}\\
 -{\rm i} & 0
 \end{pmatrix}\!.
 \end{equation*}
 Or, alternatively,
 \begin{equation*}
 \begin{pmatrix}
 P_\gamma(e_1)\\
 P_\gamma(e_2)\\
 \end{pmatrix}
 =
 \begin{pmatrix}
 -{\rm i}e_2\\
 -{\rm i}e_1\\
 \end{pmatrix}
 =
 \begin{pmatrix}
 0 & -{\rm i}\\
 -{\rm i} & 0
 \end{pmatrix}
 \begin{pmatrix}
 e_1\\
 e_2\\
 \end{pmatrix}\! .
 \end{equation*}

 The more familiar pattern $\psi_1\mapsto \psi_2 \mapsto -\psi_1$ is obtained by picking, e.g., $({\rm i}e_1,e_2)$ as the reference frame. Indeed, as
 \begin{equation*}
 \tilde{P}_\gamma({\rm i}e_1,e_2)=(e_2,-{\rm i}e_1)=((1,-1),(12))\cdot ({\rm i}e_1,e_2)
 \end{equation*}
 one finds the group element $((1,-1),(12))$ and the holonomy matrix reads
 \begin{equation*}
 M_{({\rm i}e_1,e_2)}=
 \begin{pmatrix}
 1 & 0\\
 0 & -1\\
 \end{pmatrix}
 \begin{pmatrix}
 0 & 1\\
 1 & 0\\
 \end{pmatrix}
 =
 \begin{pmatrix}
 0 & 1\\
 -1 & 0\\
 \end{pmatrix}\!.
 \end{equation*}
 The matrices are different only because we picked a different reference frame; both matrices still describe the map $P_\gamma$.
\end{Example}

\subsection{Holonomy interpretation on the frame bundle}

We now discuss how the approach of Section~\ref{sec:explicit perms and hol mats} complies with a more general theory on semi-torsors. As shown in \cite{Pap2020FramesTheory}, one can define the notion of a frame of a semi-torsor. Let us briefly review this here for completeness. If $G$ is a Lie group and $F$ a $G$-semi-torsor consisting of $n$ orbits, then $G\times I_n \cong F$ as (left) $G$-spaces. Such an isomorphism is always of the form $(g,i)\mapsto gf_i$ for a unique tuple $\tilde{f}=(f_1,\dots,f_n)$ of elements in $F$. This tuple $\tilde{f}$ we call a basis of $F$, similar to linear algebra. Observe that $n$ elements of $F$ form a basis if and only if each orbit in $F$ is represented exactly once. The set of all frames of $F$ we denote by $\Fr(F)$. Clearly, one can translate or permute the elements of a basis. This is captured by a natural action of the wreath product $G\wr I_n$, i.e.,
{\samepage\begin{align*}
 G\wr I_n \times \Fr(F) &\to \Fr(F),
 \\
 ((g_1,\dots,g_n),\sigma)\cdot (f_1,\dots,f_n) &= \big(g_1f_{\sigma^{-1}(1)},\dots,g_nf_{\sigma^{-1}(n)}\big).
 \end{align*}
In fact, this action is free and transitive, so that, in other words, $\Fr(F)$ is a $G\wr I_n$-torsor.

}

This is exactly what we did in the previous section. For the eigenvalues, we see that $\Spec(A)$ is a set containing $n$ points, hence trivially a semi-torsor with $G$ the trivial group. Each eigenvalue is thus an orbit by itself, and a frame is thus any tuple $\tilde{\lambda}=(\lambda_1,\dots,\lambda_n)$ listing all elements of $\Spec(A)$ once. These are exactly the tuples we considered before, i.e., \begin{equation*}
 \Fr(\Spec(A))=\Spec(A)!.
\end{equation*}
This space is naturally endowed with the $S_n$-action on $n$-tuples, making it an $S_n$-torsor. For the eigenstates, we consider the space $\Eig(A)$, which we already found to be a $\C^\times$-semi-torsor. The orbits in this case are the eigenrays, and hence a frame of $\Eig(A)$ is a tuple $\tilde{f}$ of eigenvectors such that each eigenray is represented exactly once. Indeed, we again find that $\tilde{f}$ should be an eigenframe of $A$, i.e.,
\begin{equation*}
 \Fr(\Eig(A))=\EigFr(A).
\end{equation*}
And this space is clearly a $\struc$-torsor.

The next step is to lift this procedure of taking frames to the level of semi-principal bundles. We remark that this theory closely resembles that of the frame bundle of a vector bundle. As~shown in \cite{Pap2020FramesTheory}, the operation of taking the frame space induces a functor from semi-principal bundles to principal bundles. Given a semi-principal $G$-bundle $\pi \from B\to M$, we obtain its frame bundle $\pi! \from \Fr(B) \to M$ by applying the above fiber-wise;
\begin{equation*}
 \Fr(B)=\bigsqcup_{m\in M} \Fr(B_m).
\end{equation*}
It holds that $\pi! \from \Fr(B)\to M$ is a principal $G\wr I_n$-bundle. Moreover, any $G$-connection on $B$ becomes a $G\wr I_n$-connection on $\Fr(B)$ by stating that a path in $\Fr(B)$ is horizontal if and only if all frame elements traverse horizontal paths. The map $B \mapsto \Fr(B)$ induces a retracting functor from the semi-principal bundles to the principal bundles. The holonomy interpretation of the explicit permutations and holonomy matrices we found above can now be obtained by applying this frame bundle functor to $\Spec(V)$ resp.\ $\Eig(V)$.

\subsubsection{Frame bundle of the spectrum bundle}

Let us start with $\Spec(V)$. We found in Theorem~\ref{thm:Spec(V) bundle over N(V)} that $\pi_v \from \Spec(V) \to N(V)$ is an $I_n$-bundle, hence it is a semi-principal bundle with trivial structure group. The frame bundle functor then yields the following.
\begin{Proposition}
 The frame bundle of $\Spec(V)$ is the principal $S_n$-bundle
 \begin{equation*}
 \En(V):=\Fr(\Spec(V))=\set{\big(A,\tilde{\lambda}\big)\in N(V)\times \C^n}{\tilde{\lambda}\in \Fr(\Spec(A))},
 \end{equation*}
 with action
 \begin{equation*}
 \sigma \cdot (\lambda_1,\dots,\lambda_n) = \big(\lambda_{\sigma^{-1}(1)},\dots,\lambda_{\sigma^{-1}(n)}\big)
 \end{equation*}
 and projection
 \begin{align*}
 \pi_\lambda! \from\ \En(V) &\to N(V),
 \\ \big(A,\tilde{\lambda}\big) &\mapsto A.
 \end{align*}
\end{Proposition}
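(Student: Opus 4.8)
The plan is to obtain this Proposition as a direct application of the frame bundle functor of \cite{Pap2020FramesTheory} to the semi-principal bundle $\pi_\lambda \from \Spec(V) \to N(V)$, and then to unwind the abstract construction into the explicit description stated. First I would record that, by Theorem~\ref{thm:Spec(V) bundle over N(V)}, the projection $\pi_\lambda$ is an $I_n$-bundle, and hence a semi-principal $G$-bundle with \emph{trivial} structure group $G=\{1\}$; indeed each fiber $\Spec(A)$ is a set of $n$ points, which is a semi-torsor for the trivial group. Applying the frame bundle functor then produces a principal $(G\wr I_n)$-bundle, and since $G=\{1\}$ the wreath product collapses to $\{1\}^n\rtimes S_n\cong S_n$. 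This immediately yields that $\pi_\lambda!\from \Fr(\Spec(V))\to N(V)$ is a principal $S_n$-bundle, which is the structural heart of the claim.

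It then remains to match this abstract construction with the explicit maps in the statement. Fiber-wise, a frame of the trivial-group semi-torsor $\Spec(A)$ is precisely an ordered tuple listing each eigenvalue exactly once, i.e.\ $\Fr(\Spec(A))=\Spec(A)!$, as already recorded in the discussion preceding this Proposition. Recording the basepoint $A$ together with the tuple $\tilde{\lambda}\in\C^n$ realizes $\En(V)=\Fr(\Spec(V))$ as the stated subset of $N(V)\times\C^n$, with projection $\big(A,\tilde{\lambda}\big)\mapsto A$. The $G\wr I_n=S_n$-action supplied by the functor is, by construction, the permutation of frame elements; specializing the frame action recalled in Section~\ref{sec:holonomy} to the trivial group gives exactly $\sigma\cdot(\lambda_1,\dots,\lambda_n)=\big(\lambda_{\sigma^{-1}(1)},\dots,\lambda_{\sigma^{-1}(n)}\big)$, so the $\sigma^{-1}$ convention in the statement coincides with that of the abstract action verbatim.

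To make the bundle structure concrete, and to confirm that the smooth structure of the abstract $\Fr(\Spec(V))$ agrees with the subspace structure inherited from $N(V)\times\C^n$, I would exhibit local trivializations directly. Around any $A_0\in N(V)$, the trivialization~\eqref{eq:local triv Spec(V)} of $\pi_\lambda$ provides $n$ distinct local eigenvalues $\lambda_1,\dots,\lambda_n\from U\to\C$. Since these exhaust $\Spec(A)$ for each $A\in U$, every frame of $\Spec(A)$ is a unique permutation of $(\lambda_1(A),\dots,\lambda_n(A))$, so the assignment $(A,\sigma)\mapsto\big(A,\sigma\cdot(\lambda_1(A),\dots,\lambda_n(A))\big)$ is an $S_n$-equivariant diffeomorphism $U\times S_n\to\En(V)|_U$, with transition maps valued in $S_n$ obtained by comparing two orderings of the spectrum. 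This confirms freeness and properness of the action and hence the principal $S_n$-bundle property. I do not expect a serious obstacle here: once the functor and the covering structure of $\pi_\lambda$ are in hand the argument is essentially bookkeeping, and the only point genuinely requiring care is checking that the functor's action convention on frames matches the tuple-permutation convention in the statement, which I would verify by tracking a single frame element through the definition of the wreath-product action.
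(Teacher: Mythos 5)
Your proposal is correct and follows essentially the same route as the paper: the paper states this Proposition as an immediate consequence of applying the frame bundle functor of \cite{Pap2020FramesTheory} to the semi-principal bundle $\pi_\lambda$ with trivial structure group (Theorem~\ref{thm:Spec(V) bundle over N(V)}), using the fiber-wise identification $\Fr(\Spec(A))=\Spec(A)!$ and the collapse $\{1\}\wr I_n\cong S_n$, exactly as you do. Your additional explicit local trivialization $(A,\sigma)\mapsto\big(A,\sigma\cdot(\lambda_1(A),\dots,\lambda_n(A))\big)$ built from~\eqref{eq:local triv Spec(V)} is a sound verification of what the paper delegates to the cited reference.
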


Remarkably, we observe that $\pi_\lambda!$ is also the pull-back of $q\from \Cun \to \binom{\C}{n}$ along the map $\Spec \from N(V)\to \binom{\C}{n}$; the fiber $\Fr(\Spec(A))$ of $\En(V)$ equals $q^{-1}(\Spec(A))$. In addition, when a Hamiltonian operator family $H\from M \to \End(V)$ is given, it is again natural to consider the pull-back along $H$. This yields the bundle $\pi_\lambda^H! \from \En(H) \to N(H)$, $(x,\tilde{E}) \mapsto x$, where $\tilde{E}$ is then a~tuple listing the energies of $H(x)$, and is the frame bundle of $\pi_\lambda^H$. Observe that both pull-backs together result in the following commutative diagram, where all vertical maps are principal $S_n$-bundles:
\begin{equation*} 
 \begin{tikzcd}
 \En(H) \ar{r}\ar{d}{\pi_\lambda^H!} & \En(V) \ar{r}\ar{d}{\pi_\lambda!} & \Cun \ar{d}{q} & (x,\tilde{E}) \ar[mapsto]{r}\ar[mapsto]{d} & (H(x),\tilde{E}) \ar[mapsto]{r}\ar[mapsto]{d} & \tilde{E}\ar[mapsto]{d}\\
 N(H) \ar{r}{H} & N(V) \ar{r}{\Spec} & \binom{\C}{n},& x \ar[mapsto]{r} & H(x) \ar[mapsto]{r} & \Spec(H(x)).
 \end{tikzcd}
\end{equation*}
We observe that $\En(H)$ and $\En(V)$ are also the natural spaces behind the merging path method for EP detection, which we reviewed in Section~\ref{sec:merging path method}. Namely, the above diagram can alternatively be obtained by completing diagram~\eqref{eq:merging path diagram} using pull-backs.

The holonomy interpretation behind the permutations can be explained as follows. Fix a~loop $\gamma$ in $N(H)$ based at $x_0$, and let $\tilde{E}$ be the ordering of $\Spec(H(x_0))$ w.r.t.\ which we find the permutation $\sigma$ expressing $p_\gamma$. Observe that the picking of the ordering $\tilde{E}=(E_1,\dots,E_n)$ of $\Spec(H(x_0))$ is equivalent to picking the point $\big(x_0,\tilde{E}\big)$ in the fiber of $\En(H)$ above $x_0$. This point $\big(x_0,\tilde{E}\big)$ fixes a unique lift of $\gamma$ to $\En(H)$. Physically, this lift traces all of the energies simultaneously and in a particular order. The permutation $\sigma$ now naturally appears via the holonomy theory of principal bundles; the end point of the lift is $\sigma\big(x_0,\tilde{E}\big)$, which fixes $\sigma$ uniquely. One can alternatively argue by means of holonomy groups. We already found that $p_\gamma$ lies in the holonomy group $\Hol^{\Spec(H)}_{N(H)}(x_0)$. Clearly, $p_\gamma$ extends to a map $\tilde{p}_\gamma$ on frames by applying~$p_\gamma$ on each frame element separately, and so $\tilde{p}_\gamma$ is an element of $\Hol^{\En(H)}_{N(H)}(x_0)$. Both of these groups are based on a point in the base space, and so consist of ``abstract'' fiber automorphisms. An~explicit group element is obtained by considering the holonomy group at a point in the total space. In~this case, we move to the holonomy group at $\big(x_0,\tilde{E}\big)$ defined as
\begin{equation*}
 \Hol^{\En(H)}\big(x_0,\tilde{E}\big)=\set{\sigma\in S_n}{\big(x_0,\tilde{E}\big)\sim \sigma\big(x_0,\tilde{E}\big)},
\end{equation*}
where the equivalence relation $\sim$ holds if and only if the two points can be connected by a path. The groups are related by the isomorphism $\Hol^{\En(H)}_{N(H)}(x_0) \to \Hol^{\En(H)}\big(x_0,\tilde{E}\big)$, $\tilde{p}_\gamma \mapsto \big[\tilde{p}_\gamma\big(\tilde{E}\big)/\tilde{E}\big]$, where $[-/-]$ denotes the unique group element translating the right entry to the left one. We~summarize these findings in the following.

\begin{Lemma}
 Given a loop $\gamma$ in $N(H)$ based at $x_0$, the permutation $\sigma$ expressing $p_\gamma$ w.r.t.\ the ordering $\tilde{E}$ of $\Spec(H(x_0))$ is the holonomy element at the point $\big(x_0,\tilde{E}\big)$ in the frame bundle~$\En(H)$.
\end{Lemma}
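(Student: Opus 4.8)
The plan is to chain together three facts: the definition of lifting in the frame bundle $\En(H)$, the covering-lift description of the monodromy map $p_\gamma$ from Section~\ref{sec:Spec(H)}, and the freeness of the $S_n$-action on the fibers of $\En(H)$. Recall that the point $\big(x_0,\tilde{E}\big)$ encodes the ordering $\tilde{E}=(E_1,\dots,E_n)$ of $\Spec(H(x_0))$, and that the holonomy element at this point is by definition the unique $\tau\in S_n$ for which the endpoint of the lift of $\gamma$ starting at $\big(x_0,\tilde{E}\big)$ equals $\tau\cdot\big(x_0,\tilde{E}\big)$. The goal is thus to show $\tau=\sigma$.

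First I would recall how the frame bundle functor transports lifts. By the construction of the induced connection on $\Fr(B)$ recalled above (following \cite{Pap2020FramesTheory}), a path in $\En(H)=\Fr(\Spec(H))$ is horizontal precisely when each of its $n$ frame components is horizontal in $\Spec(H)$. Since $\pi_\lambda^H\from\Spec(H)\to N(H)$ is a covering, i.e., an $I_n$-bundle with trivial structure group, horizontality there is just path lifting along the covering. Hence the lift of $\gamma$ to $\En(H)$ starting at $\big(x_0,\tilde{E}\big)$ is exactly the tuple of the covering lifts of $\gamma$ to $\Spec(H)$ starting at each $(x_0,E_i)$.

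Next I would invoke the description of $p_\gamma$. By its definition in Section~\ref{sec:Spec(H)}, the covering lift of $\gamma$ starting at $(x_0,E_i)$ ends at $(x_0,p_\gamma(E_i))$. Applying this entry-wise, the lift in $\En(H)$ ends at $\big(x_0,(p_\gamma(E_1),\dots,p_\gamma(E_n))\big)=\big(x_0,\tilde{p}_\gamma\big(\tilde{E}\big)\big)$. By the tuple-method definition of $\sigma$ in Section~\ref{sec:explicit perms and hol mats}, we have $\tilde{p}_\gamma\big(\tilde{E}\big)=\sigma\cdot\tilde{E}$, so the endpoint equals $\big(x_0,\sigma\cdot\tilde{E}\big)=\sigma\cdot\big(x_0,\tilde{E}\big)$.

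Finally, comparing with the definition of the holonomy element and using that the $S_n$-action on the fiber of $\En(H)$ is free (as $\En(H)$ is a principal $S_n$-bundle), the equality $\sigma\cdot\big(x_0,\tilde{E}\big)=\tau\cdot\big(x_0,\tilde{E}\big)$ forces $\tau=\sigma$. I expect no genuine obstacle here: the argument is entirely a chaining of definitions, and the only point deserving care is the verification that the frame-bundle lift is literally the component-wise covering lift, which is exactly the content of the frame bundle construction recalled in the first step.
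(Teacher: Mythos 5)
Your proposal is correct and follows essentially the same route as the paper: the paper likewise observes that the point $\big(x_0,\tilde{E}\big)$ fixes a unique lift of $\gamma$ to $\En(H)$ which traces all energies simultaneously (i.e., component-wise in $\Spec(H)$), identifies its endpoint as $\big(x_0,\tilde{p}_\gamma\big(\tilde{E}\big)\big)=\sigma\cdot\big(x_0,\tilde{E}\big)$ via the tuple definition of $\sigma$, and concludes by uniqueness of the translating group element in the principal $S_n$-bundle. Your only addition is to spell out explicitly that the frame-bundle lift is the tuple of covering lifts, which the paper treats as immediate from the frame bundle construction of \cite{Pap2020FramesTheory}.
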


It follows that we should consider $\sigma$ to lie in the group $\Hol^{\En(H)}\big(x_0,\tilde{E}\big)$. This automatically accounts for the change in $\sigma$ if we choose another ordering $\tilde{E}'$. Geometrically, we see this amounts to picking a different reference point in $\En(H)$, namely $(x_0,\tilde{E}')$ instead of $\big(x_0,\tilde{E}\big)$. Hence, the permutation $\sigma'$ obtained w.r.t.\ $\tilde{E}'$ lies in $\Hol^{\En(H)}\big(x_0,\tilde{E}'\big)$ rather than $\Hol^{\En(H)}\big(x_0,\tilde{E}\big)$. We observe that the relation between $\sigma$ and $\sigma'$ is given by the usual conjugation relation between holonomy groups. Namely, there is a unique $\tau\in S_n$ such that $\tilde{E}'=\tau \tilde{E}$, and consequently $\sigma'=\tau \sigma \tau^{-1}$. Note that the frame dependence of $\sigma$ can also be regarded as a gauge dependence. The gauge freedom is then in the choice of frame of $\Spec(A)$, and different gauges are related by an $S_n$-symmetry. Of course, the gauge invariant behind the permutations $\sigma$ and $\sigma'$ is the map of eigenvalues $p_\gamma$, or equivalently its frame version $\tilde{p}_\gamma$.

Using the framework of holonomy groups, we can be more explicit on how permutations arising from different loops $\gamma$ are related, as the authors also studied in \cite{Pap2018Non-AbelianPoints}. Like $\Spec(H)$, also $\En(H)$ is a covering space of $N(H)$, and so the permutations are prescribed by the monodromy action. With respect to the ordering $\tilde{E}$ of $\Spec(H(x_0))$, this is captured by the homomorphism
\begin{align*}
 \pi_1(N(H),x_0) &\to \Hol^{\En(H)}\big(x_0,\tilde{E}\big),
 \\
 [\gamma] &\mapsto \big[\tilde{p}_\gamma\big(\tilde{E}\big)/ \tilde{E}\big].
 \end{align*}
Clearly this homomorphism is surjective, and its kernel consists of all classes whose paths induce a cyclic change of eigenvectors. In addition, it shows how the holonomy matrix of a complicated loop can be decomposed in holonomy matrices of more elementary loops. This present argument hence improves on a previous proof of this fact given in \cite{Pap2018Non-AbelianPoints}.

Let us briefly compare the holonomy groups of $\En(V)$ and $\En(H)$. For $\En(V)$, every holonomy group exhausts all of $S_n$ as all possible paths are present. Indeed, the bundle $\pi_\lambda!\from \En(V)\to N(V)$ does not admit a reduction of the structure group. However, $\En(H)$ can have smaller holonomy groups, and admit a reduction of the structure group. In an extreme case, e.g., if the Hamiltonian family is always Hermitian w.r.t.\ a given inner product on $V$, then $\En(H)$ is trivial as $\Spec(H)$ is,\footnote{An alternative argument: if the energies of $H$ are real, then $\En(H)$ is a pull-back of $\R^{\underline{n}}\to\binom{\R}{n}$, hence trivial by Proposition~\ref{prop:orderable -> trivial}.} and the structure group can be reduced to the trivial group 1. Intermediate cases are also possible, and express in what way the energy bands are connected. For example, if $k$ energy bands are disconnected from the remaining $n-k$, clearly the structure group $S_n$ can be reduced to $S_k\times S_{n-k}$. Practically, this can be achieved by restricting to tuples $\tilde{E}$ in which these $k$ energy bands appear only in the first $k$ entries.

\begin{Example} 
 Let us continue the study of the standard EP2 system as we described in Example~\ref{ex:start EP2}. We found $N(H)=\C\setminus \{\pm {\rm i}\}$, which means $\pi_1(N(H),0)$ has two generators. Define generators $a_\pm$ to wind once around $\pm {\rm i}$ in positive direction. We found in Example~\ref{exmp:monodromy action EP2} that both generators exchange the eigenvalues. Let us label the spectrum $\Spec(H(0))=\{\pm1\}$ as $(+1,-1)$. Then the permutations arising from the EPs are expressed by the homomorphism
 \begin{align*}
 \pi_1(N(H),0) &\to \Hol^{\En(H)}(0,(+1,-1)),
 \\
 a_\pm &\mapsto (12).
 \end{align*}
 Clearly, this confirms that a permutation occurs if and only if a loop decomposes into an odd number of $a_\pm$. As seen in Example~\ref{ex:EP2 yields (12)}, if an extra level is present, the exact form of the permutation does indeed depend on the labelling. In addition, we observe that for the extended system $H'$ presented there, the principal $S_3$-bundle $\En(H')$ reduces to a $S_2$-bundle, which is isomorphic to $\En(H)$.
\end{Example}

We note that the space $\En(V)$ has appeared in previous papers. We already remarked that~\cite{Tanaka2017PathEvolution} mentioned the relevance of the monodromy action in the theory of EPs. The covering used there is of the form $\En(H)$, using the one-to-one correspondence between eigenvalues and eigenprojectors for non-degenerate operators. We also recognize $\En(V)$ as the space $\widetilde{\mathfrak{M}}$ defined in \cite{Mehri-Dehnavi2008GeometricInterpretation}. However, where \cite{Mehri-Dehnavi2008GeometricInterpretation} views $\En(V)$ primarily as an extended parameter space, we consider it rather as the bundle over $N(V)$ which yields the permutations due to EPs by its holonomy.

\subsubsection{Frame bundle of the eigenvector bundle}

We now proceed by considering the frame bundle of $\pi_{\lambda v}\from \Eig(V) \to N(V)$. This is a semi-principal $\C^\times$-bundle by Theorem~\ref{thm:Eig(V) bundle over N(V)} and endowed with the $\C^\times$-connection $\omega$ as in Proposition~\ref{prop:omega Ctimes connection}. Applying the frame bundle functor then yields the following.

\begin{Proposition}
 The frame bundle of $\Eig(V)$ is the principal $\struc$-bundle
 \begin{equation*} 
 \EigFr(V)=\set{\big(A,\tilde{f}\big)\in N(V)\times \Fr(V)}{\tilde{f}\text{ is an eigenframe of }A},
 \end{equation*}
 where the element $\big(A,\tilde{f}\big)$ can also be written as $\big(A,\tilde{\lambda},\tilde{f}\big)$ with $\tilde{\lambda}$ listing the eigenvalues of $A$ in the order of $\tilde{f}$. The $\struc$-action on $\EigFr(V)$ reads
 \begin{equation*} 
 (\tilde{z},\sigma) \cdot\big(A,\tilde{\lambda},\tilde{f}\big) = \big(A,\sigma\tilde{\lambda},\tilde{z}\big(\sigma\tilde{f}\big)\big)
 \end{equation*}
 and the projection is
 \begin{align*}
 \pi_{\lambda v}!\from\ \EigFr(V) &\to N(V),
 \\
 \big(A,\tilde{\lambda},\tilde{f}\big) &\mapsto A.
 \end{align*}
 Furthermore, $\EigFr(V)$ is naturally endowed with the $\struc$-connection
 \begin{equation*}
 \omega!_{\left(A,\tilde{\lambda},\tilde{f}\right)}=\omega_{(A,\lambda_1,f_1)} \oplus \dots \oplus \omega_{(A,\lambda_n,f_n)}.
 \end{equation*}
\end{Proposition}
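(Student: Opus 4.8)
The plan is to obtain the statement as a direct application of the frame bundle functor of \cite{Pap2020FramesTheory} to the semi-principal $\C^\times$-bundle $\pi_{\lambda v}\from \Eig(V)\to N(V)$ of Theorem~\ref{thm:Eig(V) bundle over N(V)}, and then to verify that the abstract constructions specialize to the explicit formulas given. First I would identify the total space. By definition the frame bundle is the fiber-wise disjoint union $\Fr(\Eig(V))=\bigsqcup_{A\in N(V)}\Fr(\Eig(A))$, and since we already saw that $\Fr(\Eig(A))=\EigFr(A)$, the total space is precisely the set of pairs $\big(A,\tilde{f}\big)$ with $\tilde{f}$ an eigenframe of $A$. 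The augmented notation $\big(A,\tilde{\lambda},\tilde{f}\big)$ merely records the induced eigenvalue ordering: each $f_i$, being an eigenvector, fixes a unique $\lambda_i$ with $Af_i=\lambda_i f_i$.

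Next I would read off the principal bundle structure. The functor guarantees that $\pi_{\lambda v}!$ is a principal $\struc$-bundle whose action is the generic wreath-product action on frames, $\big((z_1,\dots,z_n),\sigma\big)\cdot(f_1,\dots,f_n)=\big(z_1 f_{\sigma^{-1}(1)},\dots,z_n f_{\sigma^{-1}(n)}\big)$. A short computation then matches this with the stated action $\big(A,\sigma\tilde{\lambda},\tilde{z}(\sigma\tilde{f})\big)$: permuting the frame by $\sigma$ reorders the $f_i$, and hence reorders the attached eigenvalues $\lambda_i$ by the same $\sigma$, after which the entry-wise $\tilde{z}$-scaling leaves the eigenvalues unchanged. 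The projection and the fact that each fiber is a single $\struc$-torsor are immediate from the functor.

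The remaining point is the connection. By the same functor, $\omega$ lifts to a $\struc$-connection $\omega!$ characterized by the condition that a path in $\EigFr(V)$ is horizontal if and only if each of its $n$ frame components is $\omega$-horizontal. Since $S_n$ is discrete, the Lie algebra of $\struc$ coincides with that of $\conn$, namely $\C^n$, so $\omega!$ is a $\C^n$-valued $1$-form; the horizontality characterization says exactly that its $i$-th component is $\omega$ evaluated on the $i$-th frame element, which is the asserted direct-sum formula. To confirm $\omega!$ is a genuine $\struc$-connection I would check the two defining axioms: that $\omega!$ inverts the infinitesimal action, which holds componentwise by Proposition~\ref{prop:omega Ctimes connection} because the scaling generators act diagonally on the frame, and equivariance $L_g^*\omega!=\Ad_g(\omega!)$.

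The main obstacle I expect is verifying this last equivariance for the \emph{permutation} part of $\struc$ rather than just the scaling part. The $\conn$-factor reduces to invariance exactly as in Proposition~\ref{prop:omega Ctimes connection}, but for $\sigma\in S_n$ one must check that relabelling the frame elements permutes the $n$ components of $\omega!$ in precisely the way dictated by the adjoint action of $S_n$ on $\C^n$ by coordinate permutation. Once this compatibility between the direct-sum structure and the full wreath-product symmetry is confirmed, the defining properties of a $\struc$-connection hold and the claim follows.
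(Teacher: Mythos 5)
Your proposal takes essentially the same route as the paper: the paper obtains this Proposition with no written proof at all, simply stating that the result follows by "applying the frame bundle functor" of \cite{Pap2020FramesTheory} to the semi-principal $\C^\times$-bundle $\pi_{\lambda v}\from \Eig(V)\to N(V)$ with its connection $\omega$, after the fiber-wise identification $\Fr(\Eig(A))=\EigFr(A)$ established just beforehand. Your extra verifications (identifying the Lie algebra of $\struc$ with $\C^n$, matching the generic wreath-product action on frames with the stated formula, and checking equivariance of $\omega!$ under the permutation part) are all correct and merely fill in details the paper delegates to the cited reference.
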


Recall that we already found the principal $\struc$-bundle $\Xi\from \Fr(V)\times \Cun \to N(V)$ in Section~\ref{sec:param of non-deg}. We observe that $\Xi$ and $\pi_{\lambda v}!$ are reformulations of each other, i.e., they are canonically isomorphic bundles.
\begin{Lemma}
 The map
 \begin{align*}
 \EigFr(V) &\to \Fr(V)\times \Cun,
 \\
 \big(A,\tilde{\lambda},\tilde{f}\big) &\mapsto \big(\tilde{f},\tilde{\lambda}\big)
 \end{align*}
 is a canonical isomorphism of bundles over $N(V)$. Hence, in particular, $\EigFr(V)$ is canonically isomorphic to $\Fr(V)\times \Cun$ as a $\struc$-manifold, and consequently $\En(V)$ is canonically dif\-feo\-morphic to $\PFr(V)\times \Cun$.
\end{Lemma}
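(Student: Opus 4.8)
The plan is to exhibit the inverse map explicitly and then verify the three defining properties of a principal bundle isomorphism: compatibility with the projections to $N(V)$, smoothness in both directions, and $\struc$-equivariance. The natural candidate for the inverse is $\big(\tilde{f},\tilde{\lambda}\big) \mapsto \big(\Xi\big(\tilde{f},\tilde{\lambda}\big),\tilde{\lambda},\tilde{f}\big)$, using the parametrization map $\Xi$ from Section~\ref{sec:param of non-deg}. The key algebraic fact that makes everything fit together is that if $\tilde{f}$ is an eigenframe of $A$ with eigenvalue tuple $\tilde{\lambda}$ ordered to match $\tilde{f}$, then $A = S_{\tilde{f}}\diag\big(\tilde{\lambda}\big) S_{\tilde{f}}^{-1} = \Xi\big(\tilde{f},\tilde{\lambda}\big)$; this is merely the diagonalization of $A$, and it shows at once both that the two maps are mutually inverse and that the forward map intertwines the projections $\pi_{\lambda v}!$ and $\Xi$ to $N(V)$.

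First I would settle smoothness. The forward map only forgets the component $A$ and reorders the remaining data, hence is smooth; here $\tilde{\lambda}$ is recovered smoothly from $\big(A,\tilde{f}\big)$ via $\lambda_i = \theta^i(Af_i)$, with $\big(\theta^1,\dots,\theta^n\big)$ the coframe dual to $\tilde{f}$. The inverse is smooth because $\Xi$ is smooth. Next I would check $\struc$-equivariance by direct comparison. Acting first on $\EigFr(V)$ via $(\tilde{z},\sigma)\cdot\big(A,\tilde{\lambda},\tilde{f}\big) = \big(A,\sigma\tilde{\lambda},\tilde{z}\big(\sigma\tilde{f}\big)\big)$ and then applying the forward map yields $\big(\tilde{z}\big(\sigma\tilde{f}\big),\sigma\tilde{\lambda}\big)$, which is exactly $(\tilde{z},\sigma)\cdot\big(\tilde{f},\tilde{\lambda}\big)$ under the action \eqref{eq:struc action Fr(V) times Cun}. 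Since both spaces are principal $\struc$-bundles over $N(V)$ and the map is an equivariant diffeomorphism covering the identity, it is a canonical isomorphism of bundles, and in particular an isomorphism of $\struc$-manifolds. The map is canonical in that it involves no choices, only a forgetting and reordering of the already-present data.

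For the consequence about $\En(V)$, the plan is to realize $\En(V)$ as a quotient of $\EigFr(V)$. The frame-bundle version of $\pi_v \from \Eig(V)\to\Spec(V)$ is the map $\EigFr(V)\to\En(V)$, $\big(A,\tilde{\lambda},\tilde{f}\big)\mapsto\big(A,\tilde{\lambda}\big)$, which forgets the eigenvectors while retaining the eigenvalue tuple. This map is equivariant along the quotient homomorphism $\struc\to S_n$ whose kernel is $\conn$, so it exhibits $\En(V)$ as $\EigFr(V)/\conn$. Transporting through the isomorphism of the first part, the residual $\conn$-action becomes the scaling action \eqref{eq:conn action Fr(V)timesCun} on $\Fr(V)\times\Cun$, which touches only the frame factor. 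Hence the quotient factors as $\big(\Fr(V)/\conn\big)\times\Cun$, and using $\Fr(V)/\conn\cong\PFr(V)$ from Section~\ref{sec:param of non-deg} gives $\En(V)\cong\PFr(V)\times\Cun$.

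I expect no serious obstacle; the whole statement is bookkeeping around the diagonalization identity. The only point demanding care is keeping the orderings consistent throughout, so that the permutation $\sigma$ acts in the same way on the eigenvalue tuple and on the eigenframe on both sides of the map. This is precisely what must be watched in the equivariance computation, to ensure it genuinely closes rather than producing a $\sigma$-versus-$\sigma^{-1}$ mismatch.
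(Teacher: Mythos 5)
Your proposal is correct and follows the paper's (implicit) argument: the paper states this lemma without proof, having just observed that $\Xi$ and $\pi_{\lambda v}!$ are ``reformulations of each other,'' which is precisely the diagonalization identity $A=S_{\tilde{f}}\diag\big(\tilde{\lambda}\big)S_{\tilde{f}}^{-1}$ that you place at the center of your verification. Your explicit inverse built from $\Xi$, the $\struc$-equivariance check against the action \eqref{eq:struc action Fr(V) times Cun}, and the quotient-by-$\conn$ argument identifying $\En(V)$ with $\big(\Fr(V)/\conn\big)\times\Cun\cong\PFr(V)\times\Cun$ are exactly the routine details the paper leaves to the reader.
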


Again, given a Hamiltonian family $H$, it is more natural to study the pull-back bundle. This yields the space $\EigFr(H)$, whose elements we write as $\big(x,\tilde{E},\tilde{\psi}\big)$. The connection form $\omega!$ also carries over to the connection form $\omega_H!$.
Note that $\EigFr(H)$ may admit a reduction of the structure group, e.g., to $U(1)^n$ when $H$ is Hermitian.
We can summarize the pull-back and relation with $\Xi$ in diagram~\eqref{diag:EigFr(V) and related}.
\begin{equation} \label{diag:EigFr(V) and related}
 \begin{tikzcd}[column sep=1.5em]
 \EigFr(H) \ar{r}\ar{d}{\pi_{\lambda v}^H!} & \EigFr(V) \ar{r}{\sim}\ar{d}{\pi_{\lambda v}!} & \Fr(V)\times \Cun \ar{d}{\Xi} & \big(x,\tilde{E},\tilde{\psi}\big) \ar[mapsto]{r}\ar[mapsto]{d} & \big(H(x),\tilde{E},\tilde{\psi}\big) \ar[mapsto]{r}\ar[mapsto]{d} & \big(\tilde{\psi},\tilde{E}\big)\ar[mapsto]{d}
 \\
 N(H) \ar{r}{H} & N(V) \ar{r}{\id} & N(V), & x \ar[mapsto]{r} & H(x) \ar[mapsto]{r} & H(x).
 \end{tikzcd}
\end{equation}

We can now view the holonomy interpretation of the map on eigenstates as follows. Again, we fix a family $H$ and a loop $\gamma$ in $N(H)$ based at some reference $x_0$. We then pick an eigenframe~$\tilde{\psi}$ of $H(x_0)$, which fixes the tuple $\tilde{E}$ of corresponding energies. That is, we pick a point $\big(x_0,\tilde{E},\tilde{\psi}\big)$ in the fiber of $\EigFr(V)$ above $x_0$. This point fixes a unique lift of $\gamma$ to $\EigFr(V)$, which traces all states in the tuple $\psi$ simultaneously. The end point of this lift then equals $(\tilde{z},\sigma)\big(x_0,\tilde{E},\tilde{\psi}\big)$, where $(\tilde{z},\sigma)\in \struc$ we obtained as before. Hence it lies in the holonomy group
\begin{equation*}
 \Hol^{\EigFr(H)}\big(x_0,\tilde{E},\tilde{\psi}\big)=\set{(\tilde{z},\sigma)\in \struc}{\big(x_0,\tilde{E},\tilde{\psi}\big)\sim (\tilde{z},\sigma)\big(x_0,\tilde{E},\tilde{\psi}\big)},
\end{equation*}
where the equivalence relation $\sim$ means that the points can be connected by a horizontal path in $\EigFr(H)$, or equivalently, that there is a path in $N(H)$ whose lift to $\EigFr(H)$ through one point ends at the other. Again, we consider the holonomy group at a point in the total space rather than in the base space.

Clearly, $\sigma$ is the same permutation as we would find when looking only at the energies. This we already found in Proposition~\ref{prop:Pgamma reduces to pgamma}, which is based on the map $\pi_v \from \Eig(V) \to \Spec(V)$. However, we can also phrase this on the frame bundles as follows. Namely, the frame bundle functor turns~$\pi_v$ into the bundle map $\pi_v!\from \EigFr(V) \to \En(V)$, $\big(A,\tilde{\lambda},\tilde{f}\big)\mapsto \big(A,\tilde{\lambda}\big)$. This map is equivariant w.r.t.\ the canonical quotient $\struc \to S_n$, and is compatible with the connections. Fixing a point $\big(A,\tilde{\lambda},\tilde{f}\big)$, this quotient reappears as a map on holonomy groups $\Hol^{\EigFr(V)}\big(A,\tilde{\lambda},\tilde{f}\big) \to \Hol^{\En(V)}\big(A,\tilde{\lambda}\big)$. This map simply extracts $\sigma$, which confirms that $\sigma$ is the holonomy from $\En(V)$. We summarize the above findings as follows.

\begin{Lemma}
 Given a loop $\gamma$ in $N(H)$ based at $x_0$, the element $(\tilde{z},\sigma)$ expressing $P_\gamma$ w.r.t.\ the eigenframe $\tilde{\psi}$ of $H(x_0)$ is the holonomy element at the point $(x_0,\tilde{\psi})$ in the frame bundle $\EigFr(H)$. Writing $\tilde{E}$ for the corresponding ordering of $\Spec(H(x_0))$, $\sigma$ is the permutation representing $p_\gamma$ w.r.t.\ $\tilde{E}$.
\end{Lemma}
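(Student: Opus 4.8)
The plan is to unwind the definition of the frame-bundle connection $\omega_H!$ and to identify the holonomy with the tuple of parallel transports of the individual frame elements. Recall that, by the construction of the frame bundle functor, a path in $\EigFr(H)$ is horizontal if and only if each of its frame components traverses a horizontal path in $\Eig(H)$ with respect to $\omega_H$. Consequently, the horizontal lift of $\gamma$ through the point $\big(x_0,\tilde{E},\tilde{\psi}\big)$, with $\tilde{\psi}=(\psi_1,\dots,\psi_n)$, is the tuple whose $i$-th entry is the horizontal lift of $\gamma$ in $\Eig(H)$ starting at $\psi_i$. By the very definition of the parallel transport map $P_\gamma$, the endpoint of that $i$-th lift is $P_\gamma(\psi_i)$. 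Hence the endpoint of the lift in $\EigFr(H)$ is $\big(x_0,\tilde{P}_\gamma(\tilde{\psi})\big)$, where $\tilde{P}_\gamma$ denotes the entry-wise application of $P_\gamma$ as in Section~\ref{sec:explicit perms and hol mats}.

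Next I would invoke the defining relation $\tilde{P}_\gamma(\tilde{\psi})=(\tilde{z},\sigma)\cdot\tilde{\psi}$ of the group element $(\tilde{z},\sigma)\in\struc$. This rewrites the endpoint of the lift as $(\tilde{z},\sigma)\cdot\big(x_0,\tilde{E},\tilde{\psi}\big)$. By the definition of the holonomy element at a point of the total space -- namely, the unique group element relating the chosen basepoint to the endpoint of the horizontal lift of the loop -- this identifies $(\tilde{z},\sigma)$ precisely as the holonomy element at $\big(x_0,\tilde{E},\tilde{\psi}\big)$, proving the first assertion.

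For the second assertion, I would use the compatibility of the frame bundle functor with the projection $\pi_v^H$. The induced bundle map $\pi_v^H!\from \EigFr(H)\to \En(H)$, $\big(x,\tilde{E},\tilde{\psi}\big)\mapsto\big(x,\tilde{E}\big)$, is equivariant with respect to the canonical quotient $\struc\to S_n$ and sends horizontal paths to horizontal paths. It therefore carries the lift of $\gamma$ in $\EigFr(H)$ onto the lift of $\gamma$ in $\En(H)$, so the holonomy element $(\tilde{z},\sigma)$ projects to its permutation part $\sigma$. By the preceding Lemma, the holonomy element of $\gamma$ at $\big(x_0,\tilde{E}\big)$ in $\En(H)$ is exactly the permutation representing $p_\gamma$ with respect to the ordering $\tilde{E}$; equivalently, this is the reduction recorded in Proposition~\ref{prop:Pgamma reduces to pgamma}. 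Hence $\sigma$ represents $p_\gamma$ w.r.t.~$\tilde{E}$, as claimed. The main obstacle here is purely a matter of bookkeeping: one must verify carefully that the horizontal lift in the frame bundle coincides componentwise with parallel transport in $\Eig(H)$, so that its endpoint can legitimately be read off through the wreath-product action; once this is in place, the remaining steps are immediate consequences of the definitions and the already-established compatibilities.
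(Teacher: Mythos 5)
Your proposal is correct and follows essentially the same route as the paper: unwinding the frame-bundle connection so that the horizontal lift through $\big(x_0,\tilde{E},\tilde{\psi}\big)$ is the tuple of component-wise lifts ending at $\tilde{P}_\gamma\big(\tilde{\psi}\big)=(\tilde{z},\sigma)\cdot\tilde{\psi}$, and then extracting $\sigma$ via the equivariant, connection-compatible map $\pi_v^H!\from \EigFr(H)\to \En(H)$ together with Proposition~\ref{prop:Pgamma reduces to pgamma}. No gaps; this matches the paper's argument in both structure and detail.
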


The element $(\tilde{z},\sigma)$ thus lies in the group $\Hol^{\EigFr(H)}(x_0,\tilde{\psi})$. We can view the picking of $\tilde{\psi}$ as a gauge freedom, just like the picking of $\tilde{E}$ earlier. A change of gauge thus corresponds to switching holonomy group, with the usual similarity transformation of the holonomy matrix. The gauge invariant in this case is the map $P_\gamma$. Indeed, $P_\gamma$ is defined without using any gauge, hence is trivially independent of gauge.

Let us then take a closer look at the phase factors $\tilde{z}$ appearing in the holonomy matrix. Given the parallel transport formalism, each of these may be called a ``geometric'' phase factor. However, we also see that they are gauge dependent. Namely, the order gets permuted by changing the order of the eigenframe $\tilde{\psi}$, and scaling of the individual states in $\tilde{\psi}$ affects the phase factors in $\tilde{z}$ corresponding to non-cyclic evolution. On the other hand, $\tilde{z}$ is subject to constraints coming from the characteristic phases of $P_\gamma$. Parametrizing $z_j=\exp({\rm i}\alpha_j)$, if we follow a state until it returns to its original ray, its acquired phase is $\sum_{\text{cycle}}\alpha_i$, where we sum over the indices belonging to the corresponding cycle. Hence $\sum_{\text{cycle}}\alpha_i$ must be equal to the characteristic phase of the cycle, independent of the chosen gauge. Note however that this does not yield a canonical phase for a non-cyclic state after following $\gamma$ once; in one gauge all $\alpha_i$ in the cycle could be equal, whereas in another all except one could vanish.

In terms of the holonomy matrix, we can express this as follows. By permuting $\tilde{\psi}$ the phase factors can be moved within the matrix, which is subject to the underlying cycle structure of the permutation. The matrix is block-diagonal if and only if states from the same minimal union are listed adjacently. The size of a block is $k\times k$, with $k$ the length of the corresponding cycle. The $k$ individual phase factors within this block can be gauged to arbitrary values, with the only condition that their product is fixed.

Let us treat in more detail how a seemingly Abelian connection $\omega_H!$ can still induce non-Abelian holonomy. That is, given the technique of path-ordered exponential and the diagonal form of $\omega_H!$, one may wonder how the end result can be non-diagonal.
In short, the reason is that if the loop $\gamma$ lifts to non-cyclic states, then any instantaneous eigenframe returns with a~twist, and this twist yields an additional permutation matrix making the theory non-Abelian. That is, in order to write $\omega_H!$ as a diagonal gauge potential, in practice one would need to extend the initial eigenframe $\tilde{\psi}$ to a path of instantaneous eigenframes over $\gamma$. Upon returning to $x_0$, one cannot use the initial eigenframe $\tilde{\psi}$ again due to the presence of non-cyclic states. Consequently, one needs to correct for the difference between final and initial eigenframe. This basis transformation can be expressed by an element in $\struc$, and thus permutation matrices enter the end result. Naturally, this issue can only occur when $\En(H)$ is non-trivial and $\gamma$ is a~non-contractible loop, and so is completely absent in case $\pi_1(N(H),x_0)$ is trivial.

Let us also compare $\EigFr(V)$ to earlier work in \cite{Mehri-Dehnavi2008GeometricInterpretation}, which is likewise aimed to provide a~holonomy description for geometric phases in the presence of EPs.
There, one considers cyclic states under the adiabatic evolution due to a change of a Hamiltonian, which may be non-Hermitian. The general geometric model consists of $n$ complex line bundles $L^1,\dots,L^n$ over~$\En(V)$, there called $\widetilde{\mathfrak{M}}$, where each line bundle corresponds to a certain self-chosen part of the energy bands. A main disadvantage of this approach is that the connection information is spread over the $L^k$. Hence, calculation of a lift can in general not be done in a single $L^k$, i.e., multiple line bundles need to be used. That is, formally the $L^k$ do not support a holonomy interpretation for geometric phases in the presence of EPs. In contrast, we found that $\EigFr(V)$ is a single bundle allowing to lift a path directly from $N(V)$, which can moreover be used to treat non-cyclic states.

It is also the case that the theory of $\EigFr(V)$ provides a geometric model for the off-diagonal phases reported by Manini and Pistolesi \cite{Manini2000Off-DiagonalPhases}, and extends the generalized permutation matrices found by Tanaka, Cheon and Kim \cite{Tanaka2012GaugeCircuits}. The generalization of their work to non-Hermitian Hamiltonians can be obtained from $\EigFr(H)$ in the following way.

First, let us clarify that there are 3 main classes of paths that are considered in this theory. In~the above, we considered loops $\gamma$ only, i.e., the final point equals the initial point~$x_0$. In~contrast, \cite{Manini2000Off-DiagonalPhases} starts by considering general paths, so that the final point $x_1$ need not equal~$x_0$. They then consider a more special case, which is also treated in \cite{Tanaka2012GaugeCircuits} and there called an ``adiabatic loop''. This term refers to a path $\gamma$ for which $x_1$ need not equal~$x_0$, but still $\Eig(H(x_0))=\Eig(H(x_1))$ (as subsets of $V$, strictly speaking not as fibers of $\Eig(H)$). We~thus observe that an ``adiabatic loop'' need not be a ``loop'' in the sense we use, and forms an intermediate subset:
\begin{equation*}
 \left\{\begin{gathered}\text{loops;}\\ x_0=x_1\end{gathered}\right\} \subset \left\{\begin{gathered}\text{``adiabatic loops'';}\\ \Eig(H(x_0))=\Eig(H(x_1))\end{gathered}\right\} \subset \left\{\begin{gathered}\text{general paths;}\\ \text{no relation } x_0, x_1\end{gathered}\right\}\!.
\end{equation*}

Let us now demonstrate that $\EigFr(V)$ also describes the theory of the general paths, so that we obtain the ``adiabatic loops'' as a special case. Let $\gamma$ be any path in $N(H)$, so belonging to the most right set in the above comparison. Lifting to $\EigFr(V)$, it induces the parallel transport map
\begin{equation*}
 \tilde{P}_\gamma \from\ \EigFr(H(x_0)) \to \EigFr(H(x_1)).
\end{equation*}
More explicitly, given an initial eigenframe $\tilde{\psi}$ of $H(x_0)$, one obtains a lift $\tilde{\Gamma}$ of $\gamma$ starting at $\tilde{\psi}$, and the final frame is $\tilde{\psi}'=\tilde{P}_\gamma(\tilde{\psi})$. However, as $\tilde{\psi}'$ and $\tilde{\psi}$ are (generally) not in the same fiber, the two frames cannot be compared using the bundle structure of $\EigFr(H)$. The solution is to consider the frames by themselves; one neglects the fact that they are associated to different values of the system parameters. Clearly, any two frames of $V$ are related by a unique matrix; in this case there is a unique $U\in \GL(n,\C)$ such that
\begin{equation*}
 \tilde{\psi}'=U\tilde{\psi}.
\end{equation*}
Mathematically, this argument is captured by employing the canonical projection $\EigFr(V) \to \Fr(V)$, and using that $\Fr(V)$ is a $\GL(n,\C)$-torsor. We also remark that if both $\tilde{\psi}$ and $\tilde{\psi}'$ are orthonormal w.r.t.\ some inner product, then $U$ is indeed unitary, and can be calculated by taking inner products between the states, as used in \cite{Manini2000Off-DiagonalPhases}.

\looseness=1 Of course, the matrix $U$ is not unique; if we change $\tilde{\psi}$ to $G\tilde{\psi}$, with $G\in \struc$ a generalized permutation matrix, then $\tilde{\psi}'=\tilde{P}_\gamma\big(\tilde{\psi}\big)$ becomes $\tilde{P}_\gamma\big(G\tilde{\psi}\big)=G\tilde{P}_\gamma\big(\tilde{\psi}\big)=G\tilde{\psi}'$, and so $U$ becomes $GUG^{-1}$. That is, $U$ is unique up to conjugation by $\struc$. When looking for invariants, if we write $G=(\tilde{g},\tau)$, we may set $\tau=\id$; non-trivial $\tau$ simply rearrange the elements of~$U$. It~follows that $U_{ij}$ becomes $g_iU_{ij}g_j^{-1}$. Hence, one obtains an invariant quantity for each cycle in~$S_n$ by taking the corresponding product, e.g., $(12)$ yields the invariant $U_{12}U_{21}$ and $(123)$ yields the invariant $U_{12}U_{23}U_{31}$, in addition to the diagonal elements $U_{kk}$ which are also invariant. Whenever such a product is non-zero, its argument is a gauge-invariant phase, as reported in~\cite{Manini2000Off-DiagonalPhases}.

The ``adiabatic loops'' are then the special case where $U$ is also a generalized permutation mat\-rix, which was studied further in \cite{Tanaka2012GaugeCircuits}. The similarity of this case with our study of the holo\-nomy on $\EigFr(V)$ is straightforward to explain; in both cases $\EigFr(H(x_0))=\EigFr(H(x_1))$, so that we are studying an automorphism of a semi-torsor. The method above, based on the matrix $U$, then reduces to the argument of Section~\ref{sec:explicit perms and hol mats}, and invariant phases follow from Proposition~\ref{prop:semi-torsor invariant subspaces}.\looseness=1

\section{Discussion} \label{sec:discussion}

In this paper we introduced a framework that facilitates the mathematical description of the adiabatic quantum mechanics of finite-dimensional, non-degenerate but otherwise arbitrary and hence not necessarily Hermitian Hamiltonians.
We started with a preparatory study of the space of non-degenerate operators, which was summarized in diagram~\eqref{eq:summarizing diagram}. The following step was to find the geometry behind the adiabatic change of energies of a Hamiltonian family~$H$. This is captured by the covering space formed by the energy bands, where we restrict to system parameter values where~$H$ is non-degenerate. Moreover, the specific covering needed for $H$ we found as the pull-back of an abstract yet explicit model covering, which makes the covering property and hence bundle structure immediate for any smooth Hamiltonian fa\-mily~$H$.\looseness=1

\looseness=1 We continued by introducing a bundle for the eigenstates. A remarkable property of this bundle is that it is not a principal bundle, but has the structure of a semi-principal bundle, which is described in \cite{Pap2020FramesTheory}. That is, each of its fibers consists of a collection of eigenrays rather than a single eigenray. We found a natural connection on this bundle, which provides a parallel transport description of the adiabatic geometric phase as found by Garrison and Wright \cite{Garrison1988ComplexSystems}. Because of the semi-principal structure, this incorporates non-cyclic states as well. This formalism hence extends previous holonomy formulations, which are based on principal bundles. Among these are the first holonomy interpretation reported by Simon \cite{Simon1983HolonomyPhase}, which treats the Hermitian case with separate energy bands, and the formalism in~\cite{Mehri-Dehnavi2008GeometricInterpretation}, which considered the non-Hermitian case restricted to cyclic states. We furthermore found that the Aharonov--Anandan formalism~\cite{Aharonov1987PhaseEvolution} for general Hermitian systems can be obtained by a reduction from the present formalism using a~given inner product. Moreover, we found that the dynamical phase can be included as a non-geometric addition, which means that the full adiabatic evolution of a state can be calculated naturally on this bundle. We then obtained a generalized quantum geometric tensor, which we found to be directly related to covariant derivatives.

We highlight that the presented formalism, as it allows one to also describe non-cyclic states by means of parallel transport, is a natural formalism for the study of exceptional points. The description including geometric phases is then given on the full eigenstate bundle, whereas the permutations of energies can already be accurately described with the monodromy theory of the energy band covering. Both of these spaces have a semi-principal bundle structure, which is the key to incorporate non-cyclic behavior. However, by switching to a multi-state approach, one can describe the permutations of energies and the accompanying geometric phase factors simultaneously in a more explicit way, namely via holonomy matrices. This is given by a~straightforward application of the frame bundle technique as described in~\cite{Pap2020FramesTheory}. This provides a more explicit holonomy description of state evolution, also in the non-cyclic case enabled by exceptional points.

\appendix

\section{Bundle structure of tuples of distinct complex numbers}
\label{sec:distinct numbers}

If one considers $n$ distinct complex numbers, one can do so with or without an ordering of these numbers. Let us start with the ordered one. This brings us to the space
\begin{equation*}
 \Cun=\set{(z_1,\dots,z_n)\in \C^n}{i\ne j \implies z_i\ne z_j},
\end{equation*}
where we borrow the notation of falling factorials; if we would replace $\C$ by a finite set $F$, then $|F^{\underline{n}}|=|F|(|F|-1)\cdots(|F|-n)$, which in combinatorics is known as the falling factorial $|F|^{\underline{n}}$. The space $\Cun$ is a submanifold of $\C^n$, and more is true.
\begin{Lemma}
 The space $\Cun$ is an $($algebraic$)$ open and dense submanifold of $\C^n$.
\end{Lemma}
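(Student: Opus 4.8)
The plan is to exhibit $\Cun$ as the complement in $\C^n$ of the zero set of a single nonzero polynomial, which delivers algebraic openness at once and density by a standard irreducibility argument. The natural polynomial to use is the Vandermonde polynomial
\[
 V(z_1,\dots,z_n):=\prod_{i<j}(z_i-z_j),
\]
which is manifestly a polynomial in the coordinates $z_1,\dots,z_n$. The key elementary observation is that $V(z_1,\dots,z_n)=0$ if and only if some factor $z_i-z_j$ vanishes, i.e., if and only if two of the coordinates coincide. Hence a tuple lies in $\Cun$ precisely when $V\ne0$, so that $\Cun=\C^n\setminus V^{-1}(0)$. This is no accident: $V^2$ is, up to sign, the discriminant of the monic polynomial with roots $z_1,\dots,z_n$, which parallels the discriminant definition of $N(V)$ and the proof of Lemma~\ref{lem:props N(V)}.

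Openness is then immediate: the set $V^{-1}(0)$ is Zariski-closed, being the zero locus of a polynomial, so its complement $\Cun$ is Zariski-open and in particular open in the standard topology. For density I would argue that $V^{-1}(0)$ is a proper closed subset that is nowhere dense. One clean route is irreducibility: $\C^n$ is irreducible as an affine variety (equivalently, $\C[z_1,\dots,z_n]$ is an integral domain), so the vanishing locus of the nonzero polynomial $V$ is a proper Zariski-closed subset, and every nonempty Zariski-open subset of an irreducible space is dense. An equally elementary alternative is to observe that $V^{-1}(0)$ is a complex hypersurface, hence of real codimension $2$, so it has empty interior and Lebesgue measure zero in $\C^n\cong\R^{2n}$, whence its complement is dense.

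Finally, since $\Cun$ is an open subset of the complex manifold $\C^n$, it inherits the structure of a complex manifold of the same dimension, so it is an open complex submanifold of complex dimension $n$. The only point requiring genuine care is the density claim; everything else is purely formal. I therefore expect the density argument to be the sole substantive step, and I would present it via irreducibility of $\C^n$ (or, equivalently, the measure-zero argument) so that the statement is self-contained and mirrors the treatment of $N(V)$ in Lemma~\ref{lem:props N(V)}.
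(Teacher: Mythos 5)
Your proof is correct, but it routes the two claims differently from the paper. For openness the difference is cosmetic: the paper writes $\Cun=\bigcap_{i\ne j}D(m_{ij})$ with $m_{ij}(z_1,\dots,z_n)=z_i-z_j$, a finite intersection of algebraic open sets, whereas you take the complement of the zero set of the single Vandermonde polynomial $V=\prod_{i<j}(z_i-z_j)$; since $\bigcap_{i\ne j}D(m_{ij})=D(V)$, these are the same observation packaged differently (your version has the bonus of making the link to the discriminant definition of $N(V)$ in Lemma~\ref{lem:props N(V)} explicit). The genuine divergence is the density step. The paper argues directly and elementarily: any tuple with a repeated entry becomes a tuple of distinct entries after an arbitrarily small perturbation, so $\Cun$ is dense in the standard topology. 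You instead invoke irreducibility of $\C^n$, which by itself yields density only in the \emph{Zariski} topology; to conclude density in the Euclidean topology --- which is what the lemma needs and what the paper's perturbation argument delivers --- you still need to know that a proper algebraic subset of $\C^n$ has empty Euclidean interior. That is precisely your ``alternative'' codimension/measure-zero argument, so your proof as a whole is complete, but note that your two routes are not interchangeable: the measure-zero (or perturbation) argument is the one doing the Euclidean work, and the irreducibility argument is a Zariski statement layered on top of it. What your approach buys is uniformity with the discriminant treatment of $N(V)$ and a statement that generalizes readily; what the paper's buys is self-containedness --- one sentence, no algebraic geometry and no measure theory.
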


\begin{proof}
 For $i\ne j$, define the function $m_{ij}\from \C^n\to \C$, $(z_1,\dots,z_n)\mapsto z_i-z_j$, and write $D(m_{ij})=\set{(z_1,\dots,z_n)\in \C^n}{m_{ij}(z_1,\dots,z_n)\ne0}$. Then $\Cun=\cap_{i\ne j} D(m_{ij})$ expresses the space as a~finite intersection of (algebraic) open subsets, hence is open in $\C^n$ and so a submanifold. It~is also dense; given a tuple with repeated numbers, these become unequal by an arbitrarily small perturbation.
\end{proof}

We can now proceed to sets of $n$ distinct complex numbers, which is the unordered variant of the ordered tuple case as just discussed. The idea is to view the unordered sets as a quotient of the tuple space. This quotient is specified by a permutation action; upon identifying rearranged tuples, clearly the ordering vanishes. The action is the usual permutation of slots, i.e., for $\sigma\in S_n$ and $(z_1,\dots,z_n)\in \Cun$ we have
\begin{equation*}
 \sigma \cdot (z_1,\dots,z_n) = (z_{\sigma^{-1}(1)},\dots,z_{\sigma^{-1}(n)}).
\end{equation*}
This action is smooth as it is the restriction of the smooth permutation action on $\C^n$ to an invariant submanifold. Taking the quotient results in the space of subsets of $\C$ with $n$ distinct complex numbers, which we write as
\begin{equation*}
 \binom{\C}{n}:=\Cun/S_n=\set{S\subset \C}{|S|=n}.
\end{equation*}
Again, we borrow our notation from combinatorics. We write $q$ for the quotient map $\Cun \to \binom{\C}{n}$. Clearly, the fiber of $q$ above a certain set in $\binom{\C}{n}$ consists of all the possible orderings of its elements. Via $q$ one obtains a unique manifold structure on $\binom{\C}{n}$, and in this way $q$ is principal bundle.
\begin{Proposition}
 There is a unique manifold structure on $\binom{\C}{n}$ such that the quotient map $q\from \Cun \to \binom{\C}{n}$ is a surjective submersion. The permutation action on $\Cun$ thus induces the principal bundle
 \begin{equation*}
 \begin{tikzcd}
 S_n \ar{r} & \Cun \ar{r}{q} & \binom{\C}{n}.
 \end{tikzcd}
 \end{equation*}
 Moreover, this bundle is non-trivial for $n>1$.
\end{Proposition}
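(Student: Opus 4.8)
The plan is to obtain both the manifold structure and the principal bundle simultaneously from the standard quotient manifold theorem (see \cite{Lee2012IntroductionManifolds}): if a Lie group $G$ acts smoothly, freely, and properly on a manifold $P$, then the orbit space $P/G$ carries a unique smooth structure for which the quotient map is a surjective submersion, and this map is then a principal $G$-bundle. Thus the first two assertions will follow at once as soon as I verify that the $S_n$-action on $\Cun$ is free and proper.

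Freeness is exactly where distinctness of the coordinates is used. If $\sigma\cdot(z_1,\dots,z_n)=(z_1,\dots,z_n)$, then $z_{\sigma^{-1}(i)}=z_i$ for every $i$, and since the $z_i$ are pairwise distinct this forces $\sigma^{-1}(i)=i$ for all $i$, i.e.\ $\sigma=\id$. Properness is automatic: $S_n$ is finite, hence compact, and any continuous action of a compact group on a Hausdorff manifold is proper. Applying the theorem then yields the unique smooth structure on $\binom{\C}{n}$ together with the principal $S_n$-bundle $S_n\to\Cun\xrightarrow{q}\binom{\C}{n}$, establishing the first two claims.

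The genuinely substantive part is non-triviality for $n>1$, and the key input is that $\Cun$ is connected. I would argue this by a codimension count in the spirit of Lemma~\ref{lem:props N(V)}: the complement of $\Cun$ in $\C^n$ is the ``fat diagonal'' $\bigcup_{i<j}\{z_i=z_j\}$, a finite union of complex hyperplanes, each of complex codimension $1$ and hence real codimension $2$. Removing a closed subset of real codimension at least $2$ from a connected manifold leaves it connected, so $\Cun$ is connected. Now if the bundle were trivial, then $\Cun\cong\binom{\C}{n}\times S_n$ as spaces; since $S_n$ is discrete of order $n!$, this product has at least $n!$ connected components, whereas $\Cun$ has exactly one. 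For $n>1$ we have $n!\ge 2$, a contradiction, so no global trivialization exists. The first two parts are a routine invocation of the quotient manifold theorem, so the main obstacle is this non-triviality statement, and it rests entirely on establishing connectedness of $\Cun$ via the real-codimension-$2$ argument, after which a component count forbids a global section.
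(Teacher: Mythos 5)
Your proposal is correct and takes essentially the same route as the paper: verify that the $S_n$-action is free and proper to invoke the quotient manifold theorem, then derive non-triviality from the fact that a global trivialization would make the connected space $\Cun$ homeomorphic to the disconnected space $\binom{\C}{n}\times S_n$. The only difference is one of detail—you explicitly justify freeness, properness, and the connectedness of $\Cun$ (via the real-codimension-$2$ count), all of which the paper asserts without proof.
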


\begin{proof}
 Clearly the $S_n$-action on $\Cun$ is free and proper, hence the quotient map defines a principal bundle. Assume it is trivial for $n>1$ for contradiction. This would imply that $\Cun$ and~$\binom{\C}{n}\times S_n$ are homeomorphic. However, as $\Cun$ and $\binom{\C}{n}$ are path-connected but $S_n$ is not, this cannot be.
\end{proof}

The non-triviality of the bundle is related to the topology of $\C$ not being compatible with a~total ordering. To expand on this, observe that for any manifold $M$ one can define the space of distinct tuples $M^{\underline{n}}$. This similarly has a $S_n$-action defined on it, and one obtains a principal bundle over the space $\binom{M}{n}$ of unordered subsets of $n$ elements. If $M$ has an order topology given by a total ordering, such as $\R$, this bundle is trivial. Indeed, given any element $S \in \binom{M}{n}$, then there is a standard way to list the elements of $S$, e.g., from lowest to highest. This induces triviality as follows.
\begin{Proposition} \label{prop:orderable -> trivial}
 If a manifold $M$ is orderable, then the principal $S_n$-bundle $M^{\underline{n}} \to \binom{M}{n}$ is trivial for all $n$.
\end{Proposition}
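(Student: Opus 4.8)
The plan is to trivialize the bundle by producing a global smooth section, since a principal $G$-bundle is trivial exactly when it admits one; given such a section $s$, the map $\binom{M}{n}\times S_n\to M^{\underline n}$, $(S,\sigma)\mapsto\sigma\cdot s(S)$, is an $S_n$-equivariant bundle isomorphism over $\binom{M}{n}$, which is the desired trivialization. The section I would use is the sorting map anticipated in the text: it sends an $n$-element set to the increasing enumeration of its elements, which is well defined precisely because $M$ is totally ordered and the sets are finite.

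To realize this cleanly, I would introduce the set of strictly increasing tuples
\[
 M^{\underline n}_{<}=\set{(m_1,\dots,m_n)\in M^{\underline n}}{m_1<\dots<m_n}
\]
and show that the restriction $q|_{M^{\underline n}_{<}}\colon M^{\underline n}_{<}\to\binom{M}{n}$ is a homeomorphism; its inverse is then the sought section $s$. Bijectivity is immediate from totality of the order: each set has a unique increasing enumeration. For the topological part, note first that $q$ is open, since for the quotient by the $S_n$-action one has $q^{-1}(q(U))=\bigcup_{\sigma\in S_n}\sigma U$ open whenever $U$ is open. Consequently $q|_{M^{\underline n}_{<}}$ is continuous as a restriction and open once we know $M^{\underline n}_{<}$ is open in $M^{\underline n}$: an open subset of $M^{\underline n}_{<}$ is then open in $M^{\underline n}$, and $q$ carries it to an open set. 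A continuous open bijection is a homeomorphism, giving the section $s=\big(q|_{M^{\underline n}_{<}}\big)^{-1}$. Finally, since the $S_n$-action is free and proper, $q$ is a covering map, hence a local diffeomorphism; a continuous section of a local diffeomorphism with discrete fibers lands locally in a single sheet and therefore agrees locally with a smooth local inverse of $q$, so $s$ is smooth and the trivialization above is a diffeomorphism.

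The one genuine point requiring the orderability hypothesis, and the step I expect to be the main obstacle, is the openness of $M^{\underline n}_{<}$. This reduces to the standard fact for linearly ordered topological spaces that $\set{(a,b)\in M\times M}{a<b}$ is open in $M\times M$: given $a<b$, if some $c$ satisfies $a<c<b$ then $\set{x\in M}{x<c}\times\set{y\in M}{y>c}$ is an open box inside the relation, while if $(a,b)$ is a jump then $\set{x\in M}{x<b}\times\set{y\in M}{y>a}$ works, since there $x<b$ forces $x\le a$ and $y>a$ forces $y\ge b$, whence $x<y$. Pulling these open relations back under the coordinate-pair maps $M^{\underline n}\to M\times M$, $(m_1,\dots,m_n)\mapsto(m_i,m_{i+1})$, and intersecting over $i$, shows that $M^{\underline n}_{<}$ is open. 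This is exactly where the order topology of $M$ enters; for $M=\R$ it specializes to the familiar continuity of the order statistics, and the failure of such a compatible order for $\C$ is precisely what underlies the non-triviality established in the complex case.
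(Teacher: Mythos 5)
Your proposal is correct and takes essentially the same route as the paper: both trivialize the bundle by exhibiting the global section $\binom{M}{n}\to M^{\underline{n}}$ that sends a set to its increasing enumeration, which is well defined by totality of the order. The only difference is that you spell out the topological details (openness of the strict-order relation including the jump case, openness of $q$, and smoothness of the section via the covering property) that the paper's proof compresses into the single assertion that the ordering map is continuous by the compatibility of the order with the topology.
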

\begin{proof}
 Let $\leq$ be a total ordering on $M$ compatible with the topology. Define an ordering map $o \from \binom{M}{n} \to M^{\underline{n}}$ by the condition that
 \begin{equation*}
 o(\{m_1,\dots,m_n\})=(m_1,\dots,m_n),\qquad \text{where}\quad m_1<\dots <m_n.
 \end{equation*}
 As all the $m_i$ are distinct this is well-defined, and by assumption on the topology of~$M$ also continuous. It follows that $o$ is a global section of the principal bundle, which is thus trivial.
\end{proof}

\subsection*{Acknowledgements}
The authors thank the anonymous referees whose careful remarks contributed to the quality of the paper.

\pdfbookmark[1]{References}{ref}
\LastPageEnding

\end{document}